\documentclass[11pt]{article}

\usepackage[margin=1in]{geometry}
\usepackage{amsmath,amssymb,amsthm,mathtools}
\usepackage{bm}
\usepackage{enumitem}
\usepackage{float}
\usepackage{hyperref}
\usepackage{natbib}
\usepackage{xcolor}

\newtheorem{theorem}{Theorem}
\newtheorem{proposition}{Proposition}
\newtheorem{lemma}{Lemma}
\newtheorem{corollary}{Corollary}
\theoremstyle{definition}
\newtheorem{assumption}{Assumption}
\newtheorem{remark}{Remark}

\newcommand{\E}{\mathbb{E}}
\newcommand{\PP}{\mathbb{P}}
\newcommand{\R}{\mathbb{R}}
\newcommand{\Var}{\operatorname{Var}}

\newcommand{\one}{\mathbf{1}}
\newcommand{\cS}{\mathcal{S}}
\newcommand{\cH}{\mathcal{H}}
\newcommand{\cA}{\mathcal{A}}
\DeclareMathOperator*{\argmin}{arg\,min}

\newcommand{\cC}{\mathcal{C}}

\newcommand{\xfit}{\hat g^{(-\ell)}}

\title{\textbf{Debiased and Simultaneous Inference for\\
Heterogeneous Factorial Effect Modifiers\\
via Residualized Walsh--Hadamard Scores}}

\author{Tomoshige Nakamura \and Ryo Emoto%
\thanks{Juntendo University, Kyoto University}
}

\date{\today}

\begin{document}
\maketitle

\begin{abstract}
In a factorial randomized experiment with $K$ binary treatment components, scientific interest
often lies not in a single average effect but in which baseline covariates modify which
component-level main effects and interactions. We study the family of conditional
Walsh--Hadamard contrasts $\{\tau_S(x):\emptyset\neq S\subseteq[K]\}$ and the associated grid of
\emph{effect-modifier cells} $(j,S)$, indexed by covariates $j\in[p]$ and contrasts $S$ on the
Boolean subset lattice, through the coefficients $\theta^\ast_{jS}$ of the population linear
projection of $\tau_S$ onto standardized covariates. Building on a baseline-robust direct-score
identification of $\tau_S(x)$, we develop frequentist inference over all $p(2^K-1)$ cells. First,
we prove a debiased central limit theorem whose distinctive feature is \emph{global
insensitivity}: the residualization baseline enters the cross-fitted score through an exactly
conditionally mean-zero perturbation, so the influence-function remainder requires only foldwise
weighted $L_2$ consistency of the baseline estimator, with no prescribed polynomial rate and no
nuisance product-rate condition; a separate, explicit penalty-scale condition governs the
baseline's effect on the score Lasso. Second, under explicitly stated uniform nuisance, baseline,
studentization, and influence-array conditions---all derived from bounded sparse-regression
primitives---we establish a high-dimensional Gaussian approximation over the full grid,
calibrated by a Walsh-spectral multiplier bootstrap. This yields simultaneous confidence bands
and, via a Romano--Wolf step-down, \emph{selection of modified cells with strong familywise error
control}, without any effect-heredity assumption. The factorial design is fixed in the
asymptotic analysis, while the covariate dimension may grow; thus all $2^K-1$ contrasts are
covered jointly as the number of covariate-by-contrast cells increases. 
\end{abstract}

\noindent\textbf{Keywords:} factorial experiments; Walsh--Hadamard contrasts; effect
modification; debiased Lasso; high-dimensional central limit theorem; multiplier bootstrap;
familywise error rate; step-down testing.

\section{Introduction}\label{sec:intro}

Factorial randomized experiments vary several treatment components at once and can identify
component-level main effects and interactions efficiently.  In many applications, however, the
scientific question is not only which factorial effects are nonzero on average, but also which
pretreatment covariates modify which effects.  With $K$ binary components and $p$ covariate
features, this question generates a grid of $p(2^K-1)$ covariate-by-contrast cells.  We study the
coefficient $\theta_{jS}^\ast$ in the population linear projection of the heterogeneous factorial
effect $\tau_S(X)$ on standardized covariates, for every covariate $j\in[p]$ and every nonempty
contrast $S\subseteq[K]$.  The inferential objectives are a pointwise interval for a
prespecified cell, a simultaneous confidence band over the full grid, and a selection rule that
controls familywise error when the grid is searched for effect modification.

The simultaneous problem is essential rather than cosmetic.  Investigators commonly examine
several baseline characteristics across factorial main effects and interactions before deciding
which cells to emphasize.  Pointwise intervals do not protect this data-dependent search: even
under a global modifier null, the probability of at least one chance finding increases with the
number of cells.  A simultaneous band instead provides one joint uncertainty statement for the
entire covariate-by-contrast array, while a procedure with strong familywise error rate (FWER)
control permits confirmatory selection under any configuration of true and false modifier nulls.  The challenge is to obtain these
guarantees when the array is high-dimensional, the effect-modifier regressions are sparse, and a
flexible outcome baseline is estimated from the same experiment.

Our starting point is a residualized inverse-probability weighted Walsh--Hadamard score.  A
direct plug-in analysis would estimate $2^K$ arm-specific outcome regressions and then contrast
them, leaving each regression with an effective sample size of roughly $n/2^K$ under balanced
assignment.  The proposed score instead combines every observation for every contrast.  Its key
structural property is exact: under known randomized assignment, the zero-sum Walsh identity
makes the conditional mean of the score equal to $\tau_S(X)$ after subtracting \emph{any}
treatment-invariant baseline.  The baseline therefore does not alter identification, although
an accurate baseline can reduce variance substantially.  We estimate the baseline by
cross-fitting, regress the resulting contrast-specific scores on a common covariate design, and
de-bias the Lasso coefficients using nodewise residualization computed once for that design.

The same randomization identity also drives the asymptotic theory.  Conditional on the training
sample for a fold, the perturbation created by replacing a limiting baseline with its
cross-fitted estimate is exactly mean zero for all contrasts and covariates.  This \emph{global
insensitivity} differs from an ordinary local orthogonality expansion.  The influence-function
remainder requires foldwise weighted consistency of the baseline estimator, but no prescribed
polynomial convergence rate and no product-rate condition; a separate, explicit condition keeps
the baseline-induced empirical score below the Lasso penalty scale.  After pointwise
de-biasing, we strengthen the representation uniformly over all $p(2^K-1)$ cells, apply a
high-dimensional Gaussian approximation, and estimate the law of the maximum with a multiplier
bootstrap.  The resulting band is jointly valid, and a Romano--Wolf step-down based on the same
multiplier draws controls the strong familywise error rate without imposing effect heredity on
the Boolean lattice.

The paper makes four contributions.  First, it gives a direct-score identification and
estimation strategy for every heterogeneous factorial contrast that uses the full randomized
sample and separates the validity of residualization from its efficiency.  Second, it
establishes pointwise asymptotic linearity, feasible sandwich inference, and the foldwise global
centering argument underlying the weak baseline-rate requirement.  Third, it develops uniform
linearization, Gaussian and multiplier-bootstrap approximation, simultaneous confidence bands,
and heredity-free strong-FWER selection over the full modifier grid.  The factorial dimension is
fixed in the asymptotic analysis, whereas $p=p_n$ and hence the number of cells may grow; the
supplementary material derives the high-level uniform conditions from bounded sparse-regression
primitives.  Fourth, it provides an implementable algorithm in which the common nodewise fit and
multiplier draws are reused across contrasts and step-down iterations.

The simulations clarify both the scope and the finite-sample content of these results.  Across
primary designs with $300$ to $3150$ simultaneous cells, coverage and Romano--Wolf FWER remain
close to their nominal levels as the factorial dimension, covariate dimension, and sample size
vary.  A cross-fitted quadratic Lasso baseline approaches oracle precision as $n$ increases and,
at $n=1000$, produces bands within about $2\%$ of the oracle width when prognostic variation is
present.  Misspecified treatment-invariant baselines remain approximately calibrated but can
lose substantial power, illustrating the distinction between validity and efficiency.  By
contrast, the diagnostic that fits the quadratic baseline without cross-fitting has coverage
$0.902$ at $n=200$ despite its apparently shorter intervals, showing that sample splitting is
substantively important.  The multiplier calibration is only modestly shorter and more powerful
than Bonferroni--Holm in this particular Gaussian design, so our empirical claim concerns
calibration and baseline adaptivity rather than a universal numerical dominance over
dependence-agnostic procedures.

The scope of the paper is deliberately specific.  The target $\theta_{jS}^\ast$ is a projection
coefficient relative to the chosen feature dictionary and covariate distribution; it is not in
general a derivative or a fully nonparametric conditional effect.  The theory assumes known
randomized assignment and fixed $K$, with growth driven by the covariate dimension.  Finally, we
claim strong familywise error control, not false discovery rate control.  Step-up FDR procedures
probe farther into the tails than the additive Gaussian-approximation bound used here can
justify; the required moderate-deviation analysis is left to a companion paper.

\subsection{Related work}\label{sec:related-work}

Causal inference for factorial treatments has developed along several complementary lines.
\citet{DasguptaPillaiRubin2015} formulated average factorial effects in the potential-outcomes
framework, and \citet{ZhaoDing2022} established design-based properties of factor-based
regression and robust Wald inference for such effects.  \citet{EgamiImai2019} developed the
average marginal interaction effect and regularized selection through a penalized ANOVA with
zero-sum constraints.  These methods concern population-level treatment-factor effects and
interactions rather than a full array of baseline-covariate modifiers.  More directly focused on
heterogeneity, \citet{Goplerud2025} used a Bayesian mixture of regularized logistic regressions
to discover groups with different effect patterns in high-dimensional factorial treatments;
the target and inferential guarantee differ from simultaneous frequentist coverage and
familywise-valid selection of covariate-by-contrast projection coefficients.

The factorial literature also contains important multiplicity corrections, but for different
objects.  \citet{KimelEtAl2008} developed FDR methods for screening active effects in
unreplicated fractional factorial experiments, and \citet{LinEtAl2016} derived simultaneous
inference for a prespecified set of treatment comparisons in $2\times2$ factorial survival
trials.  Conversely, \citet{SemenovaEtAl2023} developed simultaneous inference for a
high-dimensional conditional average treatment effect (CATE) coefficient vector outside the
factorial setting.  Our contribution lies
at the intersection: simultaneous bands and strong-FWER selection for the growing
$p(2^K-1)$ grid that crosses baseline covariates with all nonempty factorial contrasts.  This is
a narrower claim than saying that simultaneous inference for factorial designs is unavailable
in general.

Methodologically, transformed-outcome and orthogonal CATE estimators
\citep{Tian2014,NieWager2021,Kennedy2023,FosterSyrgkanis2023} provide the binary-treatment
analogue of the residualized score.  Known factorial randomization lets us specialize this
construction to the complete Walsh--Hadamard contrast family and yields exact baseline
insensitivity, rather than a nuisance product-rate remainder.  Desparsified Lasso inference
\citep{ZhangZhang2014,vandeGeer2014,JavanmardMontanari2014} supplies the coordinatewise
de-biasing step.  High-dimensional Gaussian and multiplier-bootstrap results
\citep{CCK2013,CCK2017,CCKComparison2013} supply approximation tools for the maximum, and the
step-down construction follows \citet{RomanoWolf2005}.  The present work combines these tools
with the shared factorial score and verifies the additional uniform remainder and
studentization conditions needed for the covariate-by-contrast lattice.

\paragraph{Organization.}
Section~2 defines the factorial contrasts, establishes direct-score identification and the
projection target, and constructs the cross-fitted scores.  Section~3 develops the foldwise
centering lemma, debiased pointwise inference, and consistent sandwich variance estimation.
Section~4 proves uniform linearization and Gaussian approximation, establishes multiplier-
bootstrap simultaneous bands and Romano--Wolf strong-FWER control, and gives a practical
implementation algorithm; the bounded-envelope primitive verification is provided in the
supplementary material.  Section~5 studies finite-sample calibration, dimensional and
sample-size scaling, baseline quality, and the role of cross-fitting.  Section~6 discusses the
scope of the results and extensions to clustered or observational designs, structure sharing,
nonlinear modification, and heavier-tailed settings.

\section{Setup: factorial RCT, direct score, projection target}\label{sec:setup}

\subsection{Observed data, factorial contrasts, and assumptions}
For each $n$, we observe an i.i.d.\ sample $\{(X_i,A_i,Y_i)\}_{i=1}^n$, where
$X_i\in\mathcal X\subseteq\R^p$ is a vector of pretreatment covariates,
$A_i=(A_{i1},\dots,A_{iK})\in\cA:=\{-1,+1\}^K$ is the assigned treatment combination, and
$Y_i\in\R$ is the observed outcome. Let $Y_i(a)$ denote the potential outcome under
$a\in\cA$, let $P_X$ denote the marginal law of $X$, and define the conditional potential mean
$\mu_a(x):=\E\{Y(a)\mid X=x\}$.

\begin{assumption}[Factorial RCT]\label{as:rct}
\emph{(Consistency)} $Y_i=Y_i(A_i)$; \emph{(known randomized assignment)}
$\PP\{A_i=a\mid X_i,\{Y_i(b)\}_{b\in\cA}\}=p_a$ with known $p_a$ and
$\sum_{a\in\cA}p_a=1$; \emph{(positivity)} $p_a>0$ for every $a\in\cA$;
\emph{(moments)} $\E\{|Y(a)|\mid X=x\}<\infty$ for every $a\in\cA$ and $P_X$-almost every
$x$.
\end{assumption}

For $S\subseteq[K]:=\{1,\dots,K\}$, define the Walsh--Hadamard basis function
$\phi_S(a):=\prod_{k\in S}a_k$, with $\phi_\emptyset\equiv1$. Let
$\cS:=\{S\subseteq[K]:S\neq\emptyset\}$ and $M:=|\cS|=2^K-1$. For $S\in\cS$, the
\emph{heterogeneous factorial effect} at $x$ is
\begin{equation}\label{eq:tau}
\tau_S(x):=2^{-(K-1)}\sum_{a\in\cA}\phi_S(a)\,\mu_a(x).
\end{equation}
It is a marginalized conditional main effect when $|S|=1$, a conditional
difference-in-differences interaction when $|S|=2$, and a higher-order component interaction
when $|S|>2$. The Walsh basis is orthonormal under the uniform measure on $\cA$ and satisfies
the zero-sum property
\begin{equation}\label{eq:zerosum}
\sum_{a\in\cA}\phi_S(a)=2^K\,\one\{S=\emptyset\}.
\end{equation}

\subsection{Residualized Walsh--Hadamard direct score}
A \emph{baseline} is any measurable function $g:\mathcal X\to\R$ of the pretreatment
covariates only. For any baseline $g$, define
\begin{equation}\label{eq:score}
\psi^g_S(Y,A,X):=2^{-(K-1)}\{Y-g(X)\}\,\frac{\phi_S(A)}{p_A}.
\end{equation}
For later use, write $q_S(a):=2^{-(K-1)}\phi_S(a)/p_a$, so that
$\psi^g_S=\{Y-g(X)\}q_S(A)$.

\begin{theorem}[Direct identification]\label{thm:ident}
Under Assumption~\ref{as:rct}, for any baseline $g$ and any $S\in\cS$,
$\E\{\psi^g_S\mid X=x\}=\tau_S(x)$.
\end{theorem}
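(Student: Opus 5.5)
The plan is to condition on $X=x$ and expand the score over the finitely many treatment arms. Using consistency, $Y=\sum_{a\in\cA}\one\{A=a\}Y(a)$, so that
\[
\psi^g_S=\sum_{a\in\cA}\one\{A=a\}\,q_S(a)\,\{Y(a)-g(X)\}.
\]
Because $\cA$ is finite, every term is integrable by the moment condition in Assumption~\ref{as:rct}. The factor $g(X)$ is finite and $X$-measurable, and $q_S(a)$ is a bounded constant by positivity. Conditional expectation is therefore linear over the sum, and no integrability issue arises beyond these observations.

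For each arm $a$, I would compute $\E[\one\{A=a\}Y(a)\mid X]$ by iterating expectations over $(X,\{Y(b)\}_b)$. Known randomization gives $\E[\one\{A=a\}\mid X,\{Y(b)\}]=p_a$, so this term equals $p_a\mu_a(X)$. In the same way, $\E[\one\{A=a\}g(X)\mid X]=p_a g(X)$, because $A$ is independent of $X$ with $\PP(A=a\mid X)=p_a$.

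Multiplying by $q_S(a)=2^{-(K-1)}\phi_S(a)/p_a$ cancels the $p_a$ factors. This yields
\[
\E\{\psi^g_S\mid X=x\}=2^{-(K-1)}\sum_{a}\phi_S(a)\mu_a(x)-2^{-(K-1)}g(x)\sum_a\phi_S(a).
\]
The first term is $\tau_S(x)$ by \eqref{eq:tau}. The second term vanishes by the zero-sum property \eqref{eq:zerosum}, since $S\neq\emptyset$ for $S\in\cS$.

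The only step needing care is the justification for pulling $p_a$ out when the arm indicator is multiplied by the potential outcome. This relies on the randomization assumption being stated conditionally on all potential outcomes, not merely on $X$. I would make the tower-property step explicit, and I would note that $\E|Y(a)|<\infty$ conditionally almost surely makes each conditional expectation well defined. The argument then holds for $P_X$-almost every $x$, with no requirement on $g$ beyond measurability and finiteness.
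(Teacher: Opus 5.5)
Your proposal is correct and follows the same route as the paper's proof: condition on $X=x$, use known randomization so that the $p_a$ from the assignment probability cancels the inverse weight, recover $2^{K-1}\tau_S(x)$ from the outcome term, and eliminate the baseline term via the zero-sum identity \eqref{eq:zerosum}. You are more explicit than the paper about the tower step through the potential outcomes. One wording point to tighten: Assumption~\ref{as:rct} gives only conditional integrability of $Y(a)$, and none at all for $g(X)$, so your early claim that ``every term is integrable'' should be replaced by the generalized conditional-expectation reading you invoke at the end.
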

\begin{proof}
Conditioning on $X=x$ and using $\PP(A=a\mid X=x)=p_a$,
$\E\{Y\phi_S(A)/p_A\mid x\}=\sum_a\phi_S(a)\mu_a(x)=2^{K-1}\tau_S(x)$, while
$\E\{\phi_S(A)/p_A\mid x\}=\sum_a\phi_S(a)=0$ for $S\neq\emptyset$ by \eqref{eq:zerosum};
the baseline term vanishes for \emph{any} $g$.
\end{proof}

The baseline does not affect identification; it only changes the conditional variance of
\eqref{eq:score}, whose minimizer is explicit.

\begin{proposition}[Variance-optimal baseline]\label{prop:gopt}
Suppose Assumption~\ref{as:rct} holds and, for every $a\in\cA$,
$\E\{Y(a)^2\mid X=x\}<\infty$ for $P_X$-almost every $x$. Fix $S\in\cS$ and write
$v_a(x):=\Var\{Y(a)\mid X=x\}$. Then, for every baseline $g$ and
$P_X$-almost every $x$,
\begin{equation}\label{eq:psi-second-moment}
\E\{(\psi^g_S)^2\mid X=x\}
=2^{-2(K-1)}\sum_{a\in\cA}p_a^{-1}\bigl[v_a(x)+\{\mu_a(x)-g(x)\}^2\bigr],
\end{equation}
\end{proposition}
\begin{proof}
By Assumption~\ref{as:rct}, conditional on $X=x$ the assignment $A$ is independent of
$\{Y(a)\}_{a\in\cA}$ with $\PP(A=a\mid X=x)=p_a$. Hence
$\E\{(\psi^g_S)^2\mid X=x\}=\sum_{a\in\cA}p_a\,q_S(a)^2\,\E[\{Y(a)-g(x)\}^2\mid X=x]$; since
$\phi_S(a)^2=1$ we have $p_a\,q_S(a)^2=2^{-2(K-1)}p_a^{-1}$, and the conditional
bias--variance decomposition
$\E[\{Y(a)-g(x)\}^2\mid X=x]=v_a(x)+\{\mu_a(x)-g(x)\}^2$ gives
\eqref{eq:psi-second-moment}.
\end{proof}

The right-hand side of \eqref{eq:psi-second-moment} does not depend on $S$. Moreover,
$\E(\psi^g_S\mid X=x)=\tau_S(x)$ is $g$-free by Theorem~\ref{thm:ident}, so the proposition
immediately implies that, for $P_X$-almost every $x$, $\Var(\psi^g_S\mid X=x)$ is minimized
over $g(x)\in\R$ at
\[
g_{\mathrm{opt}}(x)=\frac{\sum_{a\in\cA}p_a^{-1}\mu_a(x)}{\sum_{a\in\cA}p_a^{-1}}.
\]
In the balanced design $p_a\equiv2^{-K}$, this reduces to the uniform average
$2^{-K}\sum_{a\in\cA}\mu_a(x)$. Efficiency, not validity, is what the baseline buys; the
effect of $g_{\mathrm{opt}}$ on the asymptotic variance of the debiased estimator is recorded in
Remark~\ref{rem:bal} below.

\subsection{Projection target}
Let $Z(x)\in\R^p$ collect standardized non-intercept covariate features, so that
$\E\{Z(X)\}=0$ and $\E\{Z_j(X)^2\}=1$, and set $W(x):=(1,Z(x)^\top)^\top$. For each
$S\in\cS$, define the population linear projection
\begin{equation}\label{eq:projection-target}
\beta^\ast_S=(\alpha_S^\ast,\theta_S^{\ast\top})^\top
\in\argmin_{\beta\in\R^{p+1}}
\E\bigl[\{\tau_S(X)-W(X)^\top\beta\}^2\bigr].
\end{equation}
By Theorem~\ref{thm:ident}, the same coefficient minimizes
$\E[\{\psi^g_S-W(X)^\top\beta\}^2]$ for any baseline $g$ and is therefore $g$-free. We write
the coordinates of its non-intercept part as
$\theta^\ast_S=(\theta^\ast_{jS})_{j\in[p]}$. For a fixed contrast
$S$, the coordinate $\theta^\ast_{jS}$ is the coefficient on the standardized feature $Z_j$ in
the best linear approximation to $\tau_S(X)$. It therefore summarizes how the conditional
factorial effect for contrast $S$ varies linearly with feature $j$, after linearly accounting
for the other features in $Z$. The collection of these coefficients forms a $p\times M$ grid.
We call its index pair $(j,S)\in[p]\times\cS$ an \emph{effect-modifier cell} and, for each
cell, consider the null hypothesis
\[
H_{jS}:\theta^\ast_{jS}=0.
\]
Under this null, feature $j$ makes no linear contribution to the projected heterogeneity of
contrast $S$ after linear adjustment for the other included features. This does not rule out
nonlinear modification by feature $j$ or heterogeneity associated with other features. The
projection residual is $\varepsilon^g_S:=\psi^g_S-W(X)^\top\beta^\ast_S$, with
$\E\{W(X)\varepsilon^g_S\}=0$.

\subsection{Cross-fitted scores}
Split $[n]$ into folds $I_1,\dots,I_L$, and write $I_{-\ell}:=[n]\setminus I_\ell$. Using the
observations in $I_{-\ell}$, fit a baseline $\xfit$ and set, for $i\in I_\ell$,
\begin{equation}\label{eq:cfscore}
\hat\psi_{iS}=2^{-(K-1)}\{Y_i-\xfit(X_i)\}\,\frac{\phi_S(A_i)}{p_{A_i}} .
\end{equation}

\subsection{Notation for asymptotic analysis}
For a vector $u$, $\|u\|_1$, $\|u\|_2$, and $\|u\|_\infty$ are the $\ell_1$, Euclidean, and
maximum norms; $\|u\|_0$ is the number of nonzero entries. For a measurable $f$ and $q\ge1$,
$\|f\|_{L_q(P_X)}:=\{\E|f(X)|^q\}^{1/q}$ and
$\|f\|_\infty:=\sup_{x\in\mathcal X}|f(x)|$. For an array $a=(a_i)_{i=1}^n$,
$\|a\|_n:=(n^{-1}\sum_{i=1}^na_i^2)^{1/2}$ is the empirical $L_2$ norm. Write
$n_\ell:=|I_\ell|$, $\mathbb P_nf:=n^{-1}\sum_{i=1}^nf_i$, and
$\mathbb P_{n,\ell}f:=n_\ell^{-1}\sum_{i\in I_\ell}f_i$. All $o_p(\cdot)$ and $O_p(\cdot)$
statements are under the joint law of the sample, the fold partition (when random), and all
fitted nuisance functions, along a triangular array in which the covariate dimension
$p=p_n$ may grow. The number of treatment components $K$ and the allocation probabilities
$\{p_a:a\in\cA\}$ are fixed by the experimental design and do not vary with $n$. For positive
sequences, $a_n\lesssim b_n$ means $a_n\le Cb_n$ and
$a_n\asymp b_n$ means
$cb_n\le a_n\le Cb_n$ for constants $0<c\le C<\infty$ independent of $n$ and $p$ (and of
the target cell), but possibly depending on the fixed factorial design; any exception is stated
explicitly.

\section{Debiased estimation of effect-modifier coefficients}\label{sec:debias}

\subsection{Foldwise global centering of the baseline error}\label{sec:global}
Recall the assignment multiplier $q_S$ from Section~\ref{sec:setup}. Under
Assumption~\ref{as:rct},
\[
\kappa_2:=\E\{q_S(A)^2\mid X\}
=2^{-2(K-1)}\sum_{a\in\cA}p_a^{-1}.
\]
This is a finite constant under the fixed positive allocation probabilities in
Assumption~\ref{as:rct}, and it does not depend on $n$ or $S$, because
$\phi_S(a)^2=1$. Fix a nonrandom
baseline $g$; it may vary along the asymptotic sequence, but we suppress that index. For fold
$\ell$, write
\[
\Delta_{g,\ell}(x):=\hat g^{(-\ell)}(x)-g(x),
\qquad
R_{iS}:=\hat\psi_{iS}-\psi^{g}_{iS}
=-\Delta_{g,\ell}(X_i)q_S(A_i),\quad i\in I_\ell.
\]

For each fold, define
\[
\mathcal F_{-\ell}:=\sigma\{(X_i,A_i,Y_i):i\notin I_\ell\},
\qquad
\mathcal G_\ell:=\sigma\{\mathcal F_{-\ell},X_1,\ldots,X_n,I_1,\ldots,I_L\}.
\]

\begin{lemma}[Foldwise conditional centering]\label{lem:global}
Suppose Assumption~\ref{as:rct} holds.  For every fold $\ell$, assume
$\hat g^{(-\ell)}$ is $\mathcal F_{-\ell}$-measurable.  Conditional on
$\mathcal G_\ell$, the variables $\{R_{iS}:i\in I_\ell\}$ are independent and satisfy
\[
\E(R_{iS}\mid\mathcal G_\ell)=0,
\qquad
\E(R_{iS}^2\mid\mathcal G_\ell)=\kappa_2\Delta_{g,\ell}(X_i)^2,
\quad i\in I_\ell.
\]
\end{lemma}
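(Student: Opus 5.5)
The plan is to separate what is fixed by $\mathcal G_\ell$ from what remains random within fold $\ell$. Conditioning on $\mathcal G_\ell$ fixes the training data $\{(X_i,A_i,Y_i):i\notin I_\ell\}$, all covariates $X_1,\dots,X_n$, and the fold partition. Since $\hat g^{(-\ell)}$ is $\mathcal F_{-\ell}$-measurable and $\mathcal F_{-\ell}\subseteq\mathcal G_\ell$, each quantity $\Delta_{g,\ell}(X_i)$ with $i\in I_\ell$ is $\mathcal G_\ell$-measurable. The only randomness left in $R_{iS}=-\Delta_{g,\ell}(X_i)q_S(A_i)$ is therefore the assignment $A_i$. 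The whole proof reduces to identifying the conditional law of $\{A_i:i\in I_\ell\}$ given $\mathcal G_\ell$.

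\textbf{Conditional law of the assignments.} I will show that, given $\mathcal G_\ell$, the $A_i$ with $i\in I_\ell$ are independent and each satisfies $\PP(A_i=a\mid\mathcal G_\ell)=p_a$. The argument proceeds in three steps.

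First, when the partition is random, it is taken to be independent of the data; this is the standard convention for cross-fitting, and I will state it explicitly.

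Second, by the i.i.d.\ sampling, the within-fold block $\{(X_i,A_i):i\in I_\ell\}$ is independent of the training data $\mathcal F_{-\ell}$. Hence, given $\mathcal G_\ell$, the vector $(A_i)_{i\in I_\ell}$ has the law of $(A_i)_{i\in I_\ell}$ given $(X_i)_{i\in I_\ell}$. Across units this law factorizes, and by independence across $i$ each factor equals the law of $A_i$ given $X_i$ alone.

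Third, Assumption~\ref{as:rct} gives $\PP(A_i=a\mid X_i)=p_a$: integrate its conditional statement over the potential outcomes.

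Conditional independence of the $R_{iS}$ then follows, because each $R_{iS}$ is a $\mathcal G_\ell$-measurable multiple of a function of $A_i$.

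\textbf{The two moments.} The first moment is
\[
\E(R_{iS}\mid\mathcal G_\ell)=-\Delta_{g,\ell}(X_i)\sum_{a}p_aq_S(a)=-\Delta_{g,\ell}(X_i)\,2^{-(K-1)}\sum_a\phi_S(a)=0
\]
by the zero-sum property \eqref{eq:zerosum}, since $S\neq\emptyset$. For the second moment, $\phi_S(a)^2=1$ gives
\[
\E(R_{iS}^2\mid\mathcal G_\ell)=\Delta_{g,\ell}(X_i)^2\sum_a p_aq_S(a)^2=\kappa_2\Delta_{g,\ell}(X_i)^2 .
\]
Both computations require only that the sums be finite. This holds because $\mathcal A$ is finite and every $p_a>0$, so no integrability condition on $Y$ is needed.

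\textbf{Main obstacle.} The only delicate step is the measure-theoretic one: justifying that conditioning on the large $\sigma$-field $\mathcal G_\ell$ leaves each $A_i$ with law $p_a$ and independent across $i\in I_\ell$. I would handle it cleanly by verifying, for bounded measurable $h$, the identity
\[
\E\Bigl\{\prod_{i\in I_\ell}h_i(A_i)\,\one_G\Bigr\}=\prod_{i\in I_\ell}\Bigl\{\sum_ap_ah_i(a)\Bigr\}\PP(G)
\]
for $G$ in a $\pi$-system generating $\mathcal G_\ell$. Such sets are intersections of events in the training data, events in $(X_i)_{i\in I_\ell}$, and partition events. The identity follows from the i.i.d.\ structure together with $A_i\perp X_i$.
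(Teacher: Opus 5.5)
Your proposal is correct and follows the same route as the paper: fix $\Delta_{g,\ell}(X_i)$ by $\mathcal G_\ell$-measurability, identify the fold-$\ell$ assignments as conditionally independent with $\PP(A_i=a\mid\mathcal G_\ell)=p_a$, and then apply the zero-sum identity and $\phi_S^2=1$. Your $\pi$-system verification and your explicit requirement that a random partition be independent of the data simply spell out what the paper asserts in one line; the paper imposes that partition condition in (P1)/(U1) rather than in the lemma statement itself.
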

\begin{proof}
For $i\in I_\ell$, cross-fitting makes $\Delta_{g,\ell}$ measurable with respect to
$\mathcal F_{-\ell}$.  Under randomized assignment and independence across observations,
conditional on $\mathcal G_\ell$ the assignments $\{A_i:i\in I_\ell\}$ are independent with
$\PP(A_i=a\mid\mathcal G_\ell)=p_a$.  For nonempty $S$,
\[
\E\{q_S(A_i)\mid\mathcal G_\ell\}
=2^{-(K-1)}\sum_{a\in\cA}\phi_S(a)=0
\]
by \eqref{eq:zerosum}, whereas
$\E\{q_S(A_i)^2\mid\mathcal G_\ell\}=\kappa_2$.  Multiplying by the fixed quantity
$-\Delta_{g,\ell}(X_i)$ gives the two conditional moments.
\end{proof}

Lemma~\ref{lem:global} separates the two effects of estimating the baseline. Replacing $g$ by
the cross-fitted $\hat g^{(-\ell)}$ creates no conditional bias because $R_{iS}$ has conditional
mean zero within each validation fold. It does create additional random variation. To make this
variation negligible in the debiased estimator, condition (P3) below requires the weighted
$L_2$ size of $\Delta_{g,\ell}=\hat g^{(-\ell)}-g$ to vanish. This is a consistency requirement only: it
imposes neither a prescribed $n^{-1/4}$ rate nor a product-rate condition.

\subsection{Debiased estimator and pointwise inference}\label{sec:pointwise}
We study inference for one effect-modifier coefficient $\theta_{jS}^\ast$. The construction has
the standard debiased/desparsified Lasso architecture developed in
\citet{ZhangZhang2014}, \citet{vandeGeer2014}, and
\citet{JavanmardMontanari2014}: an initial regularized fit is followed by an approximate
orthogonalization and a one-step bias correction. We introduce the sample nodewise quantities
together with an explanation of their population counterparts. The population decompositions
given below then identify the leading term in the asymptotic distribution; hats distinguish
sample quantities from their population counterparts.

\paragraph{Initial sparse fit.}
Center the cross-fitted score and design to remove the projection intercept: with
$\bar\psi_S:=\mathbb P_n\hat\psi_S$ and $\bar Z:=\mathbb P_n Z$, let
$\tilde\psi_{iS}:=\hat\psi_{iS}-\bar\psi_S$ and $\tilde Z_i:=Z_i-\bar Z$. For each contrast
$S$, use the Lasso \citep{Tibshirani1996} to estimate its projection coefficients:
\[
\hat\theta_S\in\argmin_{\theta\in\R^p}
\left\{\frac{1}{2n}\|\tilde\psi_S-\tilde Z\theta\|_2^2
+\lambda\|\theta\|_1\right\},
\qquad \lambda\asymp\sqrt{\frac{\log(2pM)}{n}}.
\]
Here $M=|\cS|=2^K-1$ is the number of nonempty factorial contrasts, so $pM$ is the total
number of covariate-by-contrast cells.

\paragraph{Residualizing the target feature.}
At the population level, $\gamma_j$ is the coefficient vector in the least-squares linear
projection of $Z_j$ onto $Z_{-j}$, and $V_j=Z_j-Z_{-j}^{\top}\gamma_j$ is its residual.
The corresponding normalizing constant is
$\tau_j^2=\E(V_jZ_j)=\E(V_j^2)$, where the equality follows from the projection normal
equations. These population quantities are formally defined in
\eqref{eq:population-nodewise-decomposition} below; their sample counterparts are constructed
as follows.

Following the nodewise-Lasso implementation of \citet{vandeGeer2014}, isolate the contribution
of feature $j$ by regressing $\tilde Z_j$ on the remaining features:
\[
\hat\gamma_j\in\argmin_{\gamma\in\R^{p-1}}
\left\{\frac1n\|\tilde Z_j-\tilde Z_{-j}\gamma\|_2^2
+2\lambda_{j,n}\|\gamma\|_1\right\}.
\]
Its nodewise residual and target-coordinate normalizing constant are
\[
\hat V_{ij}:=\tilde Z_{ij}-\tilde Z_{i,-j}^\top\hat\gamma_j,
\qquad
\hat\tau_j^2:=\mathbb P_n(\hat V_j\tilde Z_j).
\]
Here $\hat V_j$ is the variation in the target feature that remains after removing its sparse
linear prediction from the other features. The scalar $\hat\tau_j^2$ is the empirical loading
of this residualized direction on $\tilde Z_j$. Dividing by $\hat\tau_j^2$ makes that loading
equal to one,
\[
\frac{\mathbb P_n(\hat V_j\tilde Z_j)}{\hat\tau_j^2}=1,
\]
and thereby calibrates the one-step correction to the target coordinate. This is not a
unit-variance standardization: because the nodewise Lasso gives only approximate sample
orthogonality, $\hat\tau_j^2$ need not equal $\mathbb P_n(\hat V_j^2)$ in finite samples.
The same nodewise regression is reused for every contrast $S$, because the design $Z$ does not
depend on the contrast. The Karush--Kuhn--Tucker conditions for this nodewise Lasso
\citep[Section~2.1.1]{vandeGeer2014} give the single property needed to control the Lasso
shrinkage bias:
\begin{equation}\label{eq:nodewise-kkt}
\left\|\mathbb P_n(\hat V_j\tilde Z_{-j})\right\|_\infty\le\lambda_{j,n}.
\end{equation}

\paragraph{Debiasing.}
The initial coefficient $\hat\theta_{jS}$ is corrected in the residualized direction $\hat V_j$:
\begin{equation}\label{eq:debiased}
\hat b_{jS}:=\hat\theta_{jS}
+\frac{\mathbb P_n\{\hat V_j(\tilde\psi_S-\tilde Z\hat\theta_S)\}}
{\hat\tau_j^2}.
\end{equation}
Thus $\hat\theta_S$ provides the initial sparse fit, while $\hat V_j$ supplies the correction
specific to feature $j$. This is the usual coordinatewise one-step correction in the
debiased-Lasso literature cited above.

\paragraph{Population quantities governing the limit.}
We define here the population quantities used by both the pointwise and simultaneous limits.
For every $S\in\cS$, write the projection coefficient
$\beta_S^\ast=(\alpha_S^\ast,\theta_S^{\ast\top})^\top$ from
\eqref{eq:projection-target}. Because $\E Z=0$, its intercept is
$\alpha_S^\ast=\E(\psi_S^g)=\E\{\tau_S(X)\}$. Define the oracle score-projection residual by
the population decomposition
\begin{equation}\label{eq:population-score-decomposition}
\psi_S^g=\alpha_S^\ast+Z^\top\theta_S^\ast+\varepsilon_S.
\end{equation}
For every $j\in[p]$, define the population nodewise projection and its residual by
\begin{equation}\label{eq:population-nodewise-decomposition}
\gamma_j:=\argmin_{\gamma\in\R^{p-1}}\E(Z_j-Z_{-j}^\top\gamma)^2,
\qquad
Z_j=Z_{-j}^\top\gamma_j+V_j,
\qquad
\tau_j^2:=\E(V_j^2).
\end{equation}
For observation $i$, let $\varepsilon_{iS}$ and $V_{ij}$ denote the corresponding realizations
of $\varepsilon_S$ and $V_j$.
The normal equations for these two projections, together with $\E Z=0$, give
\[
\E(\varepsilon_S)=0,
\qquad
\E(Z\varepsilon_S)=0,
\qquad
\E(V_j)=0,
\qquad
\E(Z_{-j}V_j)=0.
\]
These are population projection decompositions, not correctly specified linear models;
in particular, the residuals need not have conditional mean zero. Here $V_j$ is the part of
feature $j$ orthogonal to the other features. Consequently,
$\E(V_jZ_j)=\E(V_j^2)=\tau_j^2$, so the population normalizing constant is also the
nodewise-residual variance.

The pointwise result below concerns one prespecified target cell
$(j,S)\in[p]\times\cS$; the notation suppresses any deterministic dependence of this cell on
$n$. For that cell, define the variance scale
\begin{equation}\label{eq:pointwise-asymptotic-variance}
\sigma_{jS}^2:=\frac{\E(V_j^2\varepsilon_S^2)}{\tau_j^4}.
\end{equation}
Theorem~\ref{thm:clt} below shows that this is the asymptotic variance of the debiased
estimator.

We now state the conditions used to establish pointwise asymptotic linearity.

\begin{assumption}[Pointwise conditions]\label{as:reg}
The following conditions hold for the target cell $(j,S)$. All fixed constants below are
independent of $n$ and $p$, but may depend on the fixed factorial design.
\begin{enumerate}[leftmargin=1.8em,label=(P\arabic*)]
\item \emph{Sampling and folds.}
The number of folds $L$ is fixed, the partition is deterministic or independent of the data, and
$n_\ell/n\to\pi_\ell\in(0,1)$ for every $\ell$.  Each $\hat g^{(-\ell)}$ is
$\mathcal F_{-\ell}$-measurable. The treatment assignment is governed by the fixed probabilities
in Assumption~\ref{as:rct}; hence $\kappa_2<\infty$ requires no additional asymptotic condition.

\item \emph{Projection, decorrelation, and moments.}
$\Sigma_Z=\E(ZZ^\top)$ is positive definite, so
$\theta_S^\ast$ and $\gamma_j$ are unique. For some $\delta>0$,
$0<c_\tau<C_\tau<\infty$, $0<c_\sigma<C_\sigma<\infty$, and $C_m<\infty$,
\[
c_\tau\le\tau_j^2\le C_\tau,
\qquad
c_\sigma\le\sigma_{jS}^2\le C_\sigma,
\]
and
\[
\E|V_j\varepsilon_S|^{2+\delta}\le C_m,
\qquad
\E(\varepsilon_S^2)\le C_m,
\qquad
\E|V_jZ_j|^{1+\delta}\le C_m.
\]

\item \emph{Foldwise baseline consistency.}
With $\Delta_{g,\ell}=\hat g^{(-\ell)}-g$,
\begin{equation}\label{eq:weighted-baseline}
\begin{aligned}
\max_{1\le\ell\le L}\mathbb P_{n,\ell}\Delta_{g,\ell}^2&=o_p(1),\\
\max_{1\le\ell\le L}\mathbb P_{n,\ell}(\hat V_j^2\Delta_{g,\ell}^2)&=o_p(1).
\end{aligned}
\end{equation}

\item \emph{Remainders from the two sparse regressions.}
For deterministic sequences $r_{\theta,n},r_{\gamma,n},r_{\varepsilon,n},b_{Z,n}\ge0$,
the estimation errors and empirical correlations satisfy
\[
\|\hat\theta_S-\theta_S^\ast\|_1=O_p(r_{\theta,n}),
\qquad
\|\hat\gamma_j-\gamma_j\|_1=O_p(r_{\gamma,n}),
\]
\[
\left\|\mathbb P_n(\tilde Z_{-j}\varepsilon_S)\right\|_\infty
=O_p(r_{\varepsilon,n}),
\qquad
\left\|\mathbb P_n(\tilde Z_{-j}\tilde Z_j)\right\|_\infty
=O_p(b_{Z,n}),
\]
where $r_{\theta,n}$ and $r_{\gamma,n}$ measure the two coefficient-estimation errors, while
$r_{\varepsilon,n}$ and $b_{Z,n}$ measure the corresponding empirical correlations. Only the
following products are required to vanish:
\begin{equation}\label{eq:pointwise-rates}
\sqrt n\,\lambda_{j,n}r_{\theta,n}\to0,
\qquad
\sqrt n\,r_{\gamma,n}r_{\varepsilon,n}\to0,
\qquad
r_{\gamma,n}b_{Z,n}\to0.
\end{equation}
\end{enumerate}
\end{assumption}

\begin{theorem}[Pointwise asymptotic linearity and normality]\label{thm:clt}
Under Assumptions~\ref{as:rct} and \ref{as:reg},
\begin{equation}\label{eq:pointwise-linearization}
\sqrt n(\hat b_{jS}-\theta_{jS}^\ast)
=\frac{1}{\tau_j^2\sqrt n}\sum_{i=1}^n V_{ij}\varepsilon_{iS}+o_p(1).
\end{equation}
Consequently,
\begin{equation}\label{eq:pointwise-clt}
\frac{\sqrt n(\hat b_{jS}-\theta_{jS}^\ast)}{\sigma_{jS}}
\ \Rightarrow\ N(0,1).
\end{equation}
The convergence in \eqref{eq:pointwise-clt} is under the joint sampling law.  If additionally
$\sigma_{jS}^2\to\sigma^2\in(0,\infty)$, then
$\sqrt n(\hat b_{jS}-\theta_{jS}^\ast)\Rightarrow N(0,\sigma^2)$.
\end{theorem}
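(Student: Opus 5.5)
We follow the standard debiased-Lasso decomposition. The one new ingredient is the baseline perturbation, which we control with Lemma~\ref{lem:global}.

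\textbf{Decomposition.} Write $\tilde\psi_S=\tilde\psi^g_S+\tilde R_S$, where $\tilde R_{iS}=R_{iS}-\mathbb P_nR_S$. The centered oracle score satisfies $\tilde\psi^g_{iS}=\tilde Z_i^\top\theta^\ast_S+\tilde\varepsilon_{iS}$. Substituting into \eqref{eq:debiased} and using $\mathbb P_n(\hat V_j\tilde Z_j)=\hat\tau_j^2$ gives
\[
\hat b_{jS}-\theta^\ast_{jS}
=\frac{\mathbb P_n(\hat V_j\tilde\varepsilon_S)+\mathbb P_n(\hat V_j\tilde R_S)}{\hat\tau_j^2}
-\frac{\mathbb P_n(\hat V_j\tilde Z_{-j})^\top(\hat\theta_{-j,S}-\theta^\ast_{-j,S})}{\hat\tau_j^2}.
\]
Three points then need to be checked.

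\textbf{Shrinkage bias and normalization.} By \eqref{eq:nodewise-kkt}, the last term is at most $\lambda_{j,n}\|\hat\theta_S-\theta^\ast_S\|_1/\hat\tau_j^2=O_p(\lambda_{j,n}r_{\theta,n})=o_p(n^{-1/2})$. For the normalization, $\hat\tau_j^2=\mathbb P_n(V_j\tilde Z_j)-(\hat\gamma_j-\gamma_j)^\top\mathbb P_n(\tilde Z_{-j}\tilde Z_j)$, and the correction is $O_p(r_{\gamma,n}b_{Z,n})=o_p(1)$. The first piece converges to $\tau_j^2$ by the $(1+\delta)$ weak LLN together with $\E(V_jZ_{-j})=0$, with centering corrections that are $o_p(1)$. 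Since $\tau_j^2\ge c_\tau$, we obtain $\hat\tau_j^2/\tau_j^2\to_p1$.

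\textbf{Replacing $\hat V_j$ by $V_j$.} Write $\mathbb P_n(\hat V_j\tilde\varepsilon_S)=\mathbb P_n(V_j\tilde\varepsilon_S)-(\hat\gamma_j-\gamma_j)^\top\mathbb P_n(\tilde Z_{-j}\varepsilon_S)$. The second term is $O_p(r_{\gamma,n}r_{\varepsilon,n})=o_p(n^{-1/2})$. For the first, $\mathbb P_n(V_j\tilde\varepsilon_S)=\mathbb P_n(V_j\varepsilon_S)-\bar V_j\bar\varepsilon_S$, and the product of means is $O_p(n^{-1})$ by Chebyshev, using $\E\varepsilon_S^2\le C_m$ and $\tau_j^2\le C_\tau$.

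\textbf{Baseline term (main obstacle).} Since $\sum_i\hat V_{ij}=0$, the centering of $R$ drops out, so $\mathbb P_n(\hat V_j\tilde R_S)=n^{-1}\sum_\ell\sum_{i\in I_\ell}\hat V_{ij}R_{iS}$. The difficulty is that $\hat V_j$ is built from all $X$'s. However, it depends only on $X_1,\ldots,X_n$, provided the nodewise tuning $\lambda_{j,n}$ is chosen without reference to $(A,Y)$, so $\hat V_j$ is $\mathcal G_\ell$-measurable. Lemma~\ref{lem:global} then gives conditionally independent, mean-zero summands with
\[
\E\Bigl[\Bigl(n^{-1/2}\textstyle\sum_{i\in I_\ell}\hat V_{ij}R_{iS}\Bigr)^2\Bigm|\mathcal G_\ell\Bigr]
=\kappa_2\,(n_\ell/n)\,\mathbb P_{n,\ell}(\hat V_j^2\Delta_{g,\ell}^2)=o_p(1)
\]
by (P3). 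A conditional Chebyshev inequality, applied fold by fold with $L$ fixed, shows this term is $o_p(n^{-1/2})$ with no rate requirement.

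\textbf{Conclusion.} Combining the three points gives \eqref{eq:pointwise-linearization}, with the factor $\hat\tau_j^{-2}$ replaced by $\tau_j^{-2}$ at cost $o_p(1)\cdot O_p(1)$. For \eqref{eq:pointwise-clt}, the summands $V_{ij}\varepsilon_{iS}$ are i.i.d.\ within each $n$, with mean zero because $\E(V_j\varepsilon_S)=\E(Z_j\varepsilon_S)-\gamma_j^\top\E(Z_{-j}\varepsilon_S)=0$. Their variance is $\tau_j^4\sigma^2_{jS}$, which is bounded above and below. The $(2+\delta)$ moment bound gives Lyapunov's condition for the triangular array, which yields $N(0,1)$. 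The final claim follows from Slutsky's lemma.
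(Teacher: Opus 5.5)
Your proposal is correct and follows essentially the same route as the paper. It uses the same exact decomposition, the same KKT control of the shrinkage bias, and the same expansion of $\hat V_j$ around $V_j$ for both $\hat\tau_j^2$ and the oracle term. It also controls the baseline term foldwise through Lemma~\ref{lem:global} with conditional Chebyshev, and closes with Lyapunov's CLT and Slutsky, as the paper does. Your explicit caveat that the nodewise penalty $\lambda_{j,n}$ must not depend on $(A,Y)$, so that $\hat V_j$ is $\mathcal G_\ell$-measurable, is a condition the paper leaves implicit.
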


The linear representation in \eqref{eq:pointwise-linearization} identifies
$V_j\varepsilon_S/\tau_j^2$ as the influence term and \eqref{eq:pointwise-asymptotic-variance}
as its variance.

\begin{proof}
Write $\delta_{\gamma,j}:=\hat\gamma_j-\gamma_j$,
$\bar V_j:=\mathbb P_nV_j$, and $\bar Z:=\mathbb P_nZ$. Since
$\tilde Z=Z-\bar Z$, each sample nodewise residual satisfies the exact identity
\begin{equation}\label{eq:vhat-expansion}
\begin{aligned}
\hat V_{ij}
&=\tilde Z_{ij}-\tilde Z_{i,-j}^{\top}\hat\gamma_j\\
&=(Z_{ij}-Z_{i,-j}^{\top}\gamma_j)
  -(\bar Z_j-\bar Z_{-j}^{\top}\gamma_j)
  -\tilde Z_{i,-j}^{\top}(\hat\gamma_j-\gamma_j)\\
&=V_{ij}-\bar V_j-\tilde Z_{i,-j}^{\top}\delta_{\gamma,j},
\qquad i=1,\ldots,n.
\end{aligned}
\end{equation}
We first show that its normalizing constant is consistent. Using
$\mathbb P_n\tilde Z_j=0$ and $\E(V_jZ_j)=\E(V_j^2)=\tau_j^2$ gives
\begin{align}
\hat\tau_j^2-\tau_j^2
={}&\{\mathbb P_n(V_jZ_j)-\E(V_jZ_j)\}
-\bar V_j\bar Z_j
-\delta_{\gamma,j}^{\top}\mathbb P_n(\tilde Z_{-j}\tilde Z_j).
\label{eq:tauhat-expansion}
\end{align}
Let $q=\min(1+\delta,2)>1$. Assumption~\ref{as:reg}(P2) gives
$\E|V_jZ_j|^q<\infty$, so the $L_q$ weak law yields
$\mathbb P_n(V_jZ_j)-\E(V_jZ_j)=o_p(1)$. Moreover,
$\E(V_j^2)=\tau_j^2\le C_\tau$ and $\E(Z_j^2)=1$, and hence Chebyshev's inequality gives
$\bar V_j=O_p(n^{-1/2})$ and $\bar Z_j=O_p(n^{-1/2})$. Finally, by H\"older's inequality
and Assumption~\ref{as:reg}(P4),
\[
\left|\delta_{\gamma,j}^{\top}
\mathbb P_n(\tilde Z_{-j}\tilde Z_j)\right|
\le
\|\delta_{\gamma,j}\|_1
\left\|\mathbb P_n(\tilde Z_{-j}\tilde Z_j)\right\|_\infty
=O_p(r_{\gamma,n}b_{Z,n})=o_p(1).
\]
Thus \eqref{eq:tauhat-expansion} shows that
\begin{equation}\label{eq:tauhat-consistency}
\hat\tau_j^2\to_p\tau_j^2,
\qquad
(\hat\tau_j^2)^{-1}=O_p(1),
\end{equation}
where the second conclusion uses $\tau_j^2\ge c_\tau$.

We next expand the numerator. Centering the design gives the exact identity
\[
\mathbb P_n\hat V_j
=\mathbb P_n\tilde Z_j-(\mathbb P_n\tilde Z_{-j})^\top\hat\gamma_j=0.
\]
Recall that $\psi_{iS}^g$ is the direct score defined in \eqref{eq:score}. By the definition of
$R_{iS}$ in Section~3.1 and the definition of the projection residual $\varepsilon_{iS}$,
\[
\hat\psi_{iS}=\psi_{iS}^g+R_{iS},
\qquad
\psi_{iS}^g=\alpha_S^\ast+Z_i^\top\theta_S^\ast+\varepsilon_{iS}.
\]
Thus the second equality is a decomposition of the score in \eqref{eq:score}, not a new
definition of that score. Since $\bar\psi_S-\bar Z^\top\theta_S^\ast$ is constant across
observations, $\mathbb P_n\hat V_j=0$ gives
\begin{align*}
&\mathbb P_n\{\hat V_j(\tilde\psi_S-\tilde Z^\top\theta_S^\ast)\}\\
&\quad=\mathbb P_n\!\left[\hat V_j
\{\hat\psi_S-Z^\top\theta_S^\ast-(\bar\psi_S-\bar Z^\top\theta_S^\ast)\}\right]\\
&\quad=\mathbb P_n\{\hat V_j(\hat\psi_S-Z^\top\theta_S^\ast)\}
=\mathbb P_n\{\hat V_j(\alpha_S^\ast+\varepsilon_S+R_S)\}\\
&\quad=\mathbb P_n\{\hat V_j(\varepsilon_S+R_S)\}.
\end{align*}
Substituting this identity into \eqref{eq:debiased} and using
$\mathbb P_n(\hat V_j\tilde Z_j)=\hat\tau_j^2$ gives the exact decomposition
\begin{equation}\label{eq:exact-decomposition}
\hat b_{jS}-\theta_{jS}^\ast
=\frac{\mathbb P_n\{\hat V_j(\varepsilon_S+R_S)\}}{\hat\tau_j^2}
-\frac{\sum_{k\ne j}\mathbb P_n(\hat V_j\tilde Z_k)
(\hat\theta_{kS}-\theta_{kS}^\ast)}{\hat\tau_j^2}.
\end{equation}
The nodewise KKT inequality \eqref{eq:nodewise-kkt} and (P4) give
\begin{align}
&\sqrt n\left|
\sum_{k\ne j}\mathbb P_n(\hat V_j\tilde Z_k)
(\hat\theta_{kS}-\theta_{kS}^\ast)
\right| \notag\\
&\qquad\le
\sqrt n\,\lambda_{j,n}\|\hat\theta_S-\theta_S^\ast\|_1
=O_p(\sqrt n\,\lambda_{j,n}r_{\theta,n})=o_p(1).
\label{eq:debiased-bias-bound}
\end{align}

For the oracle-score term, \eqref{eq:vhat-expansion} yields
\begin{align}
\sqrt n\,\mathbb P_n(\hat V_j\varepsilon_S)
-\frac1{\sqrt n}\sum_{i=1}^nV_{ij}\varepsilon_{iS}
=-\sqrt n\,\bar V_j\mathbb P_n\varepsilon_S
-\sqrt n\,\delta_{\gamma,j}^{\top}
\mathbb P_n(\tilde Z_{-j}\varepsilon_S).
\label{eq:oracle-replacement}
\end{align}
Because $\E\varepsilon_S=0$ and $\E(\varepsilon_S^2)\le C_m$,
$\mathbb P_n\varepsilon_S=O_p(n^{-1/2})$. The first term on the right of
\eqref{eq:oracle-replacement} is therefore $O_p(n^{-1/2})$. The second has absolute value at
most
\[
\sqrt n\,\|\delta_{\gamma,j}\|_1
\left\|\mathbb P_n(\tilde Z_{-j}\varepsilon_S)\right\|_\infty
=O_p(\sqrt n\,r_{\gamma,n}r_{\varepsilon,n})=o_p(1)
\]
by (P4). Consequently,
\begin{equation}\label{eq:oracle-replacement-result}
\sqrt n\,\mathbb P_n(\hat V_j\varepsilon_S)
=\frac1{\sqrt n}\sum_{i=1}^nV_{ij}\varepsilon_{iS}+o_p(1).
\end{equation}

It remains to control the baseline term. For each fold, define
\[
B_{\ell,n}:=\frac1{\sqrt n}\sum_{i\in I_\ell}\hat V_{ij}R_{iS}.
\]
The vector $\hat V_j$ is a function only of $Z_1,\ldots,Z_n$ and is therefore
$\mathcal G_\ell$-measurable. Lemma~\ref{lem:global} then gives
\begin{equation}\label{eq:baseline-fold-variance}
\E(B_{\ell,n}\mid\mathcal G_\ell)=0,
\qquad
\Var(B_{\ell,n}\mid\mathcal G_\ell)
=\kappa_2\frac{n_\ell}{n}
\mathbb P_{n,\ell}(\hat V_j^2\Delta_{g,\ell}^2)=o_p(1),
\end{equation}
where the last equality follows from (P1) and the second condition in (P3). To pass from the
conditional variance to unconditional convergence, fix $\epsilon,\eta>0$. Conditional
Chebyshev's inequality implies
\[
\PP(|B_{\ell,n}|>\epsilon)
\le
\PP\{\Var(B_{\ell,n}\mid\mathcal G_\ell)>\eta\}
+\frac{\eta}{\epsilon^2}.
\]
The first term tends to zero for every fixed $\eta$ by
\eqref{eq:baseline-fold-variance}; letting $\eta\downarrow0$ gives
$B_{\ell,n}=o_p(1)$. Since $L$ is fixed,
\begin{equation}\label{eq:baseline-remainder-result}
\sqrt n\,\mathbb P_n(\hat V_jR_S)
=\sum_{\ell=1}^LB_{\ell,n}=o_p(1).
\end{equation}

Combining \eqref{eq:exact-decomposition}, \eqref{eq:tauhat-consistency},
\eqref{eq:debiased-bias-bound}, \eqref{eq:oracle-replacement-result}, and
\eqref{eq:baseline-remainder-result} yields
\[
\sqrt n(\hat b_{jS}-\theta_{jS}^\ast)
=\frac1{\hat\tau_j^2}
\left\{\frac1{\sqrt n}\sum_{i=1}^nV_{ij}\varepsilon_{iS}+o_p(1)\right\}.
\]
Because $V_j$ is a linear combination of $Z$, the projection normal equations give
$\E(V_j\varepsilon_S)=0$. Hence the leading sum is $O_p(1)$ because its variance is
$\E(V_j^2\varepsilon_S^2)=\tau_j^4\sigma_{jS}^2=O(1)$. Slutsky's theorem and
\eqref{eq:tauhat-consistency} therefore prove \eqref{eq:pointwise-linearization}.

Finally, let $U_{ni}:=V_{ij}\varepsilon_{iS}$ and
$\nu_n^2:=\E(U_{ni}^2)=\tau_j^4\sigma_{jS}^2$. The projection normal equations imply
$\E U_{ni}=0$. Assumption~\ref{as:reg}(P2) bounds $\nu_n^2$ away from zero and infinity and
gives $\E|U_{ni}|^{2+\delta}\le C_m$. Hence the Lyapunov ratio satisfies
\[
\frac{\sum_{i=1}^n\E|U_{ni}|^{2+\delta}}
{(n\nu_n^2)^{1+\delta/2}}
=\frac{\E|U_{n1}|^{2+\delta}}
{n^{\delta/2}\nu_n^{2+\delta}}
\longrightarrow0.
\]
Lyapunov's central limit theorem for the row-wise i.i.d.\ array gives
$(\nu_n\sqrt n)^{-1}\sum_iU_{ni}\Rightarrow N(0,1)$. Since
$\nu_n=\tau_j^2\sigma_{jS}$, combining this limit with
\eqref{eq:pointwise-linearization} proves \eqref{eq:pointwise-clt}. If
$\sigma_{jS}^2\to\sigma^2\in(0,\infty)$, one further application of Slutsky's theorem gives
$\sqrt n(\hat b_{jS}-\theta_{jS}^\ast)\Rightarrow N(0,\sigma^2)$.
\end{proof}

\paragraph{Interpreting the rate conditions.}
Theorem~\ref{thm:clt} separates the baseline error from the two sparse-regression errors.
Condition (P3) asks only that the unweighted and weighted foldwise baseline errors vanish; it
imposes no fixed convergence rate. Condition (P4), by contrast, requires products of the initial-Lasso and
nodewise-Lasso errors to make the numerator remainders $o_p(n^{-1/2})$ and the estimated
normalization consistent.
To make these product restrictions concrete, suppose the bounds in (P4) take the familiar
sparse-regression forms
\[
\begin{aligned}
\lambda_{j,n}&\asymp\sqrt{\frac{\log(2p)}{n}},
&r_{\theta,n}&\asymp s_S\sqrt{\frac{\log(2pM)}{n}},
&r_{\gamma,n}&\asymp s_j\sqrt{\frac{\log(2p)}{n}},\\
r_{\varepsilon,n}&\asymp\sqrt{\frac{\log(2p)}{n}},
&b_{Z,n}&=O(1).
\end{aligned}
\]
Here $s_S=\|\theta_S^\ast\|_0$ and $s_j=\|\gamma_j\|_0$. Substitution into
\eqref{eq:pointwise-rates} reduces the required product rates to
\[
\frac{s_S\sqrt{\log(2p)\,\log(2pM)}}{\sqrt n}\to0,
\qquad
\frac{s_j\log(2p)}{\sqrt n}\to0.
\]
Thus $s_S\sqrt{\log(2p)\,\log(2pM)}=o(\sqrt n)$ controls the initial fit for contrast $S$,
while $s_j\log(2p)=o(\sqrt n)$ controls the nodewise regression for feature $j$. These are
sufficient numerical rates, not additional conclusions of Theorem~\ref{thm:clt}.

It remains to verify that using the estimated score as the Lasso response does not change the
initial-fit rate. Section~\ref{sec:primitive-pointwise} makes this comparison explicit. With
$\Delta_{g,Z,n}^2:=\max_{\ell,k}\mathbb P_{n,\ell}(\tilde Z_k^2\Delta_{g,\ell}^2)$ and
$B_{g,Z,n}:=\max_{\ell,i\in I_\ell,k}|\tilde Z_{ik}\Delta_{g,\ell}(X_i)|$, a foldwise conditional
Bernstein argument gives
\[
\left\|\mathbb P_n(\tilde ZR_S)\right\|_\infty
=O_p\!\left\{
\Delta_{g,Z,n}\sqrt{\frac{\log(2p)}{n}}
+B_{g,Z,n}\frac{\log(2p)}{n}
\right\}.
\]
The primitive conditions in Section~\ref{sec:primitive-pointwise} make the right-hand side
\[
o_p\!\left\{\sqrt{\frac{\log(2p)}{n}}\right\}.
\]
Since $\lambda\asymp\sqrt{\log(2pM)/n}$, this is also $o_p(\lambda)$. This penalty-scale
requirement differs from (P3): (P3)
controls one final influence-function remainder for the target $j$, whereas the score-Lasso
bound must hold over all $p$ covariate coordinates. Under the sub-Gaussian design conditions
of Section~\ref{sec:primitive-pointwise}, one simple sufficient route is
$\max_{\ell\le L}\|\hat g^{(-\ell)}-g\|_\infty=o_p(1)$ together with
$\log(2p)\log(2np)/n\to0$. The high-level formulation in Theorem~\ref{thm:clt} allows other
conditions whenever they deliver the same nuisance bounds.

\paragraph{Variance estimation and feasible inference.}
To make Theorem~\ref{thm:clt} operational, we next estimate the unknown asymptotic variance
$\sigma_{jS}^2$. We construct a sandwich estimator from the nodewise residuals and the fitted
projection residuals. The additional stability conditions below ensure that replacing the
population residuals by their estimated counterparts does not affect the leading second
moment. Consistency of the resulting estimator then yields a feasible pointwise confidence
interval.

Specifically, define
\[
\hat\varepsilon_{iS}:=\tilde\psi_{iS}-\tilde Z_i^\top\hat\theta_S
=\hat\psi_{iS}-\hat b_{0S}-Z_i^\top\hat\theta_S,
\qquad
\hat b_{0S}:=\bar\psi_S-\bar Z^\top\hat\theta_S.
\]

\begin{assumption}[Variance-estimation stability]\label{as:variance}
\[
\mathbb P_n\!\left[\hat V_j^2
\{\tilde Z^\top(\hat\theta_S-\theta_S^\ast)\}^2\right]=o_p(1),
\qquad
\mathbb P_n\{(\hat V_j-V_j)^2\varepsilon_S^2\}=o_p(1).
\]
\end{assumption}

\begin{corollary}[Consistent sandwich variance and valid pointwise interval]\label{cor:pointwise-ci}
Under Assumptions~\ref{as:rct}, \ref{as:reg}, and \ref{as:variance},
\begin{equation}\label{eq:var}
\hat\sigma^2_{jS}
:=\frac{\mathbb P_n(\hat V_j^2\hat\varepsilon_S^2)}{(\hat\tau_j^2)^2}
\quad\text{satisfies}\quad
\frac{\hat\sigma^2_{jS}}{\sigma^2_{jS}}\to_p1.
\end{equation}
Consequently,
\[
\frac{\sqrt n(\hat b_{jS}-\theta_{jS}^\ast)}{\hat\sigma_{jS}}
\Rightarrow N(0,1),
\]
and, for every fixed $\alpha\in(0,1)$,
\[
\PP\!\left\{
\theta_{jS}^\ast\in
\left[\hat b_{jS}\pm z_{1-\alpha/2}\frac{\hat\sigma_{jS}}{\sqrt n}\right]
\right\}\to1-\alpha.
\]
\end{corollary}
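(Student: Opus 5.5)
Since $\hat\tau_j^2\to_p\tau_j^2$ by \eqref{eq:tauhat-consistency} and $\tau_j^2\ge c_\tau$, consistency of $\hat\sigma^2_{jS}$ reduces to showing
\[
\mathbb P_n(\hat V_j^2\hat\varepsilon_S^2)-\E(V_j^2\varepsilon_S^2)=o_p(1).
\]
The ratio statement then follows from $\sigma^2_{jS}\ge c_\sigma$. The interval conclusions are immediate from Theorem~\ref{thm:clt} and Slutsky's theorem. Write $\sqrt n(\hat b_{jS}-\theta^\ast_{jS})/\hat\sigma_{jS}$ as the $N(0,1)$-converging statistic of \eqref{eq:pointwise-clt} multiplied by $\sigma_{jS}/\hat\sigma_{jS}\to_p1$. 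Coverage then follows by evaluating the limit at $\pm z_{1-\alpha/2}$, a continuity point of $\Phi$.

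For the numerator, I will decompose the fitted residual exactly. From $\hat\psi_{iS}=\alpha^\ast_S+Z_i^\top\theta^\ast_S+\varepsilon_{iS}+R_{iS}$ and the definitions of $\tilde\psi$ and $\tilde Z$,
\[
\hat\varepsilon_{iS}=\varepsilon_{iS}+R_{iS}-\tilde Z_i^\top(\hat\theta_S-\theta^\ast_S)-c_n,
\qquad
c_n:=\bar\varepsilon_S+\bar R_S .
\]
Here $c_n$ is a constant across $i$. Let $D_{iS}:=\hat\varepsilon_{iS}-\varepsilon_{iS}$. By the elementary inequality
\[
|\mathbb P_n(a^2)-\mathbb P_n(b^2)|\le \mathbb P_n\{(a-b)^2\}+2\{\mathbb P_n(b^2)\}^{1/2}\{\mathbb P_n((a-b)^2)\}^{1/2}
\]
(Cauchy--Schwarz), applied with $a=\hat V_j\hat\varepsilon_S$ and $b=V_j\varepsilon_S$, it suffices to show three things:
\begin{itemize}
\item $\mathbb P_n(V_j^2\varepsilon_S^2)-\E(V_j^2\varepsilon_S^2)=o_p(1)$;
\item $\E(V_j^2\varepsilon_S^2)=O(1)$;
\item $\mathbb P_n\{(\hat V_j\hat\varepsilon_S-V_j\varepsilon_S)^2\}=o_p(1)$.
\end{itemize}
The first holds by the $L_{1+\delta/2}$ weak law for the row-wise i.i.d.\ array, using $\E|V_j\varepsilon_S|^{2+\delta}\le C_m$ from (P2). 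The second holds because $\E(V_j^2\varepsilon_S^2)=\tau_j^4\sigma_{jS}^2\le C_\tau^2C_\sigma$.

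For the third, I will split $\hat V_j\hat\varepsilon_S-V_j\varepsilon_S=(\hat V_j-V_j)\varepsilon_S+\hat V_jD_S$. The first piece is $o_p(1)$ in $\mathbb P_n$-square directly by the second half of Assumption~\ref{as:variance}. For the second piece I bound $\mathbb P_n(\hat V_j^2D_S^2)$ by a constant times the sum of three terms:
\begin{itemize}
\item $\mathbb P_n[\hat V_j^2\{\tilde Z^\top(\hat\theta_S-\theta^\ast_S)\}^2]$, which is $o_p(1)$ by the first half of Assumption~\ref{as:variance};
\item $\mathbb P_n(\hat V_j^2R_S^2)$;
\item $c_n^2\,\mathbb P_n(\hat V_j^2)$.
\end{itemize}

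The baseline term $\mathbb P_n(\hat V_j^2R_S^2)$ is where Lemma~\ref{lem:global} is used again. Since $\hat V_j$ is $\mathcal G_\ell$-measurable, its conditional expectation given $\mathcal G_\ell$ on fold $\ell$ equals $\kappa_2(n_\ell/n)\mathbb P_{n,\ell}(\hat V_j^2\Delta_{g,\ell}^2)$, which is $o_p(1)$ by (P3). A conditional Markov inequality, argued exactly as for $B_{\ell,n}$, then gives $\mathbb P_n(\hat V_j^2R_S^2)=o_p(1)$. Similarly, $\E(\bar R_S^2\mid\mathcal G_\ell\text{'s})$ is controlled by $\kappa_2 n^{-1}\max_\ell\mathbb P_{n,\ell}\Delta_{g,\ell}^2$. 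Together with $\bar\varepsilon_S=O_p(n^{-1/2})$, this gives $c_n=o_p(1)$; in fact $\bar R_S=\sum_\ell n^{-1}\sum_{I_\ell}R_{iS}$ is $o_p(n^{-1/2})$ foldwise.

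The remaining step, which I expect to be the main obstacle, is showing $\mathbb P_n(\hat V_j^2)=O_p(1)$. Assumption~\ref{as:variance} does not state this directly, so I will derive it from the quantities already controlled. Using \eqref{eq:vhat-expansion} and $\mathbb P_n\hat V_j=0$,
\[
\mathbb P_n(\hat V_j^2)=\mathbb P_n(\hat V_j\tilde Z_j)-\hat\gamma_j^\top\mathbb P_n(\tilde Z_{-j}\hat V_j)
\le \hat\tau_j^2+\|\hat\gamma_j\|_1\lambda_{j,n}
\]
by the nodewise KKT bound \eqref{eq:nodewise-kkt}. Here $\hat\tau_j^2=O_p(1)$. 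I would then need $\|\hat\gamma_j\|_1\lambda_{j,n}=O_p(1)$, via $\|\hat\gamma_j\|_1\le\|\gamma_j\|_1+O_p(r_{\gamma,n})$. If $\|\gamma_j\|_1\lambda_{j,n}$ is not implied by the stated conditions, I will fall back on a triangle inequality instead. One form is $\mathbb P_n(\hat V_j^2)\le 2\mathbb P_n(V_j^2)+2\mathbb P_n\{(\hat V_j-V_j)^2\}$, but this still needs $\mathbb P_n\{(\hat V_j-V_j)^2\}$ bounded. The alternative is to absorb $c_n^2\mathbb P_n(\hat V_j^2)$ by noting that $c_n$ is in fact $O_p(n^{-1/2})$, so only the crude bound $\mathbb P_n(\hat V_j^2)=o_p(n)$ is needed. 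That crude bound follows from the KKT display whenever $\|\hat\gamma_j\|_1\lambda_{j,n}=o_p(n)$, which is extremely mild. I would take this last route. Assembling the pieces gives \eqref{eq:var}.
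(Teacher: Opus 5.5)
Your architecture is the paper's. You use the same exact decomposition $\hat\varepsilon_{iS}=\varepsilon_{iS}+D_{iS}$, control $\mathbb P_n(\hat V_j^2D_S^2)$ through Assumption~\ref{as:variance} and (P3), apply the weak law to $\mathbb P_n(V_j^2\varepsilon_S^2)$, and finish with Cauchy--Schwarz. Your single $a^2-b^2$ inequality replaces the paper's two successive replacements, and both work. Your conditional-Markov treatment of $\mathbb P_n(\hat V_j^2R_S^2)$ is also valid, though the cruder bound $R_{iS}^2\le\max_{a\in\cA}q_S(a)^2\,\Delta_{g,\ell}(X_i)^2$ with (P3) suffices.

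\textbf{The gap is at the step you flagged.} Your route needs $c_n^2\,\mathbb P_n\hat V_j^2=o_p(1)$. You close it by assuming $\|\hat\gamma_j\|_1\lambda_{j,n}=o_p(n)$ and calling this ``extremely mild,'' but it cannot be derived along the lines you sketch. (P4) controls $\|\hat\gamma_j-\gamma_j\|_1$ only through the products in \eqref{eq:pointwise-rates}. Nothing in Assumptions~\ref{as:reg} or \ref{as:variance} bounds $\|\gamma_j\|_1$, $r_{\gamma,n}$, or $\lambda_{j,n}$ individually. As written, the proof therefore rests on an unverified condition.

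\textbf{The missing idea.} The sup-norm KKT inequality \eqref{eq:nodewise-kkt} discards sign information you need. There are two ways to recover it.
\begin{enumerate}
\item As the paper does, compare the nodewise objective at $\hat\gamma_j$ with its value at $\gamma=0$:
\[
\mathbb P_n\hat V_j^2+2\lambda_{j,n}\|\hat\gamma_j\|_1\le\mathbb P_n\tilde Z_j^2\le\mathbb P_nZ_j^2=O_p(1),
\]
where the last step is Markov's inequality with $\E Z_j^2=1$.
\item Use the full stationarity condition $\mathbb P_n(\tilde Z_{-j}\hat V_j)=\lambda_{j,n}\hat\kappa_j$, with $\hat\kappa_j$ a subgradient of $\|\cdot\|_1$ at $\hat\gamma_j$. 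This gives $\hat\gamma_j^\top\mathbb P_n(\tilde Z_{-j}\hat V_j)=\lambda_{j,n}\|\hat\gamma_j\|_1\ge0$. Substituted into your own identity, it yields $\mathbb P_n\hat V_j^2=\hat\tau_j^2-\lambda_{j,n}\|\hat\gamma_j\|_1\le\hat\tau_j^2=O_p(1)$.
\end{enumerate}
Either way $\mathbb P_n\hat V_j^2=O_p(1)$ holds outright. The refinement $c_n=O_p(n^{-1/2})$ then becomes unnecessary, since $\bar\varepsilon_S=O_p(n^{-1/2})$ and $\bar R_S^2\le\mathbb P_nR_S^2=o_p(1)$ already give $c_n=o_p(1)$. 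The rest of your argument then goes through unchanged.
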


\begin{proof}
Write $\delta_{\theta,S}:=\hat\theta_S-\theta_S^\ast$,
$\bar\varepsilon_S:=\mathbb P_n\varepsilon_S$, and
$\bar R_S:=\mathbb P_nR_S$. The score decomposition used in the proof of
Theorem~\ref{thm:clt} gives
\begin{equation}\label{eq:variance-residual-decomposition}
\hat\varepsilon_{iS}
=\varepsilon_{iS}+D_{iS},
\qquad
D_{iS}:=-\bar\varepsilon_S+(R_{iS}-\bar R_S)
-\tilde Z_i^\top\delta_{\theta,S}.
\end{equation}

We first record the second-moment bound for the nodewise residual. Comparing the
nodewise-Lasso objective at $\hat\gamma_j$ with its value at zero gives
\[
\mathbb P_n\hat V_j^2+2\lambda_{j,n}\|\hat\gamma_j\|_1
\le \mathbb P_n\tilde Z_j^2
\le \mathbb P_nZ_j^2.
\]
Because $\E Z_j^2=1$, Markov's inequality yields
$\mathbb P_nZ_j^2=O_p(1)$ and hence
\begin{equation}\label{eq:vhat-second-moment}
\mathbb P_n\hat V_j^2=O_p(1).
\end{equation}

Next, $q_S$ is bounded on $\cA$ because the factorial design and its positive allocation
probabilities are fixed. For $i\in I_\ell$, $R_{iS}=-\Delta_{g,\ell}(X_i)q_S(A_i)$, so
Assumption~\ref{as:reg}(P3) and the fixed number of folds imply
\begin{equation}\label{eq:variance-baseline-bounds}
\mathbb P_nR_S^2=o_p(1),
\qquad
\mathbb P_n(\hat V_j^2R_S^2)=o_p(1).
\end{equation}
In particular, $\bar R_S^2\le\mathbb P_nR_S^2=o_p(1)$. Moreover,
$\bar\varepsilon_S=O_p(n^{-1/2})$ by Assumption~\ref{as:reg}(P2). Using
\eqref{eq:vhat-second-moment}, \eqref{eq:variance-baseline-bounds}, and the first condition
of Assumption~\ref{as:variance} in \eqref{eq:variance-residual-decomposition} gives
\begin{align}
\mathbb P_n(\hat V_j^2D_S^2)
\le{}&3\bar\varepsilon_S^2\mathbb P_n\hat V_j^2
+6\mathbb P_n(\hat V_j^2R_S^2)
+6\bar R_S^2\mathbb P_n\hat V_j^2\notag\\
&+3\mathbb P_n\!\left[
\hat V_j^2(\tilde Z^\top\delta_{\theta,S})^2\right]
=o_p(1).
\label{eq:variance-residual-error}
\end{align}

Assumption~\ref{as:reg}(P2) and the row-wise $L_{1+\delta/2}$ weak law give
\begin{equation}\label{eq:oracle-second-moment}
\mathbb P_n(V_j^2\varepsilon_S^2)
=\E(V_j^2\varepsilon_S^2)+o_p(1)
=\tau_j^4\sigma_{jS}^2+o_p(1)=O_p(1).
\end{equation}
The second condition of Assumption~\ref{as:variance}, together with Cauchy--Schwarz and
\eqref{eq:oracle-second-moment}, therefore yields
\begin{align}
&\left|\mathbb P_n\{(\hat V_j^2-V_j^2)\varepsilon_S^2\}\right|\\
&\quad\le
\{\mathbb P_n((\hat V_j-V_j)^2\varepsilon_S^2)\}^{1/2}
\{\mathbb P_n((\hat V_j+V_j)^2\varepsilon_S^2)\}^{1/2}
=o_p(1),
\end{align}
where the second factor is $O_p(1)$ because
$(\hat V_j+V_j)^2\le2(\hat V_j-V_j)^2+8V_j^2$. Hence
\begin{equation}\label{eq:vhat-oracle-second-moment}
\mathbb P_n(\hat V_j^2\varepsilon_S^2)
=\tau_j^4\sigma_{jS}^2+o_p(1)=O_p(1).
\end{equation}

Finally, another application of Cauchy--Schwarz gives
\begin{align*}
&\left|\mathbb P_n\{\hat V_j^2(\hat\varepsilon_S^2-\varepsilon_S^2)\}\right|\\
&\quad\le
\{\mathbb P_n(\hat V_j^2D_S^2)\}^{1/2}
\{\mathbb P_n[\hat V_j^2(\hat\varepsilon_S+\varepsilon_S)^2]\}^{1/2}
=o_p(1).
\end{align*}
Indeed, the second factor is $O_p(1)$ by
$\hat\varepsilon_S+\varepsilon_S=2\varepsilon_S+D_S$,
\eqref{eq:variance-residual-error}, and \eqref{eq:vhat-oracle-second-moment}. Consequently,
\[
\mathbb P_n(\hat V_j^2\hat\varepsilon_S^2)
=\tau_j^4\sigma_{jS}^2+o_p(1).
\]
Combining this result with \eqref{eq:tauhat-consistency} and
$\sigma_{jS}^2\ge c_\sigma$ proves
$\hat\sigma_{jS}^2/\sigma_{jS}^2\to_p1$. The studentized limit and the coverage statement
then follow from Theorem~\ref{thm:clt} and Slutsky's theorem.
\end{proof}

\begin{remark}[Balanced assignment and the role of the baseline]\label{rem:bal}
For $p_a=2^{-K}$,
$\psi_S^{g}=2\{Y-g(X)\}\phi_S(A)$ and $\phi_S(A)^2=1$, so the conditional raw-score
second moment
\[
\rho_g^2(X):=\E\{(\psi_S^{g})^2\mid X\}
=4\E[\{Y-g(X)\}^2\mid X]
\]
is exactly independent of $S$.  The projection-residual variance is not generally $S$-invariant:
with $m_S(X)=\alpha_S^\ast+Z^\top\theta_S^\ast$,
\[
\E(\varepsilon_S^2\mid X)
=\rho_g^2(X)-\tau_S(X)^2+\{\tau_S(X)-m_S(X)\}^2.
\]
Thus any claim of equal variances across contrasts requires an additional restriction on the
contrast means and approximation errors.  Because $m_S$ and $\tau_S$ do not depend on the
chosen baseline, only the first term on the right depends on $g$, and by
Proposition~\ref{prop:gopt} it is minimized \emph{pointwise in $x$} at
$g=g_{\mathrm{opt}}$. Since $V_j$ is a measurable function of $X$ (a linear combination of the
coordinates of $Z(X)$), the tower property gives
$\E(V_j^2\varepsilon_S^2)=\E\{V_j^2\,\E(\varepsilon_S^2\mid X)\}$, the iterated expectation
being justified by Fubini--Tonelli for the nonnegative integrand $V_j^2\varepsilon_S^2$.
Pointwise minimization under the nonnegative weight $V_j^2$ therefore shows that
$g_{\mathrm{opt}}$ also minimizes $\E(V_j^2\varepsilon_S^2)$ within the class of
baselines; validity itself does not require this choice.
\end{remark}

\section{Simultaneous inference}\label{sec:simul}

Section~\ref{sec:debias} developed inference for one prespecified effect-modifier cell. We now
extend that analysis to the full covariate-by-contrast grid. The factorial design and its
$M=2^K-1$ contrasts are fixed, while the grid size $pM$ may grow through the covariate
dimension $p=p_n$. The main challenge is to control the studentized errors uniformly while accounting
for their dependence across cells. We first establish a uniform linearization and an
infeasible Gaussian approximation. We then make that approximation operational by introducing
a multiplier bootstrap based on the estimated influence coordinates, which yields simultaneous
confidence bands and strong familywise error control through a Romano--Wolf step-down
procedure \citep{RomanoWolf2005}. We explain the role of each result as it is introduced; formal verification of a
checkable sufficient-condition route is deferred to
Section~\ref{sec:supp-uniform-sufficient} of the supplementary material.

\subsection{Studentized statistics}
\label{sec:simul-statistics}

Let
\[
\cH_n=[p]\times\cS,
\qquad d_n=|\cH_n|=pM,
\qquad \ell_n=\log(2nd_n).
\]
For $h=(j,S)\in\cH_n$, define the unstandardized and standardized influence coordinates
\[
u_{i,h}:=V_{ij}\varepsilon_{iS},
\qquad
\nu_h^2:=\E(u_{i,h}^2)=\tau_j^4\sigma_{jS}^2,
\qquad
\xi_{i,h}:=u_{i,h}/\nu_h.
\]
Thus $\E\xi_{i,h}=0$ and $\E\xi_{i,h}^2=1$.  Let
\[
\xi_i:=(\xi_{i,h})_{h\in\cH_n},
\qquad
\Sigma_\xi:=\E(\xi_i\xi_i^\top)
=\bigl(\Sigma_{\xi,h,h'}\bigr)_{h,h'\in\cH_n},
\qquad
\Sigma_{\xi,h,h'}:=\E(\xi_{i,h}\xi_{i,h'}).
\]
Because every coordinate is standardized, $\Sigma_\xi$ is both the covariance matrix and
the correlation matrix of the oracle influence vector $\xi_i$.  For the estimated score, write
\[
\hat s_{i,h}:=\hat V_{ij}\hat\varepsilon_{iS},
\qquad
\hat u_{i,h}:=\hat s_{i,h}-\mathbb P_n\hat s_h,
\qquad
\hat\nu_h^2:=\mathbb P_n\hat u_h^2.
\]
We use the centered simultaneous standard-error estimator
\begin{equation}\label{eq:uniform-se}
\hat\sigma^{\mathrm c}_{jS}:=\frac{\hat\nu_{jS}}{\hat\tau_j^2}
\end{equation}
and the centered studentized estimation error and observable null statistic
\[
T_{jS}:=\frac{\sqrt n(\hat b_{jS}-\theta_{jS}^\ast)}
{\hat\sigma^{\mathrm c}_{jS}},
\qquad
T^0_{jS}:=\frac{\sqrt n\,\hat b_{jS}}{\hat\sigma^{\mathrm c}_{jS}}.
\]
For a true null $H_{jS}:\theta_{jS}^\ast=0$, $T^0_{jS}=T_{jS}$. On the event that some
$\hat\nu_{jS}=0$ or $\hat\tau_j^2=0$, define the corresponding $\hat b_{jS}$,
$\hat\sigma^{\mathrm c}_{jS}$, $T_{jS}$, and $T^0_{jS}$ to be zero. This is only a
finite-sample convention: Assumptions~\ref{as:unif-rem} and
\ref{as:studentization} imply that the event has probability tending to zero.
The centering in $\hat u_{i,h}$ is immaterial pointwise. It provides the coordinatewise
variance estimator used here and will later allow us to estimate dependence across the full
grid.

\subsection{Conditions for simultaneous inference}
\label{sec:simul-conditions}

The three conditions below are those needed for the infeasible Gaussian approximation.
Assumption~\ref{as:unif-rem} makes the nuisance and cross-fitted baseline remainders
negligible uniformly over the grid. Assumption~\ref{as:hd-array} supplies the moment and
growth conditions for the high-dimensional Gaussian approximation, while
Assumption~\ref{as:studentization} ensures that the estimated coordinatewise scales are
uniformly accurate. The additional correlation-estimation condition needed for feasible
bootstrap inference is introduced only after Theorem~\ref{thm:uniform-gaussian}. Because the
factorial design is fixed and every $p_a$ is positive,
$\max_{S\in\cS,a\in\cA}|q_S(a)|<\infty$ automatically.

\begin{assumption}[Uniform nuisance and baseline remainders]
\label{as:unif-rem}
There are fixed constants $0<c_\tau<C_\tau<\infty$,
$0<c_\nu<C_\nu<\infty$ such that the following conditions hold.
\begin{enumerate}[leftmargin=1.8em,label=(U\arabic*)]
\item The number of folds $L$ is fixed; the fold partition is deterministic or independent of
the data; $n_\ell/n\to\pi_\ell\in(0,1)$; and each $\hat g^{(-\ell)}$ is
$\mathcal F_{-\ell}$-measurable. For every $(j,S)\in\cH_n$, the population quantities are
defined by \eqref{eq:population-score-decomposition} and
\eqref{eq:population-nodewise-decomposition}, and all corresponding projections are unique.
Moreover,
\[
c_\tau\le\min_{j\le p}\tau_j^2\le\max_{j\le p}\tau_j^2\le C_\tau,
\]
\[
c_\nu\le\min_{(j,S)\in\cH_n}\nu_{jS}
\le\max_{(j,S)\in\cH_n}\nu_{jS}\le C_\nu.
\]

\item There are deterministic sequences
$r_{\theta,n},r_{\gamma,n},r_{\varepsilon,n},r_{V,n},
 r_{\bar\varepsilon,n},r_{Z,n},r_{\tau,n},b_{Z,n},\lambda_{\gamma,n}\ge0$
such that
\[
\max_{S\in\cS}\|\hat\theta_S-\theta_S^\ast\|_1=O_p(r_{\theta,n}),
\qquad
\max_{j\le p}\|\hat\gamma_j-\gamma_j\|_1=O_p(r_{\gamma,n}),
\]
\[
\max_{S\in\cS}\|\mathbb P_n(\tilde Z\varepsilon_S)\|_\infty
=O_p(r_{\varepsilon,n}),
\quad
\max_{j\le p}|\mathbb P_nV_j|=O_p(r_{V,n}),
\quad
\max_{S\in\cS}|\mathbb P_n\varepsilon_S|=O_p(r_{\bar\varepsilon,n}),
\]
\[
\max_{j\le p}|\mathbb P_nZ_j|=O_p(r_{Z,n}),
\quad
\max_{j\le p}|\mathbb P_n(V_jZ_j)-\tau_j^2|=O_p(r_{\tau,n}),
\quad
\max_{j,k\le p}|\mathbb P_n(\tilde Z_j\tilde Z_k)|=O_p(b_{Z,n}),
\]
and $\max_{j\le p}\lambda_{j,n}\le\lambda_{\gamma,n}$.

\item The deterministic remainder budget satisfies
\begin{equation}\label{eq:uniform-rate-budget}
\sqrt{\ell_n}\left\{
\sqrt n\lambda_{\gamma,n}r_{\theta,n}
+\sqrt nr_{\gamma,n}r_{\varepsilon,n}
+\sqrt nr_{V,n}r_{\bar\varepsilon,n}
\right\}\longrightarrow0,
\end{equation}
and
\begin{equation}\label{eq:uniform-denominator-budget}
r_{\tau,n}+r_{V,n}r_{Z,n}+r_{\gamma,n}b_{Z,n}\longrightarrow0.
\end{equation}

\item With $\Delta_{g,\ell}=\hat g^{(-\ell)}-g$, define
\[
\Delta_{g,V,n}^2
:=\max_{1\le\ell\le L}\max_{j\le p}
\mathbb P_{n,\ell}(\hat V_j^2\Delta_{g,\ell}^2),
\qquad
B_{g,V,n}
:=\max_{1\le\ell\le L}\max_{i\in I_\ell}\max_{j\le p}
|\hat V_{ij}\Delta_{g,\ell}(X_i)|.
\]
Then
\begin{equation}\label{eq:uniform-baseline-budget}
\ell_n\Delta_{g,V,n}=o_p(1),
\qquad
\frac{\ell_n^{3/2}B_{g,V,n}}{\sqrt n}=o_p(1).
\end{equation}
\end{enumerate}
\end{assumption}

\begin{assumption}[High-dimensional influence array]
\label{as:hd-array}
There is a deterministic sequence $B_n\ge1$ such that, for $k=1,2$,
\begin{equation}\label{eq:cck-moments}
\max_{h\in\cH_n}\E|\xi_{i,h}|^{2+k}\le B_n^k,
\qquad
\max_{h\in\cH_n}\E\exp(|\xi_{i,h}|/B_n)\le2,
\end{equation}
and
\begin{equation}\label{eq:cck-growth}
\frac{B_n^2\ell_n^7}{n}\longrightarrow0.
\end{equation}
\end{assumption}

\begin{assumption}[Uniform studentization]
\label{as:studentization}
Let
\[
\mathcal E_{\nu,n}:=\left\{\min_{h\in\cH_n}\hat\nu_h>0\right\}.
\]
The estimated standard deviations are positive with probability approaching one,
\begin{equation}\label{eq:studentization-positivity}
\PP(\mathcal E_{\nu,n})\longrightarrow1,
\end{equation}
and, on $\mathcal E_{\nu,n}$, their uniform relative error satisfies
\begin{equation}\label{eq:studentization-relative-rate}
\ell_n\max_{h\in\cH_n}
\left|\frac{\hat\nu_h}{\nu_h}-1\right|=o_p(1).
\end{equation}
\end{assumption}

\subsection{Main results}
\label{sec:simul-main}

\begin{theorem}[Uniform linearization and Gaussian approximation]
\label{thm:uniform-gaussian}
Suppose Assumptions~\ref{as:rct}, \ref{as:unif-rem}, \ref{as:hd-array}, and
\ref{as:studentization} hold. For each $n$, let
$G_n=(G_{n,h})_{h\in\cH_n}$ be centered Gaussian with covariance $\Sigma_\xi$. Then:
\begin{enumerate}[leftmargin=1.6em,label=(\roman*)]
\item \emph{Uniform studentized linearization:}
\begin{equation}\label{eq:uniform-linearization}
\max_{h\in\cH_n}
\left|T_h-\frac1{\sqrt n}\sum_{i=1}^n\xi_{i,h}\right|
=o_p(\ell_n^{-1/2}).
\end{equation}

\item \emph{Gaussian approximation:}
\begin{equation}\label{eq:uniform-gaussian}
\sup_{t\in\R}\left|
\PP\!\left(\max_{h\in\cH_n}|T_h|\le t\right)
-\PP\!\left(\max_{h\in\cH_n}|G_{n,h}|\le t\right)
\right|\longrightarrow0.
\end{equation}
\end{enumerate}
\end{theorem}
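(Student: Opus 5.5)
The plan is to redo the exact decomposition \eqref{eq:exact-decomposition} from the proof of Theorem~\ref{thm:clt}, now uniformly over $h=(j,S)\in\cH_n$. We then transfer the oracle maximum to the Gaussian maximum through the CCK high-dimensional CLT combined with anti-concentration. For each $h$, multiplying the decomposition by $\sqrt n$ gives
\[
\sqrt n(\hat b_{jS}-\theta^\ast_{jS})\hat\tau_j^2
=\frac1{\sqrt n}\sum_i V_{ij}\varepsilon_{iS}+\mathrm{I}_h+\mathrm{II}_h+\mathrm{III}_h .
\]
Here $\mathrm{I}_h$ is the nodewise-KKT bias term, $\mathrm{II}_h$ is the oracle-replacement term from \eqref{eq:oracle-replacement}, and $\mathrm{III}_h=\sqrt n\,\mathbb P_n(\hat V_jR_S)$ is the baseline term.

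\textbf{Bounding the nuisance terms.} By \eqref{eq:nodewise-kkt} and (U2), $\max_h|\mathrm I_h|\le\sqrt n\lambda_{\gamma,n}\max_S\|\hat\theta_S-\theta^\ast_S\|_1=O_p(\sqrt n\lambda_{\gamma,n}r_{\theta,n})$. Likewise, $\max_h|\mathrm{II}_h|=O_p(\sqrt n r_{V,n}r_{\bar\varepsilon,n}+\sqrt n r_{\gamma,n}r_{\varepsilon,n})$. By \eqref{eq:uniform-rate-budget}, both bounds are $o_p(\ell_n^{-1/2})$.

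For the denominator, \eqref{eq:tauhat-expansion} together with (U2) and \eqref{eq:uniform-denominator-budget} gives $\max_j|\hat\tau_j^2-\tau_j^2|=o_p(1)$. Since $\tau_j^2\ge c_\tau$, the reciprocals $1/\hat\tau_j^2$ are uniformly bounded with probability approaching one.

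\textbf{Baseline term (main obstacle).} Lemma~\ref{lem:global} only gives a conditional variance, and a union bound over $d_n$ cells would lose a polynomial factor. Instead, conditional on $\mathcal G_\ell$ each fold sum $\sum_{i\in I_\ell}\hat V_{ij}R_{iS}$ is a sum of independent, mean-zero, bounded summands. Its variance proxy is $\kappa_2 n_\ell\mathbb P_{n,\ell}(\hat V_j^2\Delta_{g,\ell}^2)$ and each summand is bounded by $\max_{a,S}|q_S(a)|\,B_{g,V,n}$. Applying Bernstein's inequality conditionally and taking a union bound over the $d_n$ cells and $L$ folds gives
\[
\max_h|\mathrm{III}_h|=O_p\bigl(\Delta_{g,V,n}\sqrt{\ell_n}+B_{g,V,n}\ell_n/\sqrt n\bigr).
\]
Under \eqref{eq:uniform-baseline-budget} this is $o_p(\ell_n^{-1/2})$. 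Making this rigorous requires one conditional-to-unconditional step: we truncate on the $\mathcal G_\ell$-measurable events where $\Delta_{g,V,n}$ and $B_{g,V,n}$ are small, exactly as in the conditional Chebyshev argument of Theorem~\ref{thm:clt}.

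\textbf{Studentization.} Dividing by $\hat\tau_j^2\hat\nu_h/\hat\tau_j^2=\hat\nu_h$ rewrites $T_h$ as $\hat\nu_h^{-1}\{n^{-1/2}\sum_iu_{i,h}+o_p(\ell_n^{-1/2})\}$, working on $\mathcal E_{\nu,n}$. We write
\[
\frac{\nu_h}{\hat\nu_h}\cdot\frac1{\sqrt n}\sum_i\xi_{i,h}
=\frac1{\sqrt n}\sum_i\xi_{i,h}+\Bigl(\frac{\nu_h}{\hat\nu_h}-1\Bigr)\frac1{\sqrt n}\sum_i\xi_{i,h}.
\]
The factor $\max_h|\nu_h/\hat\nu_h-1|$ is $o_p(\ell_n^{-1})$ by \eqref{eq:studentization-relative-rate}. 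Assumption~\ref{as:hd-array} and a maximal inequality give $\max_h|n^{-1/2}\sum_i\xi_{i,h}|=O_p(\sqrt{\ell_n})$. The product is therefore $o_p(\ell_n^{-1/2})$, which proves (i).

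\textbf{Gaussian approximation.} For (ii), CCK (2017, Proposition 2.1) under \eqref{eq:cck-moments}–\eqref{eq:cck-growth}, applied to the vectors $(\xi_i,-\xi_i)$, gives Kolmogorov-distance convergence of $\max_h|n^{-1/2}\sum_i\xi_{i,h}|$ to $\max_h|G_{n,h}|$. Next, take $\epsilon_n=o(\ell_n^{-1/2})$ with $\PP(\text{linearization error}>\epsilon_n)\to0$. Nazarov's anti-concentration inequality for the unit-variance Gaussian $G_n$ bounds $\sup_t\PP(|\max_h|G_{n,h}|-t|\le\epsilon_n)$ by $C\epsilon_n\sqrt{\log d_n}\to0$. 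Combining this with the CCK bound yields \eqref{eq:uniform-gaussian}.
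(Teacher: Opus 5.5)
Your proposal is correct and follows essentially the same route as the paper: the exact decomposition applied uniformly in $h$, KKT and (U2)--(U3) control of the non-baseline terms, a foldwise conditional Bernstein bound with a union bound over the $d_n$ cells for the baseline term, the studentization identity together with $\max_h|n^{-1/2}\sum_i\xi_{i,h}|=O_p(\sqrt{\ell_n})$, and CCK's high-dimensional CLT combined with Gaussian anti-concentration. One small repair is needed in the baseline step. The conditional Bernstein bound and any truncation must use the fold-specific quantities $\max_j\mathbb P_{n,\ell}(\hat V_j^2\Delta_{g,\ell}^2)$ and $\max_{i\in I_\ell,\,j}|\hat V_{ij}\Delta_{g,\ell}(X_i)|$, not $\Delta_{g,V,n}$ and $B_{g,V,n}$: these maxima over all folds involve $\hat g^{(-\ell')}$ for $\ell'\neq\ell$, which depends on $(A_i,Y_i)_{i\in I_\ell}$ and is therefore not $\mathcal G_\ell$-measurable. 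The paper uses the fold-specific quantities and avoids truncation altogether, because the conditional tail bound $2d_ne^{-2\ell_n}$ is deterministic.
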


\begin{proof}
For $h=(j,S)\in\cH_n$, define the unstandardized oracle sum and its standardized version by
\[
L_{n,h}:=\frac1{\sqrt n}\sum_{i=1}^n u_{i,h},
\qquad
S_{n,h}:=\frac{L_{n,h}}{\nu_h}
=\frac1{\sqrt n}\sum_{i=1}^n\xi_{i,h}.
\]
Let
\[
S_n:=(S_{n,h})_{h\in\cH_n},
\qquad
T:=(T_h)_{h\in\cH_n}
\]
denote the corresponding oracle and studentized vectors.
We first derive a uniform expansion of the numerator of $T_h$, then control its
studentization, and finally transfer the Gaussian approximation from $S_n$ to $T$.

\emph{Step 1: uniform control of the nodewise normalizers.}
Write
$\delta_{\gamma,j}:=\hat\gamma_j-\gamma_j$,
$\bar V_j:=\mathbb P_nV_j$, and $\bar Z:=\mathbb P_nZ$.
The algebra in \eqref{eq:tauhat-expansion}, applied to every $j\le p$, gives
\[
\hat\tau_j^2-\tau_j^2
=\{\mathbb P_n(V_jZ_j)-\tau_j^2\}
-\bar V_j\bar Z_j
-\delta_{\gamma,j}^{\top}\mathbb P_n(\tilde Z_{-j}\tilde Z_j).
\]
Hence Assumption~\ref{as:unif-rem}(U2) and H\"older's inequality imply
\begin{align}
\max_{j\le p}|\hat\tau_j^2-\tau_j^2|
&\le
\max_{j\le p}|\mathbb P_n(V_jZ_j)-\tau_j^2|
+\max_{j\le p}|\bar V_j|\max_{j\le p}|\bar Z_j| \notag\\
&\quad+
\max_{j\le p}\|\delta_{\gamma,j}\|_1
\max_{j,k\le p}|\mathbb P_n(\tilde Z_j\tilde Z_k)| \notag\\
&=O_p(r_{\tau,n}+r_{V,n}r_{Z,n}+r_{\gamma,n}b_{Z,n})
=o_p(1),
\label{eq:uniform-tau-consistency}
\end{align}
where the last equality is (U3). Indeed, on the event
\[
\max_{j\le p}|\hat\tau_j^2-\tau_j^2|<c_\tau/2,
\]
the lower bound $\min_{j\le p}\tau_j^2\ge c_\tau$ gives
\[
\min_{j\le p}\hat\tau_j^2
\ge \min_{j\le p}\tau_j^2
-\max_{j\le p}|\hat\tau_j^2-\tau_j^2|
\ge c_\tau/2.
\]
The probability of the first event tends to one by
\eqref{eq:uniform-tau-consistency}; hence
\begin{equation}
\PP\!\left(\min_{j\le p}\hat\tau_j^2\ge c_\tau/2\right)\longrightarrow1.
\label{eq:uniform-tau-positive}
\end{equation}

\emph{Step 2: uniform expansion of the debiased numerator.}
Let $\delta_{\theta,S}:=\hat\theta_S-\theta_S^\ast$.
The exact decompositions \eqref{eq:exact-decomposition} and
\eqref{eq:oracle-replacement} hold for every $(j,S)$. Therefore
\begin{equation}
\sqrt n\,\hat\tau_j^2(\hat b_{jS}-\theta_{jS}^\ast)
=L_{n,h}+\mathcal R_{n,h},
\label{eq:uniform-exact-decomposition}
\end{equation}
where
\begin{align*}
\mathcal R_{n,h}
:={}&-\sqrt n\sum_{k\ne j}
\mathbb P_n(\hat V_j\tilde Z_k)\delta_{\theta,kS}
-\sqrt n\,\bar V_j\mathbb P_n\varepsilon_S\\
&-\sqrt n\,\delta_{\gamma,j}^{\top}
\mathbb P_n(\tilde Z_{-j}\varepsilon_S)
+\sqrt n\,\mathbb P_n(\hat V_jR_S).
\end{align*}
The nodewise KKT inequality \eqref{eq:nodewise-kkt} and (U2) yield, uniformly over
$h=(j,S)$,
\begin{align}
&\max_{h\in\cH_n}
\left|\mathcal R_{n,h}-\sqrt n\,\mathbb P_n(\hat V_jR_S)\right| \notag\\
&\quad\le
\sqrt n\max_{j\le p}\lambda_{j,n}
\max_{S\in\cS}\|\delta_{\theta,S}\|_1
+\sqrt n\max_{j\le p}|\bar V_j|
\max_{S\in\cS}|\mathbb P_n\varepsilon_S| \notag\\
&\qquad+
\sqrt n\max_{j\le p}\|\delta_{\gamma,j}\|_1
\max_{S\in\cS}\|\mathbb P_n(\tilde Z\varepsilon_S)\|_\infty \notag\\
&\quad=O_p\!\left(
\sqrt n\lambda_{\gamma,n}r_{\theta,n}
+\sqrt nr_{V,n}r_{\bar\varepsilon,n}
+\sqrt nr_{\gamma,n}r_{\varepsilon,n}
\right)
=o_p(\ell_n^{-1/2}),
\label{eq:uniform-nonbaseline-remainder}
\end{align}
where the last equality follows from \eqref{eq:uniform-rate-budget}.

It remains to establish the same order for the baseline term. For every fold $\ell$ and
$h=(j,S)$, set
\[
C_{\ell,h}:=\frac1{\sqrt n}\sum_{i\in I_\ell}\hat V_{ij}R_{iS},
\]
and define the $\mathcal G_\ell$-measurable quantities
\[
\Delta_{\ell,n}^2:=\max_{j\le p}
\mathbb P_{n,\ell}(\hat V_j^2\Delta_{g,\ell}^2),
\qquad
H_{\ell,n}:=\max_{i\in I_\ell}\max_{j\le p}
|\hat V_{ij}\Delta_{g,\ell}(X_i)|.
\]
Lemma~\ref{lem:global} shows that, conditional on $\mathcal G_\ell$, the summands in
$C_{\ell,h}$ are independent and centered. To verify the concentration bound explicitly, set
\[
X_{i,\ell,h}:=\frac{\hat V_{ij}R_{iS}}{\sqrt n},
\qquad i\in I_\ell,\quad h=(j,S),
\]
so that $C_{\ell,h}=\sum_{i\in I_\ell}X_{i,\ell,h}$. Because $\hat V_j$ and
$\Delta_{g,\ell}(X_i)$ are $\mathcal G_\ell$-measurable, Lemma~\ref{lem:global} gives
\[
\E(X_{i,\ell,h}\mid\mathcal G_\ell)=0
\]
and
\begin{align*}
\sum_{i\in I_\ell}
\E(X_{i,\ell,h}^2\mid\mathcal G_\ell)
&=\frac{\kappa_2}{n}
\sum_{i\in I_\ell}\hat V_{ij}^2\Delta_{g,\ell}(X_i)^2\\
&=\kappa_2\frac{n_\ell}{n}
\mathbb P_{n,\ell}(\hat V_j^2\Delta_{g,\ell}^2)
\le \kappa_2\Delta_{\ell,n}^2.
\end{align*}
Moreover, since the factorial design and the allocation probabilities are fixed,
\[
Q:=\max_{S\in\cS}\max_{a\in\cA}|q_S(a)|
=2^{-(K-1)}\max_{a\in\cA}p_a^{-1}<\infty.
\]
The identity $R_{iS}=-\Delta_{g,\ell}(X_i)q_S(A_i)$ therefore implies the conditional envelope
\[
|X_{i,\ell,h}|
\le \frac{QH_{\ell,n}}{\sqrt n}
\qquad\text{for every }i\in I_\ell\text{ and }h\in\cH_n.
\]
Applying the bounded-variable Bernstein inequality conditionally on $\mathcal G_\ell$
(equivalently, applying the ordinary inequality under the regular conditional law; see
\citealp[Theorem~2.10]{BoucheronLugosiMassart2013} for the bounded-variable inequality and
\citealp[Lemmas~2.2.9--2.2.10]{vanDerVaartWellner1996} for that inequality and its maximal
consequence) yields, for every $x>0$ and fixed
$h\in\cH_n$,
\begin{align*}
\PP\!\left[
|C_{\ell,h}|>
\sqrt{2\kappa_2}\,\Delta_{\ell,n}\sqrt{x}
+\frac{2QH_{\ell,n}x}{3\sqrt n}
\,\middle|\,\mathcal G_\ell\right]
\le 2e^{-x}.
\end{align*}
Taking $x=2\ell_n$ and then applying the union bound over the
$d_n=|\cH_n|$ cells gives a fixed design-dependent constant
$C=\max\{2\sqrt{\kappa_2},4Q/3\}<\infty$ such that
\begin{equation}
\PP\!\left[
\max_{h\in\cH_n}|C_{\ell,h}|>
C\left\{\Delta_{\ell,n}\sqrt{\ell_n}
+\frac{H_{\ell,n}\ell_n}{\sqrt n}\right\}
\,\middle|\,\mathcal G_\ell\right]
\le 2d_n e^{-2\ell_n}.
\label{eq:uniform-baseline-bernstein}
\end{equation}
Here independence across cells is not required for the union bound. In addition,
$\ell_n=\log(2nd_n)$ implies
$2d_ne^{-2\ell_n}=1/(2n^2d_n)\to0$. Finally, the fold partition gives the exact identity
\[
\sqrt n\,\mathbb P_n(\hat V_jR_S)
=\frac1{\sqrt n}\sum_{i=1}^n\hat V_{ij}R_{iS}
=\sum_{\ell=1}^LC_{\ell,h}.
\]
Because $L$ is fixed, summing the foldwise bounds and using
$\Delta_{g,V,n}=\max_\ell\Delta_{\ell,n}$ and
$B_{g,V,n}=\max_\ell H_{\ell,n}$ yields
\begin{align}
\max_{h\in\cH_n}
\left|\sqrt n\,\mathbb P_n(\hat V_jR_S)\right|
&=O_p\!\left(
\Delta_{g,V,n}\sqrt{\ell_n}
+\frac{B_{g,V,n}\ell_n}{\sqrt n}
\right) \notag\\
&=o_p(\ell_n^{-1/2})
\label{eq:uniform-baseline-remainder}
\end{align}
by \eqref{eq:uniform-baseline-budget}. Combining
\eqref{eq:uniform-nonbaseline-remainder} and
\eqref{eq:uniform-baseline-remainder} gives
\begin{equation}
\max_{h\in\cH_n}|\mathcal R_{n,h}|=o_p(\ell_n^{-1/2}).
\label{eq:uniform-numerator-remainder}
\end{equation}

\emph{Step 3: Gaussian approximation and size of the oracle maximum.}
The vectors $\xi_i=(\xi_{i,h})_{h\in\cH_n}$ are independent and centered, every coordinate
has variance one, and hence condition (M.1) of Proposition~2.1 in \citet{CCK2017} holds
with $b=1$. Assumption~\ref{as:hd-array} supplies conditions (M.2) and (E.1) of that
proposition. Applying it in dimension $d_n$ gives
\begin{equation}
\rho_n:=\sup_{A\in\mathcal R_{d_n}}
\left|\PP(S_n\in A)-\PP(G_n\in A)\right|
\le C\left(\frac{B_n^2\ell_n^7}{n}\right)^{1/6}=o(1),
\label{eq:oracle-hyperrectangle-clt}
\end{equation}
where $\mathcal R_{d_n}$ is the class of hyperrectangles in $\R^{d_n}$.
In particular, the Gaussian union bound and
\eqref{eq:oracle-hyperrectangle-clt} imply
\begin{equation}
\max_{h\in\cH_n}|S_{n,h}|=O_p(\sqrt{\ell_n}).
\label{eq:oracle-maximum-order}
\end{equation}

\emph{Step 4: uniform studentization.}
Let
\[
a_n:=\max_{h\in\cH_n}
\left|\frac{\hat\nu_h}{\nu_h}-1\right|.
\]
Assumption~\ref{as:studentization} gives $a_n=o_p(\ell_n^{-1})$ on
$\mathcal E_{\nu,n}$ and $\PP(\mathcal E_{\nu,n})\to1$. Define
\[
\Omega_n
:=\mathcal E_{\nu,n}
\cap\{a_n\le1/2\}
\cap\left\{\min_{j\le p}\hat\tau_j^2\ge c_\tau/2\right\}.
\]
Because $a_n=o_p(1)$ on $\mathcal E_{\nu,n}$ and
\eqref{eq:uniform-tau-positive} holds, $\PP(\Omega_n)\to1$. On $\Omega_n$, the lower bound
$\min_h\nu_h\ge c_\nu$ in (U1) gives
\[
\min_{h\in\cH_n}\hat\nu_h
\ge (1-a_n)\min_{h\in\cH_n}\nu_h
\ge c_\nu/2,
\]
and, for every $h\in\cH_n$,
\[
\left|\frac{\nu_h}{\hat\nu_h}-1\right|
=\frac{|1-\hat\nu_h/\nu_h|}{\hat\nu_h/\nu_h}
\le\frac{a_n}{1-a_n}
\le2a_n.
\]
Consequently,
\[
\max_{h\in\cH_n}\left|\frac{\nu_h}{\hat\nu_h}-1\right|
\le 2a_n.
\]
Both $\hat\tau_j^2$ and $\hat\nu_h$ are therefore positive on $\Omega_n$, so the finite-sample
zero convention is inactive and all statistics have their usual definitions. For
$h=(j,S)$, the factor $\hat\tau_j^2$ then cancels exactly:
\[
T_h
=\frac{\sqrt n(\hat b_{jS}-\theta_{jS}^\ast)}
{\hat\nu_h/\hat\tau_j^2}
=\frac{\sqrt n\,\hat\tau_j^2
(\hat b_{jS}-\theta_{jS}^\ast)}{\hat\nu_h}.
\]
Substituting \eqref{eq:uniform-exact-decomposition} and using
$L_{n,h}=\nu_hS_{n,h}$ gives, on $\Omega_n$, the exact identity
\[
T_h-S_{n,h}
=\frac{\mathcal R_{n,h}}{\hat\nu_h}
+\left(\frac{\nu_h}{\hat\nu_h}-1\right)S_{n,h}.
\]
Equations \eqref{eq:uniform-numerator-remainder} and
\eqref{eq:oracle-maximum-order} now imply
\[
\max_{h\in\cH_n}|T_h-S_{n,h}|
=o_p(\ell_n^{-1/2})
+o_p(\ell_n^{-1})O_p(\sqrt{\ell_n})
=o_p(\ell_n^{-1/2}),
\]
which proves part~(i).

\emph{Step 5: transfer to the Gaussian maximum.}
Let
$M_n^T:=\max_h|T_h|$, $M_n^S:=\max_h|S_{n,h}|$, and
$M_n^G:=\max_h|G_{n,h}|$. Write
\[
D_n:=\max_{h\in\cH_n}|T_h-S_{n,h}|.
\]
Part~(i) states that $\sqrt{\ell_n}D_n\to_p0$. Hence, by a deterministic
diagonal argument, there is a positive sequence
$\delta_n=o(\ell_n^{-1/2})$ such that
\[
\eta_n:=\PP(D_n>\delta_n)\longrightarrow0.
\]
Define $E_n:=\{D_n\le\delta_n\}$. On $E_n$, the reverse triangle inequality gives
\[
|M_n^T-M_n^S|
\le \max_{h\in\cH_n}\bigl||T_h|-|S_{n,h}|\bigr|
\le D_n
\le\delta_n.
\]
Consequently, for every $t\in\R$,
\[
\{M_n^S\le t-\delta_n\}\cap E_n
\subseteq \{M_n^T\le t\},
\qquad
\{M_n^T\le t\}\cap E_n
\subseteq \{M_n^S\le t+\delta_n\}.
\]
Taking probabilities and using $\PP(E_n^c)=\eta_n$ yields the distribution-function
sandwich
\begin{equation}
\PP(M_n^S\le t-\delta_n)-\eta_n
\le \PP(M_n^T\le t)
\le \PP(M_n^S\le t+\delta_n)+\eta_n.
\label{eq:maximum-cdf-sandwich}
\end{equation}

We next replace the oracle maximum by its Gaussian analogue. For $t\ge0$,
$\{M_n^S\le t\}=\{S_n\in[-t,t]^{d_n}\}$, so
\eqref{eq:oracle-hyperrectangle-clt} gives
\[
\sup_{t\in\R}|\PP(M_n^S\le t)-\PP(M_n^G\le t)|\le\rho_n=o(1).
\]
The same bound is automatic for $t<0$, since both maxima are nonnegative. Combining this
uniform Gaussian approximation with \eqref{eq:maximum-cdf-sandwich}, and writing
$F_n^T(t):=\PP(M_n^T\le t)$ and $F_n^G(t):=\PP(M_n^G\le t)$, gives
\begin{equation}
F_n^G(t-\delta_n)-\rho_n-\eta_n
\le F_n^T(t)
\le F_n^G(t+\delta_n)+\rho_n+\eta_n
\qquad (t\in\R).
\label{eq:maximum-gaussian-sandwich}
\end{equation}

It remains to show that shifting the Gaussian threshold by $\delta_n$ has an asymptotically
negligible effect. Write $M_n^G$ as the maximum of the $2d_n$-dimensional Gaussian vector
$(G_n^\top,-G_n^\top)^\top$. Every coordinate has variance one, so the Gaussian
anti-concentration inequality in Corollary~1 of \citet{CCKComparison2013} gives
\[
\alpha_n:=\sup_{t\in\R}\PP(|M_n^G-t|\le\delta_n)
\le C\delta_n\sqrt{1\vee\log(2d_n)}.
\]
For all sufficiently large $n$,
$1\vee\log(2d_n)\le\ell_n=\log(2nd_n)$; hence
\[
\alpha_n\le C\delta_n\sqrt{\ell_n}=o(1).
\]
Moreover, for every $t\in\R$,
\[
F_n^G(t+\delta_n)-F_n^G(t)\le\alpha_n,
\qquad
F_n^G(t)-F_n^G(t-\delta_n)\le\alpha_n.
\]
Subtracting $F_n^G(t)$ from the upper bound in
\eqref{eq:maximum-gaussian-sandwich} gives
\[
F_n^T(t)-F_n^G(t)
\le F_n^G(t+\delta_n)-F_n^G(t)+\rho_n+\eta_n
\le\alpha_n+\rho_n+\eta_n.
\]
Similarly, the lower bound in \eqref{eq:maximum-gaussian-sandwich} gives
\[
F_n^G(t)-F_n^T(t)
\le F_n^G(t)-F_n^G(t-\delta_n)+\rho_n+\eta_n
\le\alpha_n+\rho_n+\eta_n.
\]
Thus
\[
\sup_{t\in\R}|F_n^T(t)-F_n^G(t)|
\le \rho_n+\eta_n+\alpha_n\longrightarrow0.
\]
Equivalently,
\[
\sup_{t\in\R}|\PP(M_n^T\le t)-\PP(M_n^G\le t)|\longrightarrow0,
\]
which is part~(ii).
\end{proof}

\paragraph{Interpretation of Theorem~\ref{thm:uniform-gaussian}.}
Part (i) is the central extension of the pointwise result: it replaces the fixed-cell
$o_p(1)$ remainder in Theorem~\ref{thm:clt} by a representation that holds over all
$d_n=pM$ cells with error $o_p(\ell_n^{-1/2})$. This sharper scale allows the linearization
to survive the anti-concentration of a growing maximum. Part (ii) then approximates the
distribution of that maximum by a Gaussian vector without imposing a structural restriction
on its correlation matrix $\Sigma_\xi$. This Gaussian approximation is not yet operational,
however, because $\Sigma_\xi$ is unknown.

\paragraph{Feasible approximation by the multiplier bootstrap.}
Following the multiplier-bootstrap theory for high-dimensional maxima in
\citet{CCK2013,CCK2017}, we estimate the distribution of the Gaussian maximum using the
centered influence coordinates introduced in Section~\ref{sec:simul-statistics}.  Let
\[
\mathcal D_n
:=\sigma\!\left\{
(X_i,A_i,Y_i)_{i=1}^n,
I_1,\ldots,I_L,
(\hat g^{(-\ell)})_{\ell=1}^L,
(\hat\theta_S)_{S\in\cS},
(\hat\gamma_j)_{j=1}^p
\right\}.
\]
Thus $\mathcal D_n$ contains the observed sample, the fold partition, and all fitted quantities
used to construct the estimated influence coordinates.  On an extension of the probability
space, generate the multipliers and, on $\mathcal E_{\nu,n}$, define
\begin{equation}\label{eq:multiplier-bootstrap}
\begin{gathered}
(e_1,\ldots,e_n)\mid\mathcal D_n\sim N(0,I_n),\\
W_h^\flat:=\frac1{\sqrt n}\sum_{i=1}^n
e_i\frac{\hat u_{i,h}}{\hat\nu_h},
\qquad h\in\cH_n.
\end{gathered}
\end{equation}
On $\mathcal E_{\nu,n}^c$, set $W_h^\flat=0$ for every $h$; this arbitrary convention is
asymptotically irrelevant by \eqref{eq:studentization-positivity}.
For subsequent conditional statements, write
\[
\PP^\flat(\,\cdot\,):=\PP(\,\cdot\mid\mathcal D_n),
\qquad
\E^\flat(\,\cdot\,):=\E(\,\cdot\mid\mathcal D_n),
\qquad
\operatorname{Cov}^\flat(\,\cdot\,,\,\cdot\,)
:=\operatorname{Cov}(\,\cdot\,,\,\cdot\mid\mathcal D_n).
\]
For every nonempty
$H\subseteq\cH_n$ and $\alpha\in(0,1)$, let
\begin{equation}\label{eq:bootstrap-critical-value}
c_{1-\alpha}^\flat(H)
:=\inf\left\{t\in\R:
\PP^\flat\!\left(\max_{h\in H}|W_h^\flat|\le t\right)\ge1-\alpha\right\},
\qquad
c_{1-\alpha}^\flat:=c_{1-\alpha}^\flat(\cH_n).
\end{equation}
The following condition requires the conditional correlation matrix of
$W^\flat=(W_h^\flat)_{h\in\cH_n}$ to estimate its population counterpart uniformly.

\begin{assumption}[Uniform estimated-score correlation]
\label{as:score-cov}
On the event $\mathcal E_{\nu,n}$ from Assumption~\ref{as:studentization}, define
\[
\hat\Sigma_{h,h'}
:=\mathbb P_n\!\left(\frac{\hat u_h\hat u_{h'}}{\hat\nu_h\hat\nu_{h'}}\right),
\qquad
\Delta_{\Sigma,n}
:=\max_{h,h'\in\cH_n}|\hat\Sigma_{h,h'}-\Sigma_{\xi,h,h'}|.
\]
On this event, the uniform correlation error satisfies
\begin{equation}\label{eq:bootstrap-correlation-rate}
\ell_n^2\Delta_{\Sigma,n}=o_p(1).
\end{equation}
\end{assumption}

\begin{theorem}[Multiplier-bootstrap approximation and simultaneous confidence bands]
\label{thm:bootstrap-band}
Under the assumptions of Theorem~\ref{thm:uniform-gaussian} and
Assumption~\ref{as:score-cov}, the following conclusions hold:
\begin{enumerate}[leftmargin=1.6em,label=(\roman*)]
\item \emph{Conditional multiplier-bootstrap approximation:}
\begin{equation}\label{eq:uniform-bootstrap}
\sup_{t\in\R}\left|
\PP\!\left(\max_{h\in\cH_n}|T_h|\le t\right)
-\PP^\flat\!\left(\max_{h\in\cH_n}|W_h^\flat|\le t\right)
\right|\longrightarrow_p0.
\end{equation}

\item \emph{Simultaneous confidence band:} For every fixed $\alpha\in(0,1)$,
\begin{equation}\label{eq:simultaneous-coverage}
\PP\!\left\{
\theta_{jS}^\ast\in
\left[\hat b_{jS}\pm c_{1-\alpha}^\flat
\frac{\hat\sigma^{\mathrm c}_{jS}}{\sqrt n}\right]
\text{ for every }(j,S)\in\cH_n
\right\}\longrightarrow1-\alpha.
\end{equation}
\end{enumerate}
\end{theorem}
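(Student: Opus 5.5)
The plan is to compare the conditional law of $\max_h|W_h^\flat|$ with the law of the Gaussian maximum $M_n^G=\max_h|G_{n,h}|$ from Theorem~\ref{thm:uniform-gaussian}(ii), and then combine the two approximations with the triangle inequality.

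\emph{Step 1: conditional Gaussianity.} On $\mathcal E_{\nu,n}$, conditionally on $\mathcal D_n$, the vector $W^\flat$ is exactly centered Gaussian. Its covariance is
\[
\operatorname{Cov}^\flat(W_h^\flat,W_{h'}^\flat)=\mathbb P_n\{\hat u_h\hat u_{h'}/(\hat\nu_h\hat\nu_{h'})\}=\hat\Sigma_{h,h'},
\]
and its diagonal equals one because $\hat\nu_h^2=\mathbb P_n\hat u_h^2$.

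\emph{Step 2: Gaussian comparison.} We would invoke the Gaussian-to-Gaussian comparison inequality of \citet{CCKComparison2013} (Theorem~2 / Comparison for maxima). Applying it to the $2d_n$-dimensional vectors $(G_n^\top,-G_n^\top)^\top$ and $(W^{\flat\top},-W^{\flat\top})^\top$ gives
\[
\sup_t\bigl|\PP^\flat(\max_h|W_h^\flat|\le t)-\PP(M_n^G\le t)\bigr|\le C\,\Delta_{\Sigma,n}^{1/3}\{1\vee\log(2d_n/\Delta_{\Sigma,n})\}^{2/3}.
\]
The main obstacle is making the right-hand side $o_p(1)$ from Assumption~\ref{as:score-cov}. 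Write $\Delta_{\Sigma,n}=\epsilon_n\ell_n^{-2}$ with $\epsilon_n=o_p(1)$. Then the bound is of order $\epsilon_n^{1/3}\ell_n^{-2/3}\{\ell_n+\log(\ell_n^2/\epsilon_n)\}^{2/3}$. This is $\lesssim\epsilon_n^{1/3}+\epsilon_n^{1/3}\log(1/\epsilon_n)^{2/3}+o(1)=o_p(1)$, since $x^{1/3}\log(1/x)^{2/3}\to0$ as $x\downarrow0$. Some care is needed when $\Delta_{\Sigma,n}=0$; there the comparison bound trivially vanishes. Since $\PP(\mathcal E_{\nu,n})\to1$, the convention $W^\flat=0$ off this event only affects an event of vanishing probability. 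So the conditional distance is $o_p(1)$.

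\emph{Step 3: part (i).} Combining Step 2 with Theorem~\ref{thm:uniform-gaussian}(ii), which is a deterministic $o(1)$ statement, via the triangle inequality yields \eqref{eq:uniform-bootstrap}.

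\emph{Step 4: part (ii).} The coverage event in \eqref{eq:simultaneous-coverage} equals $\{\max_h|T_h|\le c^\flat_{1-\alpha}\}$ on the event where all $\hat\sigma^{\mathrm c}_{jS}>0$, which has probability tending to one (established in Step~4 of the proof of Theorem~\ref{thm:uniform-gaussian}). We would use the standard quantile argument of \citet{CCK2013}.

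First, by Step 2, on an event of probability $\to1$ the conditional distance $\pi_n:=\sup_t|\PP^\flat(\max_h|W_h^\flat|\le t)-F_n^G(t)|\le\varsigma_n$ for a deterministic $\varsigma_n\to0$. On that event, the definition \eqref{eq:bootstrap-critical-value} sandwiches $c^\flat_{1-\alpha}$ between the Gaussian quantiles $c^G_{1-\alpha-\varsigma_n}$ and $c^G_{1-\alpha+\varsigma_n}$ of $M_n^G$.

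Then $\PP(\max_h|T_h|\le c^\flat_{1-\alpha})$ lies between $F_n^T(c^G_{1-\alpha\mp\varsigma_n})\mp o(1)$. By Theorem~\ref{thm:uniform-gaussian}(ii) these equal $F_n^G(c^G_{1-\alpha\mp\varsigma_n})+o(1)$. The function $F_n^G$ is continuous, because $M_n^G$ has unit-variance coordinates and anti-concentration (Corollary~1 of \citet{CCKComparison2013}) rules out atoms. Hence $F_n^G(c^G_{\beta})=\beta$ exactly, and both bounds converge to $1-\alpha$.
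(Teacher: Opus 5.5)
Your proposal is correct and follows essentially the same route as the paper: exact conditional Gaussianity of $W^\flat$ with unit-diagonal covariance $\hat\Sigma$, the comparison bound of \citet{CCKComparison2013} applied to the $2d_n$-dimensional augmented vectors (made $o_p(1)$ via $\ell_n^2\Delta_{\Sigma,n}=o_p(1)$), the triangle inequality with Theorem~\ref{thm:uniform-gaussian}(ii) for part~(i), and bracketing of $c^\flat_{1-\alpha}$ between Gaussian quantiles for part~(ii). There are only two cosmetic differences: the paper gets continuity of $F_n^G$ from a union bound over the standard normal coordinates rather than from anti-concentration, and it explicitly takes $n$ large enough that $\varsigma_n<\min(\alpha,1-\alpha)$, so that the bracketing quantile levels lie in $(0,1)$.
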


\begin{proof}
Let
\[
M_n^T:=\max_{h\in\cH_n}|T_h|,
\qquad
M_n^G:=\max_{h\in\cH_n}|G_{n,h}|,
\qquad
M_n^\flat:=\max_{h\in\cH_n}|W_h^\flat|.
\]
We first compare the conditional bootstrap law with the Gaussian law in
Theorem~\ref{thm:uniform-gaussian}, and then use deterministic Gaussian quantiles to handle
the data-dependent bootstrap critical value.

\emph{Step 1: conditional Gaussian comparison.}
On $\mathcal E_{\nu,n}$, equation~\eqref{eq:multiplier-bootstrap} shows that, conditional on
$\mathcal D_n$, $W^\flat$ is a centered Gaussian vector. Its conditional covariance satisfies
\begin{align*}
\operatorname{Cov}^\flat(W_h^\flat,W_{h'}^\flat)
&=\frac1n\sum_{i=1}^n
\frac{\hat u_{i,h}\hat u_{i,h'}}{\hat\nu_h\hat\nu_{h'}}
=\hat\Sigma_{h,h'},\\
\hat\Sigma_{h,h}
&=\frac{\mathbb P_n\hat u_h^2}{\hat\nu_h^2}=1.
\end{align*}
Since $G_n$ is centered Gaussian with covariance matrix $\Sigma_\xi$ and unit coordinate
variances, the largest entrywise discrepancy between the covariance matrices of $W^\flat$ and
$G_n$ is $\Delta_{\Sigma,n}$.

To treat absolute maxima, augment the two vectors as
\[
\widetilde W^\flat:=((W^\flat)^\top,-(W^\flat)^\top)^\top,
\qquad
\widetilde G_n:=(G_n^\top,-G_n^\top)^\top.
\]
Their coordinatewise maxima are $M_n^\flat$ and $M_n^G$, respectively, and the largest
entrywise difference between their covariance matrices is still
$\Delta_{\Sigma,n}$. On the event
$\mathcal E_{\nu,n}\cap\{0<\Delta_{\Sigma,n}\le1\}$, Theorem~2 of
\citet{CCKComparison2013}, applied in dimension $2d_n$, therefore gives
\begin{equation}
\kappa_n
:=\sup_{t\in\R}
\left|\PP^\flat(M_n^\flat\le t)-\PP(M_n^G\le t)\right|
\le
C\Delta_{\Sigma,n}^{1/3}
\left\{1\vee\log\left(\frac{2d_n}{\Delta_{\Sigma,n}}\right)\right\}^{2/3}.
\label{eq:bootstrap-gaussian-comparison}
\end{equation}
When $\Delta_{\Sigma,n}=0$, the two Gaussian laws coincide and we set the right-hand side to
zero. The event
$\mathcal E_{\nu,n}\cap\{\Delta_{\Sigma,n}>1\}$ has probability tending to zero by
Assumption~\ref{as:score-cov}; all statements involving $\Delta_{\Sigma,n}$ below are
understood on $\mathcal E_{\nu,n}$.

We verify explicitly that the right-hand side of
\eqref{eq:bootstrap-gaussian-comparison} vanishes. For
$0<\Delta_{\Sigma,n}\le1$,
\begin{align*}
&\Delta_{\Sigma,n}
\left\{1\vee\log\left(\frac{2d_n}{\Delta_{\Sigma,n}}\right)\right\}^{2}\\
&\qquad\le
C\left\{
\Delta_{\Sigma,n}\ell_n^2
+\Delta_{\Sigma,n}\log^2(1/\Delta_{\Sigma,n})
\right\}=o_p(1).
\end{align*}
Indeed, $\log(2d_n)\le\ell_n$, the first term is $o_p(1)$ by
\eqref{eq:bootstrap-correlation-rate}, and the second is $o_p(1)$ because
$\Delta_{\Sigma,n}\to_p0$ and $x\log^2(1/x)\to0$ as $x\downarrow0$. Together with
$\PP(\mathcal E_{\nu,n})\to1$, this proves
\begin{equation}
\kappa_n=o_p(1).
\label{eq:conditional-bootstrap-gaussian}
\end{equation}

Theorem~\ref{thm:uniform-gaussian}(ii) gives the nonrandom bound
\[
r_{G,n}:=\sup_{t\in\R}
\left|\PP(M_n^T\le t)-\PP(M_n^G\le t)\right|\longrightarrow0.
\]
Hence the triangle inequality and \eqref{eq:conditional-bootstrap-gaussian} yield
\[
\sup_{t\in\R}
\left|\PP(M_n^T\le t)-\PP^\flat(M_n^\flat\le t)\right|
\le r_{G,n}+\kappa_n\longrightarrow_p0,
\]
which proves part~(i).

\emph{Step 2: bootstrap quantile bracketing.}
Write
\[
F_n^G(t):=\PP(M_n^G\le t),
\qquad
F_n^\flat(t):=\PP^\flat(M_n^\flat\le t),
\]
and, for $u\in(0,1)$, define the deterministic Gaussian quantile
\[
q_n^G(u):=\inf\{t\in\R:F_n^G(t)\ge u\}.
\]
The distribution of $M_n^G$ is continuous: for every $t$,
\[
\PP(M_n^G=t)
\le\sum_{h\in\cH_n}\PP(|G_{n,h}|=t)=0,
\]
because each $G_{n,h}$ is standard normal. Consequently,
$F_n^G\{q_n^G(u)\}=u$.

Since \eqref{eq:conditional-bootstrap-gaussian} holds, there is a deterministic sequence
$\zeta_n\downarrow0$ such that
\[
\PP(\kappa_n>\zeta_n)\longrightarrow0.
\]
For the fixed $\alpha\in(0,1)$, take $n$ sufficiently large that
$\zeta_n<\min(\alpha,1-\alpha)$. Define the high-probability event
\[
\mathcal K_n
:=\mathcal E_{\nu,n}
\cap\left\{\min_{j\le p}\hat\tau_j^2\ge c_\tau/2\right\}
\cap\{\kappa_n\le\zeta_n\}.
\]
Equation~\eqref{eq:uniform-tau-positive},
\eqref{eq:studentization-positivity}, and
\eqref{eq:conditional-bootstrap-gaussian} imply $\PP(\mathcal K_n)\to1$.
On $\mathcal K_n$, uniform closeness of $F_n^\flat$ and $F_n^G$ gives
\begin{equation}
q_n^G(1-\alpha-\zeta_n)
\le c_{1-\alpha}^\flat
\le q_n^G(1-\alpha+\zeta_n).
\label{eq:bootstrap-quantile-bracket}
\end{equation}
For the upper bound, at $q_n^G(1-\alpha+\zeta_n)$ we have
$F_n^\flat\ge F_n^G-\zeta_n=1-\alpha$, so the definition of the bootstrap quantile gives
the claimed inequality. For the lower bound, if
$t<q_n^G(1-\alpha-\zeta_n)$, then
$F_n^\flat(t)\le F_n^G(t)+\zeta_n<1-\alpha$, so the bootstrap quantile cannot be smaller
than $q_n^G(1-\alpha-\zeta_n)$.

On $\mathcal K_n$, all $\hat\sigma^{\mathrm c}_{jS}$ are positive. Therefore the simultaneous
coverage event in \eqref{eq:simultaneous-coverage}, denoted by $\mathcal C_n$, is exactly
\[
\mathcal C_n=\{M_n^T\le c_{1-\alpha}^\flat\}.
\]
Combining this identity with \eqref{eq:bootstrap-quantile-bracket} gives
\begin{align*}
\PP(\mathcal C_n)
&\ge
\PP\{M_n^T\le q_n^G(1-\alpha-\zeta_n)\}
-\PP(\mathcal K_n^c)\\
&\ge 1-\alpha-\zeta_n-r_{G,n}-\PP(\mathcal K_n^c),
\end{align*}
and
\begin{align*}
\PP(\mathcal C_n)
&\le
\PP\{M_n^T\le q_n^G(1-\alpha+\zeta_n)\}
+\PP(\mathcal K_n^c)\\
&\le 1-\alpha+\zeta_n+r_{G,n}+\PP(\mathcal K_n^c).
\end{align*}
Since $\zeta_n\to0$, $r_{G,n}\to0$, and $\PP(\mathcal K_n^c)\to0$, the two bounds prove
$\PP(\mathcal C_n)\to1-\alpha$, which is part~(ii).
\end{proof}

\paragraph{Interpretation of Theorem~\ref{thm:bootstrap-band}.}
Part (i) shows that the conditional multiplier distribution consistently replaces the
infeasible Gaussian approximation in Theorem~\ref{thm:uniform-gaussian}. The bootstrap retains
the dependence across the entire covariate-by-contrast grid through the estimated correlation
matrix $\hat\Sigma$. Part (ii) turns this feasible approximation into a simultaneous confidence
band for all $pM$ coefficients.

\paragraph{Restriction to deterministic subsets.}
The conclusions of Theorems~\ref{thm:uniform-gaussian} and \ref{thm:bootstrap-band} remain
valid when $\cH_n$ is replaced throughout by any deterministic nonempty subset
$\mathcal I_n\subseteq\cH_n$.  Indeed, if
$d_{\mathcal I,n}:=|\mathcal I_n|$ and
$\ell_{\mathcal I,n}:=\log(2nd_{\mathcal I,n})$, then
$d_{\mathcal I,n}\le d_n$ and $\ell_{\mathcal I,n}\le\ell_n$, while every uniform maximum
appearing in the assumptions and proofs can only decrease after restriction.  Hence no new
rate or approximation argument is required.  In particular, Theorem~\ref{thm:bootstrap-band}(ii)
applied to the restricted array gives, for every fixed $\alpha\in(0,1)$,
\begin{equation}
\PP\!\left\{
\max_{h\in\mathcal I_n}|T_h|
\le c_{1-\alpha}^\flat(\mathcal I_n)
\right\}\longrightarrow1-\alpha.
\label{eq:deterministic-subset-validity}
\end{equation}

\begin{remark}[Testing prespecified groups of modifier cells]
\label{rem:prespecified-groups}
Equation~\eqref{eq:deterministic-subset-validity} also yields a test of the joint null
\[
H_0(\mathcal I_n):
\theta^\ast_{jS}=0
\quad\text{for every }(j,S)\in\mathcal I_n,
\]
where $\mathcal I_n\subseteq\cH_n$ is a prespecified deterministic nonempty subset.
Reject $H_0(\mathcal I_n)$ when
\[
\max_{(j,S)\in\mathcal I_n}
\left|\frac{\sqrt n\,\hat b_{jS}}{\hat\sigma^{\mathrm c}_{jS}}\right|
>c^\flat_{1-\alpha}(\mathcal I_n).
\]
Under the assumptions of Theorem~\ref{thm:bootstrap-band}, the rejection probability
converges to $\alpha$ under this joint null, because $T^0_{jS}=T_{jS}$ throughout
$\mathcal I_n$.

Two choices are particularly useful. For a prespecified contrast $S_0$, taking
$\mathcal I_n=[p]\times\{S_0\}$ tests whether its projection has any nonzero modifier
coefficient; when $|S_0|\ge2$, this concerns heterogeneity of an interaction effect.
For a prespecified covariate $j_0$, taking $\mathcal I_n=\{j_0\}\times\cS$ tests whether
that covariate has a nonzero partial projection coefficient for any factorial contrast.
Both hypotheses concern the chosen feature dictionary and do not exclude nonlinear
heterogeneity outside its linear span.

The restricted critical value is computed from the same multiplier draws by taking the
maximum only over $\mathcal I_n$, and satisfies
\[
c^\flat_{1-\alpha}(\mathcal I_n)
\le c^\flat_{1-\alpha}(\cH_n).
\]
The projection dictionary and fitted coefficients remain unchanged. This validity statement
applies to a prespecified group; it does not justify selecting a group after inspecting the
results. Separately testing several groups at level $\alpha$ does not by itself provide
familywise error control across those groups.
\end{remark}

\begin{corollary}[Strong familywise error control by Romano--Wolf step-down]
\label{cor:strong-fwer}
Under the assumptions of Theorem~\ref{thm:bootstrap-band}, fix $\alpha\in(0,1)$.  Using the
multiplier vector $W^\flat$ and its conditional law $\PP^\flat$ defined above, for every
nonempty $H\subseteq\cH_n$ define
\[
c_{1-\alpha}^\flat(H)
:=\inf\left\{t\in\R:
\PP^\flat\!\left(\max_{h\in H}|W_h^\flat|\le t\right)\ge1-\alpha
\right\}.
\]
The Romano--Wolf step-down algorithm starts from $H^{(1)}=\cH_n$.  Given a nonempty active
set $H^{(r)}$, choose
\[
h_r\in\operatorname*{arg\,max}_{h\in H^{(r)}}|T_h^0|,
\]
using a fixed deterministic rule to break ties.  If
$|T_{h_r}^0|\le c_{1-\alpha}^\flat(H^{(r)})$, stop and retain all hypotheses in $H^{(r)}$.
Otherwise reject $h_r$, set
$H^{(r+1)}=H^{(r)}\setminus\{h_r\}$, and continue, stopping after the current iteration if no
hypothesis remains.  For the deterministic true-null set
$\cH_{0,n}:=\{(j,S)\in\cH_n:\theta_{jS}^\ast=0\}$, this algorithm controls the strong
familywise error rate:
\[
\limsup_{n\to\infty}\PP\{\text{at least one index in $\cH_{0,n}$ is rejected}\}\le\alpha.
\]
\end{corollary}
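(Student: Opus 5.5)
The proof follows the standard Romano--Wolf argument. It reduces strong FWER control to a single simultaneous statement over the deterministic true-null set $\cH_{0,n}$, and then invokes the deterministic-subset version of Theorem~\ref{thm:bootstrap-band}. If $\cH_{0,n}=\emptyset$ there is nothing to prove, so assume it is nonempty.

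\emph{First step: the first false rejection.} Consider the event $\mathcal V_n$ that some index in $\cH_{0,n}$ is rejected, and let $r^\ast$ be the first step at which a true null $h_{r^\ast}\in\cH_{0,n}$ is rejected. Before step $r^\ast$ only false nulls have been removed, so $\cH_{0,n}\subseteq H^{(r^\ast)}$. Rejection at step $r^\ast$ means $|T^0_{h_{r^\ast}}|>c^\flat_{1-\alpha}(H^{(r^\ast)})$. Because $h_{r^\ast}$ maximizes $|T^0_h|$ over $H^{(r^\ast)}$ and lies in $\cH_{0,n}$,
\[
\max_{h\in\cH_{0,n}}|T^0_h| = |T^0_{h_{r^\ast}}| .
\]

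\emph{Second step: monotonicity of the critical values.} Monotonicity holds pathwise: if $H\subseteq H'$, then $\max_{h\in H}|W^\flat_h|\le\max_{h\in H'}|W^\flat_h|$ surely. Hence the conditional distribution functions are ordered, and $c^\flat_{1-\alpha}(H)\le c^\flat_{1-\alpha}(H')$. Since $\cH_{0,n}\subseteq H^{(r^\ast)}$, this gives
\[
\mathcal V_n\subseteq\Bigl\{\max_{h\in\cH_{0,n}}|T^0_h|>c^\flat_{1-\alpha}(\cH_{0,n})\Bigr\}.
\]
On $\cH_{0,n}$ we have $T^0_h=T_h$ exactly, including under the zero convention, since $\theta^\ast_h=0$ there.

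\emph{Third step: conclusion.} The set $\cH_{0,n}$ is deterministic and nonempty. The restriction paragraph following Theorem~\ref{thm:bootstrap-band} therefore applies, and \eqref{eq:deterministic-subset-validity} with $\mathcal I_n=\cH_{0,n}$ gives
\[
\PP\Bigl\{\max_{h\in\cH_{0,n}}|T_h|>c^\flat_{1-\alpha}(\cH_{0,n})\Bigr\}\to\alpha .
\]
Hence $\limsup_n\PP(\mathcal V_n)\le\alpha$.

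\emph{Main obstacle.} The delicate point is that $\cH_{0,n}$ may be empty for some $n$ and nonempty for others along the sequence. The fix is to split the sequence into these two subsequences. On the subsequence where $\cH_{0,n}$ is empty, $\PP(\mathcal V_n)=0$. On the other, the deterministic-subset argument applies; the restricted assumptions are inherited from the full grid, because every uniform maximum can only decrease after restriction.

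The zero-statistic convention also needs to be checked: off the high-probability event $\mathcal E_{\nu,n}\cap\{\min_j\hat\tau_j^2>0\}$, the statistics $T^0$ vanish and $W^\flat\equiv0$, so nothing substantive is rejected. This event's complement has vanishing probability, so it does not affect the $\limsup$.
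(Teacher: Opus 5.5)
Your proposal is correct and matches the paper's proof: you take the first false rejection, use $\cH_{0,n}\subseteq H^{(r^\ast)}$ and pathwise monotonicity of $c^\flat_{1-\alpha}(\cdot)$ to obtain $\mathcal V_n\subseteq\{\max_{h\in\cH_{0,n}}|T^0_h|>c^\flat_{1-\alpha}(\cH_{0,n})\}$, and then apply the deterministic-subset validity with $T^0_h=T_h$ on $\cH_{0,n}$; your subsequence split is a slightly more careful version of the paper's ``suppose henceforth nonempty.'' One small correction to your last remark: on $\mathcal E_{\nu,n}^c$ every critical value is $0$ while cells with positive scales keep their usual, generically nonzero, $T^0_h$, so rejections can occur there. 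This is harmless, however, because that event has vanishing probability and is already absorbed in the restricted bootstrap theorem.
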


\begin{proof}
If $\cH_{0,n}$ is empty, the familywise error probability is zero.  Suppose henceforth that
$\cH_{0,n}$ is nonempty.

\emph{Step 1: reduction to the maximum over the true nulls.}
For the comparison of bootstrap maxima, fix the realized $\mathcal D_n$.  The observed
statistics $T_h^0$ and the recursively constructed active sets $H^{(r)}$ are then fixed, while
$W^\flat$ varies according to $\PP^\flat$.
Consider a realized run of the step-down algorithm in which at least one true null is rejected,
and let $r_\ast$ be the first iteration at which this occurs.  No true null has been removed
before iteration $r_\ast$, and hence
\[
\cH_{0,n}\subseteq H^{(r_\ast)}.
\]
For every multiplier draw and every two index sets $H_1\subseteq H_2$,
\[
\max_{h\in H_1}|W_h^\flat|
\le
\max_{h\in H_2}|W_h^\flat|.
\]
It follows pathwise that their conditional distribution functions satisfy
\[
\PP^\flat\!\left(\max_{h\in H_2}|W_h^\flat|\le t\right)
\le
\PP^\flat\!\left(\max_{h\in H_1}|W_h^\flat|\le t\right),
\qquad t\in\R,
\]
and therefore their conditional quantiles are monotone:
\begin{equation}
c_{1-\alpha}^\flat(H_1)
\le c_{1-\alpha}^\flat(H_2).
\label{eq:stepdown-critical-monotonicity}
\end{equation}
At iteration $r_\ast$, the selected index $h_{r_\ast}\in\cH_{0,n}$ satisfies
$|T_{h_{r_\ast}}^0|>c_{1-\alpha}^\flat(H^{(r_\ast)})$.  By
\eqref{eq:stepdown-critical-monotonicity}, this implies
\begin{equation}
\max_{h\in\cH_{0,n}}|T_h^0|
>c_{1-\alpha}^\flat(\cH_{0,n}).
\label{eq:first-false-rejection-inclusion}
\end{equation}
Thus it remains to control the probability of the event in
\eqref{eq:first-false-rejection-inclusion}.

\emph{Step 2: bootstrap validity on the deterministic true-null subset.}
The set $\cH_{0,n}$ is deterministic, and all assumptions of
Theorems~\ref{thm:uniform-gaussian} and \ref{thm:bootstrap-band} are inherited after
restriction to this set.  Thus \eqref{eq:deterministic-subset-validity}, applied with
$\mathcal I_n=\cH_{0,n}$, and the identity $T_h^0=T_h$ on $\cH_{0,n}$ give
\[
\PP\!\left\{
\max_{h\in\cH_{0,n}}|T_h^0|
>c_{1-\alpha}^\flat(\cH_{0,n})
\right\}\longrightarrow\alpha.
\]
Combining this conclusion with \eqref{eq:first-false-rejection-inclusion} yields
\[
\PP\{\text{at least one index in $\cH_{0,n}$ is rejected}\}
\le\alpha+o(1).
\]
Taking the limit superior proves the stated strong familywise error control.  The event
reduction in Step~1 is the standard Romano--Wolf monotonicity argument
\citep{RomanoWolf2005}.
\end{proof}

\begin{remark}[Equivalent batched implementation]
\label{rem:batched-stepdown}
For a nonempty active set $H$, define
\[
R(H):=\{h\in H:|T_h^0|>c_{1-\alpha}^\flat(H)\}.
\]
A batched implementation rejects all indices in $R(H)$ at once and, when this set is nonempty,
continues from $H\setminus R(H)$.  It produces the same final rejection set as the sequential
procedure in Corollary~\ref{cor:strong-fwer}.  Indeed, if $R(H)$ is nonempty, every statistic
indexed by $R(H)$ is larger than every statistic indexed by $H\setminus R(H)$.  After any
subset of $R(H)$ has been removed, the remaining active set $H'\subseteq H$ satisfies
$c_{1-\alpha}^\flat(H')\le c_{1-\alpha}^\flat(H)$ by
\eqref{eq:stepdown-critical-monotonicity}.  Consequently, every index in $R(H)$ that remains
active still exceeds the updated critical value and is rejected before any index outside
$R(H)$ is considered.  The sequential and batched procedures therefore both reach
$H\setminus R(H)$; repeating this argument proves equality of their terminal rejection sets.
If $R(H)$ is empty, both procedures stop.  The same equivalence holds for the Monte Carlo
critical values $c_{1-\alpha}^{\flat,B}(H)$ when the same multiplier draws are reused across
steps, because the bootstrap maxima, and hence their empirical quantiles, retain the same
setwise monotonicity.
\end{remark}

\paragraph{How the conditions can be verified.}
The results above are deliberately stated in terms of the uniform remainder, moment,
and correlation bounds used in their proofs.
Section~\ref{sec:supp-uniform-sufficient} of the supplementary material gives
formal verification lemmas and a bounded-envelope primitive route. To summarize that route,
write $\ell_p=\log(2p)$,
$s_\theta=\max_S\|\theta_S^\ast\|_0$, and
$s_\gamma=\max_j\|\gamma_j\|_0$. Under uniformly bounded covariates, nodewise residuals,
and score residuals, well-conditioned covariance and nondegenerate variances, the usual
Lasso penalties of orders $\sqrt{\ell_n/n}$ and $\sqrt{\ell_p/n}$ reduce the high-level
conditions, among other things, to a transparent baseline requirement:
\[
\max_{\ell\le L}\{\mathbb P_{n,\ell}\Delta_{g,\ell}^2\}^{1/2}=o_p(\ell_n^{-2}),
\qquad
\max_{\ell\le L}\max_{i\in I_\ell}|\Delta_{g,\ell}(X_i)|=O_p(1),
\]
together with representative growth requirements
\[
\frac{\ell_n^7}{n}\to0,\qquad
s_\theta\ell_n\sqrt{\frac{\ell_p}{n}}\to0,\qquad
s_\gamma\ell_n\sqrt{\frac{\ell_p}{n}}\to0,
\]
and
\[
\ell_n^2\left\{
\sqrt{\frac{s_\theta\ell_n}{n}}+
\sqrt{\frac{s_\gamma\ell_p}{n}}
\right\}\to0.
\]
These conditions are sufficient rather than necessary. In particular, the bounded-envelope
route implicitly requires a bounded outcome residual and uniform $\ell_1$ control of the
projection and nodewise coefficients; when these restrictions are inappropriate, the
high-level assumptions of Section~\ref{sec:simul-conditions} should instead be verified
directly.

\paragraph{Fixed factorial design and growing covariate grid.}
Throughout the asymptotic analysis, $K$ and hence $M=2^K-1$ are fixed. Therefore
$d_n=pM$ grows only through $p=p_n$, and
$\ell_n=\log(2npM)=O(\log n+\log p)$. This formulation matches the intended medical
applications, in which the treatment components are chosen when the trial is designed rather
than added as the sample size grows. The number of contrasts still matters for finite-sample
precision and computation across different factorial designs: fitting all $M$ contrast
regressions has exponential cost in $K$. The simulations examine this finite-design effect by
comparing several fixed values of $K$; they do not represent growth of $K$ with $n$.

These results give simultaneous bands and strong FWER control, not FDR. FDR step-up rules
require relative tail accuracy at order-$\alpha/d_n$ thresholds and concentration of the
false discovery proportion, which the additive approximation here does not supply; the
companion paper develops the needed theory.

\subsection{Implementation of simultaneous inference}
\label{sec:simul-implementation}

We conclude this section by collecting the preceding estimators and bootstrap quantities into
an implementable procedure.  The inputs are the observed data, a fold partition, the Lasso and
nodewise-Lasso penalties, a nominal level $\alpha$, and a number $B$ of multiplier draws.  No
population quantity appearing in the preceding proofs is required.  The procedure is as
follows.

\begin{enumerate}[leftmargin=2.2em,label=\textbf{Step \arabic*.}]
\item \emph{Construct the cross-fitted factorial scores.}
Split the observations into $L$ folds.  For each fold $I_\ell$, estimate the treatment-invariant
baseline $\hat g^{(-\ell)}$ using only the observations outside that fold, and compute
$\hat\psi_{iS}$ from \eqref{eq:cfscore} for every $i\in I_\ell$ and $S\in\cS$.  The same fitted
baseline is used for all contrasts.

\item \emph{Fit the sparse projection and residualize the covariates.}
Center the score and covariates to obtain $\tilde\psi_{iS}$ and $\tilde Z_i$.  Fit one initial
Lasso regression for each contrast $S$ to obtain $\hat\theta_S$.  Separately, fit one nodewise
Lasso regression for each covariate $j$ and form $\hat V_{ij}$ and $\hat\tau_j^2$.  Because the
design matrix is common to all contrasts, these $p$ nodewise regressions are computed only once
and then reused for every $S$.

\item \emph{Debias and studentize every cell.}
For each $h=(j,S)\in\cH_n$, compute $\hat b_{jS}$ from \eqref{eq:debiased}.  Then form
\[
\hat\varepsilon_{iS}
:=\tilde\psi_{iS}-\tilde Z_i^\top\hat\theta_S,
\qquad
\hat u_{i,h}
:=\hat V_{ij}\hat\varepsilon_{iS}
-\mathbb P_n(\hat V_j\hat\varepsilon_S),
\qquad
\hat\nu_h^2:=\mathbb P_n\hat u_h^2.
\]
The standard error and observable null statistic are
\[
\hat\sigma^{\mathrm c}_{jS}=\frac{\hat\nu_h}{\hat\tau_j^2},
\qquad
T_{jS}^0=\frac{\sqrt n\,\hat b_{jS}}{\hat\sigma^{\mathrm c}_{jS}}.
\]
These quantities are evaluated when all $\hat\tau_j^2$ and $\hat\nu_h$ are positive; the
assumptions of Theorem~\ref{thm:bootstrap-band} imply that this event has probability tending
to one.  A zero or numerically negligible scale should be flagged in an implementation rather
than used to form a studentized statistic.

\item \emph{Generate the multiplier distribution.}
For $b=1,\ldots,B$, independently generate
\[
(e_1^{(b)},\ldots,e_n^{(b)})\mid\mathcal D_n\sim N(0,I_n),
\]
where $\mathcal D_n$ was defined above \eqref{eq:multiplier-bootstrap}, and compute
\[
W_h^{\flat(b)}
:=\frac1{\sqrt n}\sum_{i=1}^n
e_i^{(b)}\frac{\hat u_{i,h}}{\hat\nu_h},
\qquad h\in\cH_n.
\]
For every nonempty $H\subseteq\cH_n$, let
$c_{1-\alpha}^{\flat,B}(H)$ be the empirical $(1-\alpha)$-quantile of
\[
\left\{\max_{h\in H}|W_h^{\flat(b)}|:b=1,\ldots,B\right\}.
\]
This is the Monte Carlo counterpart of the conditional quantile
$c_{1-\alpha}^\flat(H)$ in \eqref{eq:bootstrap-critical-value}.

\item \emph{Construct the simultaneous confidence band.}
Using the full grid $H=\cH_n$, report
\begin{equation}
\left[
\hat b_{jS}\pm
c_{1-\alpha}^{\flat,B}(\cH_n)
\frac{\hat\sigma^{\mathrm c}_{jS}}{\sqrt n}
\right],
\qquad (j,S)\in\cH_n.
\label{eq:implemented-simultaneous-band}
\end{equation}
The common critical value accounts for dependence across all covariate-by-contrast cells.

\item \emph{Optionally perform Romano--Wolf step-down testing.}
To select cells while controlling the strong familywise error rate, initialize
$H^{(1)}=\cH_n$.  At iteration $r$, choose
\[
h_r\in\operatorname*{arg\,max}_{h\in H^{(r)}}|T_h^0|
\]
using the same deterministic tie-breaking rule as in Corollary~\ref{cor:strong-fwer}, and
compute $c_{1-\alpha}^{\flat,B}(H^{(r)})$.  If
$|T_{h_r}^0|\le c_{1-\alpha}^{\flat,B}(H^{(r)})$, stop.  Otherwise reject $h_r$, set
$H^{(r+1)}=H^{(r)}\setminus\{h_r\}$, and continue until the stopping rule is met or no index
remains.  The fitted models and multiplier coordinates are not recomputed: each iteration
reuses the same $B$ multiplier draws and changes only the set over which the maximum is taken.
Remark~\ref{rem:batched-stepdown} shows that the equivalent batched implementation gives the
same final rejection set.
\end{enumerate}

The simultaneous band in Step~5 provides joint uncertainty statements for all cells, whereas
Step~6 provides a multiplicity-adjusted selection rule.  The number $B$ controls only the
Monte Carlo accuracy with which the conditional bootstrap quantiles are evaluated; the
theoretical results above concern their ideal conditional counterparts.  Section~\ref{sec:sim}
uses $B=1000$ throughout.

\section{Simulation studies}\label{sec:sim}

The simulations evaluate the finite-sample accuracy and power of the simultaneous confidence
bands in Theorem~\ref{thm:bootstrap-band} and the Romano--Wolf step-down procedure in
Corollary~\ref{cor:strong-fwer}. The primary experiment varies the factorial dimension $K$,
the covariate dimension $p$, and the sample size $n$ separately, so that their distinct effects
are not hidden inside the total number of tested cells $d_n=p(2^K-1)$. Each value of $K$ is
treated as a separate fixed factorial design, so $K$ does not grow with $n$ within the
asymptotic framework. A second experiment holds the factorial grid fixed and varies the strength and quality
of the treatment-invariant baseline, thereby separating its role in validity from its role in
efficiency.

\subsection{Simulation design: factorial, covariate, and sample-size scaling}
\label{sec:sim-scaling}

\paragraph{Data-generating process.}
For each replication, independently generate
$A_i\sim\operatorname{Unif}(\{-1,+1\}^K)$ and
$Z_i\sim N_p\{0,\Sigma_Z(\rho)\}$, where $\Sigma_Z(\rho)$ is the AR(1) covariance
matrix with entries $\{\Sigma_Z(\rho)\}_{jk}=\rho^{|j-k|}$.
We set $\rho=0.5$ throughout the simulations. The outcome is
\begin{equation}\label{eq:sim-dgp}
Y_i=m_0(Z_i)+\frac12\sum_{S\in\cS}\tau_S(Z_i)\phi_S(A_i)+\epsilon_i,
\qquad
\epsilon_i\stackrel{\mathrm{iid}}{\sim}N(0,1),
\end{equation}
independently of $(A_i,Z_i)$, with
\[
m_0(z)=1+z_1+\frac12(z_2^2-1)+\frac14z_3z_4,
\qquad
\tau_S(z)=\eta_S+\sum_{j=1}^p\theta_{jS}^\ast z_j.
\]
We set $\eta_S=0.5$ for
$S\in\cC_0:=\{\{1\},\{2\},\{1,2\}\}$ and $\eta_S=0$ otherwise. Hence the global modifier-null
configuration below still allows nonzero average factorial effects. Because the modification
functions are linear and $Z$ is centered, the generating coefficient $\theta_{jS}^\ast$ is
exactly the projection target defined in Section~\ref{sec:setup}.

\paragraph{Coefficient configurations.}
For each design point, we consider two configurations of the population modifier coefficients.
In the \emph{global modifier-null configuration}, $\theta_{jS}^\ast=0$ for every
$(j,S)\in\cH_n$. In the \emph{sparse signal configuration}, the set of cells assigned
nonzero coefficients is
\[
\cH_{\mathrm{sig}}:=\bigl\{
(1,\{1\}),(2,\{1\}),(3,\{2\}),(4,\{3\}),
(5,\{1,2\}),(6,\{1,3\}),(7,\{2,3\}),(8,\{1,2,3\})
\bigr\}.
\]
This set contains main-effect, two-way, and three-way modifier cells and is available for every
$K\ge3$ and $p\ge20$. Coefficients outside $\cH_{\mathrm{sig}}$ are zero, so this configuration contains
both true and false cellwise null hypotheses. Assign alternating signs $s_h\in\{-1,+1\}$
over $h\in\cH_{\mathrm{sig}}$ and set
\begin{equation}\label{eq:sim-signal}
\theta_h^\ast
=s_h\,\delta\,\sigma_h^{(0)}
\sqrt{\frac{2\log(2d_n)}{n}},
\qquad h\in\cH_{\mathrm{sig}}.
\end{equation}
Here $\sigma_h^{(0)}$ is the population
asymptotic standard deviation in the corresponding global modifier-null configuration. It is available in
closed form for the present DGP. Write
\[
\Omega(\rho):=\Sigma_Z(\rho)^{-1},
\qquad
\omega_j(\rho):=\{\Omega(\rho)\}_{jj}.
\]
Under the global modifier-null configuration and the oracle baseline $g=m_0$, the projection residual is
independent of $Z$. Walsh--Hadamard orthogonality then gives
\begin{equation}\label{eq:sim-null-sd}
\left(\sigma_{jS}^{(0)}\right)^2
=\omega_j(\rho)\left(4+\sum_{\substack{U\in\cS\\U\ne S}}\eta_U^2\right),
\qquad
\omega_j(\rho)=
\begin{cases}
(1-\rho^2)^{-1}, & j\in\{1,p\},\\
(1+\rho^2)(1-\rho^2)^{-1}, & 2\le j\le p-1.
\end{cases}
\end{equation}
These population standard deviations are calculated analytically and fixed before the Monte
Carlo experiment; they are never estimated from the simulated samples. The reference signal
level is $\delta=1$; at the reference design $(K,p,n)=(4,50,1000)$ we also use
$\delta\in\{0.75,1.25\}$ to display a nondegenerate power curve. Thus signal strength is tied
to the simultaneous detection scale rather than fixed while $d_n$ changes.

\paragraph{Dimension and sample-size configurations.}
The reference design is $(K,p,n)=(4,50,1000)$. We vary one dimension at a time around this
reference:
\begin{table}[H]
\centering
\small
\begin{tabular}{lcccc}
\hline
Panel & $K$ & $p$ & $n$ & $d_n=p(2^K-1)$\\
\hline
Factorial dimension & $3,4,5,6$ & $50$ & $1000$ & $350,750,1550,3150$\\
Covariate dimension & $4$ & $20,50,100,200$ & $1000$ & $300,750,1500,3000$\\
Sample size & $4$ & $50$ & $200,500,1000$ & $750$\\
\hline
\end{tabular}
\caption{Dimension and sample-size configurations for the primary simulation experiment. The
reference design appears in all three one-at-a-time panels but is simulated only once.}
\label{tab:sim-configurations}
\end{table}

\paragraph{Implementation.}
We use the oracle treatment-invariant baseline
$\hat g^{(-\ell)}=m_0$ on each of $L=5$ folds. This isolates the Gaussian and
multiplier-bootstrap approximations from baseline-estimation error; the effect of estimating
or misspecifying the baseline is examined separately. The score Lasso uses
\[
\lambda_S=1.1\,\hat s_S\sqrt{\frac{2\log(pM)}{n}},
\]
where $\hat s_S$ is the empirical standard deviation of the centered score for contrast $S$,
and every nodewise regression uses
$\lambda_j=1.1\sqrt{2\log(p)/n}$. The multiplier bootstrap uses $B=1000$ Gaussian draws and
all procedures are calibrated at $\alpha=0.05$. We use $1000$ Monte Carlo replications per
configuration and report Monte Carlo standard errors.

\paragraph{Performance measures.}
Under the global modifier-null configuration, the Romano--Wolf procedure makes at least one rejection if and
only if the multiplier-bootstrap band fails to cover zero in at least one cell. Thus its FWER
equals one minus the familywise coverage probability, and we report only the FWER to avoid
duplicating the same calibration measure. The analogous identity holds for Holm testing and
the Bonferroni band. Under the sparse signal configuration, the two criteria are no longer equivalent:
we report (i) familywise coverage of the simultaneous band, (ii) the strong FWER of the
step-down procedure over the true nulls, and (iii) the average true-positive rate over $\cH_{\mathrm{sig}}$.
For both coefficient configurations, we also report the average band half-width and the mean
bootstrap critical value $c_{0.95}^\flat$. Bonferroni simultaneous bands and Holm step-down
testing, computed from the same debiased estimates and standard errors, serve as
dependence-agnostic benchmarks. To display the comparison directly, we summarize the gain of
the proposed procedures by
\[
1-\frac{\text{average multiplier-bootstrap half-width}}
        {\text{average Bonferroni half-width}}
\quad\text{and}\quad
\operatorname{TPR}_{\mathrm{RW}}-\operatorname{TPR}_{\mathrm{Holm}},
\]
where the second quantity is reported under the sparse signal configuration. Because the underlying
estimates and standard errors are held fixed across procedures, these differences isolate the
effect of the simultaneous calibration method.

\paragraph{Finite-sample calibration and dimensional scaling.}
Table~\ref{tab:sim-primary-results} reports the primary results at the reference signal level
$\delta=1$.  The Romano--Wolf FWER under the global modifier-null configuration ranges from $0.045$ to
$0.059$ across the nine distinct designs, and is therefore close to the nominal $0.05$ level.
Under the sparse signal configuration, simultaneous coverage ranges from $0.940$ to $0.953$ and
the strong FWER ranges from $0.046$ to $0.057$.  The Monte Carlo standard errors of these
probabilities are at most $0.008$, so the small deviations from the nominal levels are
consistent with ordinary simulation variation and mild finite-sample approximation error.

\begin{table}[H]
\centering
\small
\setlength{\tabcolsep}{3.5pt}
\begin{tabular}{@{}llrrrrrrr@{}}
\hline
& & & Global null & \multicolumn{5}{c}{Sparse signal configuration}\\
\cline{4-4}\cline{5-9}
Panel & Varied value & $d_n$ & FWER & Coverage & FWER & TPR & Half-width & $c_{0.95}^\flat$\\
\hline
Factorial & $K=3$ & 350  & 0.049 & 0.943 & 0.056 & 0.427 & 0.320 & 3.792\\
dimension & $K=4$ & 750  & 0.047 & 0.953 & 0.047 & 0.425 & 0.340 & 3.977\\
          & $K=5$ & 1550 & 0.047 & 0.949 & 0.051 & 0.417 & 0.358 & 4.145\\
          & $K=6$ & 3150 & 0.054 & 0.951 & 0.049 & 0.414 & 0.375 & 4.306\\
\hline
Covariate & $p=20$  & 300  & 0.051 & 0.951 & 0.051 & 0.426 & 0.322 & 3.754\\
dimension & $p=50$  & 750  & 0.047 & 0.953 & 0.047 & 0.425 & 0.340 & 3.977\\
          & $p=100$ & 1500 & 0.045 & 0.953 & 0.046 & 0.419 & 0.354 & 4.137\\
          & $p=200$ & 3000 & 0.053 & 0.947 & 0.053 & 0.434 & 0.367 & 4.295\\
\hline
Sample size & $n=200$  & 750 & 0.059 & 0.940 & 0.057 & 0.205 & 0.878 & 3.958\\
            & $n=500$  & 750 & 0.048 & 0.944 & 0.057 & 0.359 & 0.498 & 3.975\\
            & $n=1000$ & 750 & 0.047 & 0.953 & 0.047 & 0.425 & 0.340 & 3.977\\
\hline
\end{tabular}
\caption{Primary simulation results at $\delta=1$ based on $1000$ Monte Carlo replications.
The ``Global null'' column reports the Romano--Wolf familywise error rate under the global
modifier-null configuration. The five columns under ``Sparse signal configuration'' report
familywise coverage of the
multiplier-bootstrap band, strong Romano--Wolf FWER, true-positive rate over the eight signal
cells, average band half-width, and average bootstrap critical value.  The reference design is
repeated across panels for readability but is simulated only once.}
\label{tab:sim-primary-results}
\end{table}

Increasing either $K$ or $p$ raises the number of simultaneous targets and hence the critical
value and band width.  Along the factorial-dimension panel, $c_{0.95}^\flat$ increases from
$3.792$ to $4.306$ and the average half-width increases from $0.320$ to $0.375$, while the TPR
decreases only modestly from $0.427$ to $0.414$.  Along the covariate-dimension panel, coverage
and FWER remain stable through $p=200$; the nonmonotonic TPR between $0.419$ and $0.434$ is small
relative to its Monte Carlo standard error, which is approximately $0.005$.  These power
comparisons should be interpreted at the simultaneous detection scale: by
\eqref{eq:sim-signal}, the coefficient magnitude changes with $d_n$ rather than being held
fixed as the dimension grows.

The sample-size panel exhibits the clearest finite-sample improvement.  From $n=200$ to
$n=1000$, average half-width falls from $0.878$ to $0.340$, TPR rises from $0.205$ to $0.425$,
and the maximum absolute Monte Carlo bias over the signal cells falls from $0.192$ to $0.048$.
This improvement occurs even though the nonzero coefficients shrink at the rate in
\eqref{eq:sim-signal}; it therefore reflects improved estimation and Gaussian calibration, not
an increase in the unstandardized signal.

At the reference design, increasing $\delta$ from $0.75$ to $1$ and $1.25$ raises the TPR from
$0.149$ to $0.425$ and $0.711$, respectively.  The corresponding band coverages are $0.943$,
$0.953$, and $0.968$, and the strong FWERs are $0.055$, $0.047$, and $0.032$.  The last design
is somewhat conservative, but all three settings maintain familywise error control while
displaying a nondegenerate power curve.

\paragraph{Comparison with dependence-agnostic calibration.}
The multiplier and Bonferroni critical values are close in this DGP.  Across all primary
configurations, the multiplier band is between $0.24\%$ and $0.75\%$ shorter than the
Bonferroni band, with an average reduction of $0.34\%$.  Under the sparse signal configuration, the
Romano--Wolf TPR exceeds the Holm TPR by $0.3$ to $1.0$ percentage points, averaging $0.5$
percentage points.  Thus the numerical benefit of estimating dependence is modest in this
particular Gaussian AR(1) design.  The main evidence from this experiment is the calibration of
the feasible high-dimensional procedure as $d_n$ increases, rather than a large efficiency
gain over Bonferroni or Holm.

\subsection{Sensitivity to baseline estimation and misspecification}
\label{sec:sim-baseline}

Section~\ref{sec:sim-scaling} uses the oracle baseline to isolate the high-dimensional Gaussian
and bootstrap approximations. We now separate two claims about baseline residualization. The
projection target is exactly baseline-free for every treatment-invariant baseline by
Theorem~\ref{thm:ident}. Valid calibration with an estimated baseline additionally requires
cross-fitting and the weighted-consistency and uniform-remainder conditions imposed in
Sections~\ref{sec:debias} and~\ref{sec:simul}. Within this admissible class, the choice of
baseline can still affect interval length and power through its predictive accuracy.

\paragraph{Design and prognostic strength.}
We retain the data-generating process in \eqref{eq:sim-dgp}, the reference dimensions
$(K,p)=(4,50)$, and the global modifier-null and sparse signal configurations defined above. To
vary the amount
of prognostic variation, replace $m_0$ in \eqref{eq:sim-dgp} by $\xi m_0$, where
$\xi\in\{0,1,2\}$ at $n=1000$. We additionally use $n\in\{200,500,1000\}$ at
$\xi=1$. The duplicated reference configuration $(n,\xi)=(1000,1)$ is simulated only once.
Under balanced assignment,
\[
\E(Y_i\mid Z_i=z)=g_{\mathrm{opt},\xi}(z)=\xi m_0(z),
\]
so $\xi$ changes the scope for efficiency gains from residualization, while all factorial
effects remain fixed. Under the sparse signal configuration, we use the signal rule in
\eqref{eq:sim-signal} with $\delta=1$ and the closed-form standard deviation in
\eqref{eq:sim-null-sd}. This standard deviation does not depend on $\xi$, because
residualization by $g_{\mathrm{opt},\xi}=\xi m_0$ removes the entire prognostic component.
Consequently, the generating coefficient vector is identical across all baseline procedures
within a design point.

\paragraph{Baseline procedures.}
We compare the following treatment-invariant baselines; every estimated baseline regresses
$Y$ on functions of $Z$ while ignoring $A$.
\begin{enumerate}[label=(\roman*),leftmargin=2.2em]
\item \emph{Oracle}: $\hat g^{(-\ell)}(z)=\xi m_0(z)$.
\item \emph{No residualization}: $\hat g^{(-\ell)}(z)=0$.
\item \emph{Cross-fitted linear}: ordinary least squares of $Y$ on $(1,Z^\top)^\top$ using
      $I_{-\ell}$. This baseline deliberately omits the quadratic and interaction terms in
      $m_0$.
\item \emph{Cross-fitted quadratic Lasso}: a Lasso regression \citep{Tibshirani1996} on the
      dictionary
      \[
      D(z)=\bigl(z_1,\ldots,z_p,z_1^2-1,\ldots,z_p^2-1,
      (z_jz_k)_{1\le j<k\le p}\bigr)^\top,
      \]
      with an unpenalized intercept. Its penalty is selected by five-fold cross-validation
      within $I_{-\ell}$, so the validation fold used to form the score is never used to fit or
      tune the baseline. The dictionary contains the true $m_0$ but is high-dimensional.
\end{enumerate}
The oracle and no-residualization choices are fixed and introduce no baseline-estimation
remainder. The cross-fitted linear fit converges to a deterministic linear projection even
though that projection is misspecified relative to $g_{\mathrm{opt},\xi}$; such
misspecification is not itself a violation because the theory does not require convergence to
the variance-optimal baseline. The quadratic dictionary, by contrast, contains the true
$\xi m_0$ and provides a flexible cross-fitted estimate. These four procedures therefore
represent fixed or admissible cross-fitted baselines under the present DGP.

At the small-sample configuration $(n,\xi)=(200,1)$, we add a diagnostic in which the same
quadratic Lasso is fitted and tuned on the full sample. This no-cross-fitting version lies
outside the conditions of Lemma~\ref{lem:global} and is included only to assess the practical
importance of sample splitting; it is not treated as a competing valid procedure.

\paragraph{Implementation and performance measures.}
Within each replication, the assignment, covariates, errors, outer folds, and multiplier draws
are shared across baseline procedures. All score-Lasso, nodewise-regression, bootstrap, and
Monte Carlo settings are otherwise identical to Section~\ref{sec:sim-scaling}. Under the global
modifier-null configuration, we report the Romano--Wolf FWER; the complementary familywise
coverage is not
tabulated separately. Under the sparse signal configuration, we report familywise band coverage, strong
FWER over the true nulls, the average true-positive rate over $\cH_{\mathrm{sig}}$, and the maximum absolute
Monte Carlo bias over its eight signal cells. For both configurations, efficiency is summarized
by the average simultaneous-band half-width and its ratio to the oracle half-width. For the
estimated baselines we also report the out-of-fold prediction error
\[
\frac1n\sum_{\ell=1}^L\sum_{i\in I_\ell}
\{\hat g^{(-\ell)}(Z_i)-\xi m_0(Z_i)\}^2.
\]
Within the fixed and admissible cross-fitted procedures, this design distinguishes failure of
simultaneous calibration from a purely efficiency-related loss due to a baseline that is
misspecified relative to $g_{\mathrm{opt},\xi}$. The no-cross-fitting diagnostic separately
examines what happens when the sample-splitting requirement itself is violated.

\paragraph{Results: prognostic strength and baseline quality.}
For the fixed and admissible cross-fitted baselines, familywise coverage under the sparse signal
configuration ranges from $0.933$ to $0.959$, and the Romano--Wolf FWER under either coefficient
configuration ranges from $0.035$ to $0.066$.  The Monte Carlo standard errors are at most
$0.008$ for these probabilities.  Thus the baseline choice has little effect on calibration,
as predicted by the baseline-free target and the cross-fitted theory, but it has a substantial
effect on precision and power.

Table~\ref{tab:sim-prognostic-results} displays this efficiency effect at $n=1000$.  When
$\xi=0$, no residualization coincides with the oracle and the cross-fitted quadratic Lasso is
also essentially oracle-efficient.  As the nonlinear prognostic component becomes stronger,
the quadratic Lasso remains close to the oracle: at $\xi=1$ and $2$, its bands are only about
$1.8\%$ and $1.9\%$ wider, and its TPRs are $0.396$ and $0.392$, compared with oracle TPRs of
$0.429$ and $0.422$.  Its prediction error is approximately $0.048$ in both settings.

\begin{table}[H]
\centering
\small
\setlength{\tabcolsep}{4pt}
\begin{tabular}{@{}crrrrrrrr@{}}
\hline
& \multicolumn{4}{c}{True-positive rate} & \multicolumn{4}{c}{Half-width relative to oracle}\\
\cline{2-5}\cline{6-9}
$\xi$ & Oracle & Quadratic CF & Linear CF & None & Oracle & Quadratic CF & Linear CF & None\\
\hline
0 & 0.425 & 0.422 & 0.371 & 0.425 & 1.000 & 1.001 & 1.035 & 1.000\\
1 & 0.429 & 0.396 & 0.152 & 0.031 & 1.000 & 1.018 & 1.264 & 1.788\\
2 & 0.422 & 0.392 & 0.032 & 0.003 & 1.000 & 1.019 & 1.778 & 3.126\\
\hline
\end{tabular}
\caption{Effect of prognostic strength at $(K,p,n)=(4,50,1000)$ under the sparse signal
configuration.  ``CF'' denotes cross-fitting.  Half-width ratios are computed replication by
replication relative to the oracle procedure and then averaged.}
\label{tab:sim-prognostic-results}
\end{table}

The misspecified linear baseline is valid but increasingly inefficient.  At $\xi=1$, its
average half-width is $1.264$ times the oracle width and its TPR is $0.152$; at $\xi=2$, these
quantities become $1.778$ and $0.032$.  No residualization loses still more precision, reaching
a width ratio of $3.126$ and a TPR of $0.003$ at $\xi=2$.  This contrast supports the theoretical
separation between validity and efficiency: treatment-invariant misspecification need not
invalidate inference, but accurate prediction of the prognostic mean can be essential for
detecting effect modification.

\paragraph{Results: sample size and cross-fitting.}
Table~\ref{tab:sim-baseline-sample} compares the cross-fitted quadratic Lasso with the oracle at
$\xi=1$.  Its efficiency gap narrows with sample size: the oracle half-width ratio falls from
$1.095$ at $n=200$ to $1.039$ at $n=500$ and $1.018$ at $n=1000$, while prediction error falls
from $0.408$ to $0.118$ and $0.048$.  Its TPR correspondingly rises from $0.131$ to $0.305$ and
$0.396$.  The maximum bias is nearly identical to the oracle at each sample size, indicating
that the remaining small-sample bias is not generated by cross-fitted baseline estimation.

\begin{table}[H]
\centering
\small
\setlength{\tabcolsep}{4pt}
\begin{tabular}{@{}rlrrrrrr@{}}
\hline
$n$ & Baseline & Coverage & FWER & TPR & Width/oracle & Prediction error & Max. bias\\
\hline
200  & Oracle          & 0.933 & 0.061 & 0.204 & 1.000 & --    & 0.197\\
     & Quadratic CF    & 0.942 & 0.057 & 0.131 & 1.095 & 0.408 & 0.197\\
     & Quadratic no CF & 0.902 & 0.064 & 0.147 & 0.952 & 0.287 & 0.256\\
\hline
500  & Oracle          & 0.943 & 0.056 & 0.357 & 1.000 & --    & 0.082\\
     & Quadratic CF    & 0.939 & 0.060 & 0.305 & 1.039 & 0.118 & 0.082\\
\hline
1000 & Oracle          & 0.947 & 0.050 & 0.429 & 1.000 & --    & 0.039\\
     & Quadratic CF    & 0.955 & 0.043 & 0.396 & 1.018 & 0.048 & 0.039\\
\hline
\end{tabular}
\caption{Baseline comparison over sample size at $(K,p,\xi)=(4,50,1)$ under the sparse signal
configuration.  Coverage is for the simultaneous band, FWER is strong Romano--Wolf FWER, and
maximum bias is taken over the eight signal cells.  The no-cross-fitting procedure at $n=200$
is a diagnostic outside the assumptions of the theory.}
\label{tab:sim-baseline-sample}
\end{table}

The no-cross-fitting diagnostic explains why sample splitting is not merely a technical device.
At $n=200$, fitting and tuning the quadratic Lasso on the score observations themselves produces
an apparently favorable width ratio of $0.952$, but simultaneous coverage falls to $0.902$ and
maximum absolute bias rises to $0.256$.  The coverage shortfall is more than five Monte Carlo
standard errors below $0.95$.  By comparison, cross-fitting raises coverage to $0.942$ and
reduces maximum bias to $0.197$, at the cost of the expected small-sample efficiency loss.  The
cross-fitted quadratic procedure still has a mild finite-sample distortion under the global
modifier-null configuration at $n=200$, where its FWER is $0.066$ (about two Monte Carlo
standard errors above $0.05$),
but this distortion is absent at the larger sample sizes.  Overall, the experiment supports the
claimed validity--efficiency separation and shows that the practical quadratic baseline
approaches oracle efficiency as $n$ increases.

\section{Discussion and extensions}\label{sec:disc}

This paper develops a unified route from heterogeneous factorial effects to simultaneous
inference over the covariate-by-contrast grid.  The central device is the residualized
Walsh--Hadamard score.  Because the assignment probabilities are known, subtracting any
treatment-invariant baseline leaves the conditional mean of every nonempty factorial contrast
unchanged.  Cross-fitting then turns baseline estimation error into an exactly conditionally
centered perturbation within each validation fold.  This structure permits all observations to
contribute to every contrast and separates the role of the baseline in identification from its
role in precision.

This separation should not be interpreted as saying that an arbitrary estimated baseline is
asymptotically harmless.  The pointwise theory still requires foldwise weighted consistency,
and the score Lasso requires the baseline-induced empirical process to remain below its penalty
scale.  The simultaneous results additionally require uniform linearization, studentization,
and high-dimensional moment conditions.  What the randomization identity removes is a
prescribed polynomial baseline rate and the usual product-rate restriction between two
estimated nuisance functions.  Thus the practical message is to cross-fit a flexible
treatment-invariant predictor and to regard its predictive quality mainly as an efficiency
choice, while still checking the regularity conditions needed by the score regression.

The simulations support this interpretation.  Simultaneous coverage and Romano--Wolf strong
FWER remain close to their nominal levels across the primary designs, which range from $300$ to
$3150$ effect-modifier cells and vary the factorial dimension, covariate dimension, and sample
size.  Better baseline prediction translates into materially narrower bands and greater power:
the cross-fitted quadratic Lasso approaches the oracle as $n$ grows and is within about $2\%$ of
the oracle width at $n=1000$ when prognostic variation is present.  By contrast, omitting
cross-fitting at $n=200$ produces shorter-looking intervals but simultaneous coverage of only
$0.902$.  This diagnostic underscores that sample splitting is part of the inferential design,
not merely a technical convenience.  The modest improvement over Bonferroni--Holm in the
Gaussian designs also gives an appropriately limited empirical conclusion: the multiplier
bootstrap adapts to dependence and is well calibrated here, but need not yield a large power
gain in every covariance structure.

The procedure is computationally feasible despite the size of the inferential grid.  All
contrasts share the same covariate design, so the nodewise regressions are computed once rather
than once per contrast.  The same multiplier draws can also be reused to construct the
simultaneous band and throughout the Romano--Wolf step-down algorithm.  Accordingly, the main
incremental cost of adding factorial contrasts lies in forming and fitting the contrast-specific
scores, not in repeating the design-side orthogonalization.

The inferential target should follow the scientific question. An investigator may prespecify
a particular treatment interaction and ask whether its effect varies with any of the included
baseline features, or prespecify a baseline characteristic and ask whether it modifies any
factorial contrast. The restricted tests in Remark~\ref{rem:prespecified-groups} address these
questions using a critical value calibrated to the relevant cells. When the objective is to
report effect-modification patterns across the entire grid, the simultaneous band provides
joint uncertainty statements; when the objective is to identify individual modifier cells with
strong familywise error control, the Romano--Wolf step-down procedure provides a corresponding
selection rule. A rejection of a group null establishes evidence of at least one nonzero
projection coefficient within that group. Attribution to particular cells should use
simultaneous intervals or multiplicity-adjusted cellwise tests.

Several limitations qualify these conclusions.  First, the estimand $\theta_{jS}^\ast$ is a
population projection coefficient defined by the chosen
feature dictionary, its scaling, and the covariate distribution.  It summarizes linear effect
modification after adjustment for the other dictionary elements; it is not generally a
derivative, a conditional causal effect at a point, or an assumption that the true modification
surface is linear.  These choices should therefore be reported explicitly, and substantive
interpretation should distinguish a projection summary from a fully nonparametric modifier
effect.  The asymptotic theory also holds $K$ fixed, assumes known positive assignment
probabilities, and lets the grid grow through $p=p_n$.  It relies on sparse score and nodewise
regressions and, in the primitive verification, bounded envelopes.  Designs with growing $K$,
dense projection coefficients, or substantially heavier tails require different rate budgets
and are not covered by the present results.

Extensions beyond individually randomized experiments are also important.  Many factorial and
conjoint experiments contain repeated evaluations within respondent
\citep{HHY2014,HainmuellerHopkins2015}.  When respondents are independent sampling units,
averaging the profile-level residualized scores within respondent and using respondent-level
multipliers is a natural extension.  The randomization-based centering argument can survive
this aggregation, but the variance estimation and Gaussian approximation must be reformulated
at the cluster level.  Observational factorial treatments present a more fundamental change.
If $e_a(x)=\PP(A=a\mid X=x)$ is unknown, replacing it by $\hat e_a(x)$ destroys the exact
identity used in Lemma~\ref{lem:global}.  A natural route is a Neyman-orthogonal score whose
remainder depends on products of propensity and outcome-regression errors.  Establishing the
corresponding pointwise and simultaneous theory, possibly with factorized or low-order models
for the $2^K$ treatment probabilities, is left for future work.

A further direction is to exploit structure shared across modifier contrasts.  The current
estimator fits each contrast separately and uses sparsity within each column of
$\Theta^\ast=(\theta^\ast_{jS})_{j\in[p],S\in\cS}$.  If a small number of modifier patterns is
shared across many contrasts, joint penalties, group structure, or a low-rank factorization of
$\Theta^\ast$ may improve estimation and power.  Such regularization changes the debiasing
problem and requires new cellwise inference.  Likewise, the direct-score identity can be
combined with nonlinear learners for $\tau_S(x)$, but simultaneous inference would then have
to be formulated for a different, explicitly defined collection of nonlinear summaries.  A
held-out comparison between the fitted linear projection and a flexible score regression can
serve as a useful diagnostic, although a formal calibration test is beyond the present paper.

Robustness to design imbalance and tail behavior is another important issue.  Positive but
unequal assignment probabilities are allowed for fixed $K$, yet severe imbalance
inflates the inverse-probability multipliers and can make the bands too wide to be informative.
Balanced allocation is therefore the natural primary design, with strongly unbalanced designs
treated as sensitivity analyses.  Extending the primitive theory from bounded envelopes to
sub-Gaussian or heavier-tailed scores would require correspondingly stronger control of the
Gaussian approximation and variance estimators, potentially together with truncation or robust
score construction.

Finally, the choice of multiplicity criterion marks a substantive boundary of the present
analysis.  This paper establishes simultaneous coverage and strong FWER control, not false
discovery rate control.  FDR procedures probe farther into the tail of the studentized maximum
and require moderate-deviation and dependence conditions beyond the additive Gaussian
approximation proved here; those questions are addressed separately in the companion paper.

Within this scope, the results show that factorial randomization supplies more than
identification of average contrasts.  Combined with a shared residualized score, cross-fitted
prediction, and high-dimensional bootstrap calibration, it also provides a practical basis for
familywise-valid inference on which baseline covariates modify which factorial effects.

\section*{Acknowledgement}
The authors used GPT-6 Astra (OpenAI) to translate the manuscript from Japanese into English. The authors subsequently reviewed and edited the translated text and take full responsibility for the content of this publication.

\clearpage
\appendix
\renewcommand{\thesection}{S\arabic{section}}
\renewcommand{\thesubsection}{\thesection.\arabic{subsection}}
\renewcommand{\theequation}{S\arabic{equation}}
\renewcommand{\thetheorem}{S\arabic{theorem}}
\renewcommand{\theproposition}{S\arabic{proposition}}
\renewcommand{\thelemma}{S\arabic{lemma}}
\renewcommand{\thecorollary}{S\arabic{corollary}}
\renewcommand{\theassumption}{S\arabic{assumption}}
\renewcommand{\theremark}{S\arabic{remark}}
\renewcommand{\thedefinition}{S\arabic{definition}}
\providecommand{\theHsection}{}
\providecommand{\theHequation}{}
\providecommand{\theHtheorem}{}
\providecommand{\theHproposition}{}
\providecommand{\theHlemma}{}
\providecommand{\theHcorollary}{}
\providecommand{\theHassumption}{}
\providecommand{\theHremark}{}
\providecommand{\theHdefinition}{}
\renewcommand{\theHsection}{supp.section.\arabic{section}}
\renewcommand{\theHequation}{supp.equation.\arabic{equation}}
\renewcommand{\theHtheorem}{supp.theorem.\arabic{theorem}}
\renewcommand{\theHproposition}{supp.proposition.\arabic{proposition}}
\renewcommand{\theHlemma}{supp.lemma.\arabic{lemma}}
\renewcommand{\theHcorollary}{supp.corollary.\arabic{corollary}}
\renewcommand{\theHassumption}{supp.assumption.\arabic{assumption}}
\renewcommand{\theHremark}{supp.remark.\arabic{remark}}
\renewcommand{\theHdefinition}{supp.definition.\arabic{definition}}
\setcounter{section}{0}
\setcounter{equation}{0}
\setcounter{theorem}{0}
\setcounter{proposition}{0}
\setcounter{lemma}{0}
\setcounter{corollary}{0}
\setcounter{assumption}{0}
\setcounter{remark}{0}
\setcounter{definition}{0}

\section*{Supplementary material}
This supplement provides primitive sufficient conditions for the pointwise and simultaneous
inference results in the main text.  Section~\ref{sec:primitive-pointwise} verifies the
pointwise nuisance rates, including the effect of the estimated baseline on the score Lasso.
Section~\ref{sec:supp-uniform-sufficient} gives empirical and population sufficient conditions
for uniform studentization, correlation estimation, and the simultaneous Gaussian and
multiplier-bootstrap approximations.


\section{Primitive sufficient conditions for the pointwise nuisance rates}
\label{sec:primitive-pointwise}

Assumption~\ref{as:reg}(P4) isolates the nuisance quantities used in the pointwise
linearization. We now derive those quantities from a population eigenvalue condition,
sparsity, tail conditions, and a foldwise conditional-Bernstein condition for the baseline
perturbation. For $r\in\{1,2\}$, define the Orlicz norm
\[
\|U\|_{\psi_r}:=\inf\{c>0:\E\exp(|U|^r/c^r)\le2\}.
\]
Let $T_S=\{k:\theta_{kS}^\ast\ne0\}$ and
$T_j=\{k\ne j:\gamma_{jk}\ne0\}$, with cardinalities $s_S=|T_S|$ and $s_j=|T_j|$.
Write
\[
\ell_p:=\log(2p),\qquad
\ell_d:=\log(2pM),\qquad
\ell_{np}:=\log(2np).
\]
Here $\ell_p$ is the logarithmic price of maximizing over the $p$ Lasso-score coordinates,
$\ell_d$ is the corresponding scale over the $pM$ coefficient cells, and $\ell_{np}$ is used
only to control the sample envelope over observations and coordinates. For a matrix
$B=(B_{km})$, define
$\|B\|_{\max}:=\max_{k,m}|B_{km}|$. The implicit constant in $\lesssim$ below may depend only
on the fixed constants displayed in the assumptions and on the limiting fold proportions, but
not on $n,p,$ or the target cell; it may depend on the fixed factorial design.
Because $K$ and the positive assignment probabilities are fixed, the design constant
$C_q:=\max_{S\in\cS}\max_{a\in\cA}|q_S(a)|$ is finite.

\begin{assumption}[Primitive pointwise sparse-regression conditions]
\label{as:primitive-pointwise}
For the deterministic target sequence $(j,S)$, the following conditions hold.
\begin{enumerate}[leftmargin=1.8em,label=(Q\arabic*)]
\item \emph{Centered sub-Gaussian design and population eigenvalues.}
For constants $K_Z<\infty$ and $0<c_\Sigma<C_\Sigma<\infty$,
\[
\E Z=0,
\qquad \E Z_k^2=1\quad(k\in[p]),
\qquad
\sup_{\|u\|_2=1}\|u^\top Z\|_{\psi_2}\le K_Z,
\]
\[
c_\Sigma\le\lambda_{\min}(\Sigma_Z)
\le\lambda_{\max}(\Sigma_Z)\le C_\Sigma.
\]

\item \emph{Sparse targets and oracle noise.}
$\|\theta_S^\ast\|_0=s_S$, $\|\gamma_j\|_0=s_j$, and
$\|\varepsilon_S\|_{\psi_2}\le K_\varepsilon$ for a constant $K_\varepsilon<\infty$.
The projection normal equations hold:
$\E(\varepsilon_S)=0$, $\E(Z\varepsilon_S)=0$, and
$\E(Z_{-j}V_j)=0$.

\item \emph{Conditional-Bernstein control of the baseline score.}
Define
\[
\Delta_{g,Z,n}^2
:=\max_{1\le\ell\le L}\max_{1\le k\le p}
\mathbb P_{n,\ell}(\tilde Z_k^2\Delta_{g,\ell}^2),
\qquad
B_{g,Z,n}
:=\max_{1\le\ell\le L}\max_{i\in I_\ell}\max_{1\le k\le p}
|\tilde Z_{ik}\Delta_{g,\ell}(X_i)|.
\]
Then
\[
\Delta_{g,Z,n}=o_p(1),
\qquad
B_{g,Z,n}\sqrt{\frac{\ell_p}{n}}=o_p(1).
\]

\item \emph{Dimension and sparsity growth.}
$\ell_d/n\to0$ and
\begin{equation}\label{eq:primitive-pointwise-growth}
\frac{s_S\sqrt{\ell_p\ell_d}}{\sqrt n}\to0,
\qquad
\frac{s_j\ell_p}{\sqrt n}\to0.
\end{equation}
\end{enumerate}
\end{assumption}

\begin{lemma}[Conditional-Bernstein bound for the baseline-induced Lasso score]
\label{lem:baseline-lasso-score}
Under Assumptions~\ref{as:rct}, \ref{as:reg}(P1), and
\ref{as:primitive-pointwise}(Q3),
\[
\left\|\mathbb P_n(\tilde ZR_S)\right\|_\infty
=O_p\!\left(
\Delta_{g,Z,n}\sqrt{\frac{\ell_p}{n}}
+B_{g,Z,n}\frac{\ell_p}{n}
\right)
=o_p\!\left(\sqrt{\frac{\ell_p}{n}}\right).
\]
Consequently, because $\ell_p\le\ell_d$, this baseline-induced score is
$o_p(\sqrt{\ell_d/n})$.
\end{lemma}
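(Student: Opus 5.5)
The argument mirrors the baseline-remainder step in the proof of Theorem~\ref{thm:uniform-gaussian}, with the Lasso-score coordinates $k\in[p]$ in place of the cells $h$. Decompose over folds:
\[
\mathbb P_n(\tilde Z_kR_S)=\sum_{\ell=1}^L\frac{n_\ell}{n}\,\mathbb P_{n,\ell}(\tilde Z_kR_S),
\qquad
\mathbb P_{n,\ell}(\tilde Z_kR_S)=-\frac1{n_\ell}\sum_{i\in I_\ell}\tilde Z_{ik}\Delta_{g,\ell}(X_i)q_S(A_i).
\]
Each $\tilde Z_{ik}=Z_{ik}-\bar Z_k$ is a function of $X_1,\dots,X_n$, and $\Delta_{g,\ell}$ is $\mathcal F_{-\ell}$-measurable by (P1). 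Hence the coefficients $w_{ik}:=\tilde Z_{ik}\Delta_{g,\ell}(X_i)$ are $\mathcal G_\ell$-measurable. By Lemma~\ref{lem:global}, conditional on $\mathcal G_\ell$ the summands $-w_{ik}q_S(A_i)$, $i\in I_\ell$, are independent and centered, with conditional variances $\kappa_2w_{ik}^2$ and envelope $|w_{ik}q_S(A_i)|\le C_qB_{g,Z,n}$.

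\emph{Conditional Bernstein and union bound.} Define the $\mathcal G_\ell$-measurable quantities $\Delta_{\ell}^2:=\max_k\mathbb P_{n,\ell}(\tilde Z_k^2\Delta_{g,\ell}^2)$ and $H_\ell:=\max_{i\in I_\ell,k}|w_{ik}|$. Apply the bounded-variable Bernstein inequality under the regular conditional law given $\mathcal G_\ell$ to $n_\ell^{-1}\sum_{i\in I_\ell}w_{ik}q_S(A_i)$ with $x=2\ell_p$. Then take a union bound over the $p$ coordinates; the factor $2pe^{-2\ell_p}=1/(2p)$ is at most $1/2$ but does not tend to zero. To get a vanishing failure probability I would instead take $x=\ell_p+\log(1/\eta)$ for arbitrary fixed $\eta>0$. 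This gives, with conditional probability at least $1-2\eta$,
\[
\max_k\bigl|\mathbb P_{n,\ell}(\tilde Z_kR_S)\bigr|\le C_\eta\Bigl\{\Delta_\ell\sqrt{\ell_p/n_\ell}+H_\ell\,\ell_p/n_\ell\Bigr\}.
\]
Since the bound holds conditionally for every $\mathcal G_\ell$, it holds unconditionally, which is exactly the definition of $O_p$. Summing over the fixed number $L$ of folds and using $n_\ell/n\to\pi_\ell>0$ from (P1) replaces $n_\ell$ by $n$. With $\Delta_{g,Z,n}=\max_\ell\Delta_\ell$ and $B_{g,Z,n}=\max_\ell H_\ell$, this yields the first $O_p$ bound.

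\emph{Reduction to $o_p(\sqrt{\ell_p/n})$.} Factor out $\sqrt{\ell_p/n}$:
\[
\Delta_{g,Z,n}\sqrt{\tfrac{\ell_p}{n}}+B_{g,Z,n}\tfrac{\ell_p}{n}
=\sqrt{\tfrac{\ell_p}{n}}\Bigl\{\Delta_{g,Z,n}+B_{g,Z,n}\sqrt{\tfrac{\ell_p}{n}}\Bigr\}.
\]
Both terms in braces are $o_p(1)$ by (Q3). Because $O_p(a_n)\cdot$ with $a_n=o_p(b_n)$ is handled by a product of a tight variable with a vanishing random rate, one should phrase this carefully. I would do it directly: for every $\epsilon,\eta$, intersect the conditional-Bernstein event with $\{\Delta_{g,Z,n}+B_{g,Z,n}\sqrt{\ell_p/n}\le\epsilon/C_\eta\}$, whose probability tends to one. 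The final claim then follows from $\ell_p\le\ell_d$.

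\emph{Main obstacle.} The delicate step is the conditioning. The Bernstein scale parameters $\Delta_\ell,H_\ell$ are random but $\mathcal G_\ell$-measurable, and $\tilde Z$ involves the full-sample mean $\bar Z$, including validation observations. This is harmless precisely because $\mathcal G_\ell$ contains all of $X_1,\dots,X_n$ while leaving the assignments $\{A_i:i\in I_\ell\}$ free. I would verify this measurability explicitly before invoking Lemma~\ref{lem:global}. Once it is in place, the rest is bookkeeping.
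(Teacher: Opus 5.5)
Your proposal is correct and follows essentially the paper's proof. Both arguments apply conditional Bernstein given $\mathcal G_\ell$ with the $\mathcal G_\ell$-measurable scales $\Delta_\ell,H_\ell$, take a union bound over the $p$ coordinates, sum over the fixed number of folds, and then invoke (Q3). Your choice $x=\ell_p+\log(1/\eta)$ is an explicit version of the paper's ``sufficiently large multiple of $\ell_p$'' (with the multiple chosen depending on $\eta$); either choice keeps the failure probability small without requiring $p\to\infty$.
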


\begin{proof}
For a fold $\ell$ and coordinate $k$, write
\[
T_{\ell kS}:=\frac1n\sum_{i\in I_\ell}\tilde Z_{ik}R_{iS}
=-\frac1n\sum_{i\in I_\ell}
\tilde Z_{ik}\Delta_{g,\ell}(X_i)q_S(A_i).
\]
For this fold, set
\[
\Delta_{\ell,n}^2:=\max_{k\le p}
\mathbb P_{n,\ell}(\tilde Z_k^2\Delta_{g,\ell}^2),
\qquad
B_{\ell,n}:=\max_{i\in I_\ell}\max_{k\le p}
|\tilde Z_{ik}\Delta_{g,\ell}(X_i)|.
\]
Both quantities are $\mathcal G_\ell$-measurable, and they are bounded above by
$\Delta_{g,Z,n}^2$ and $B_{g,Z,n}$, respectively.
Conditional on $\mathcal G_\ell$, the summands are independent and mean zero by
Lemma~\ref{lem:global}. Their conditional variance sum and envelope satisfy
\[
\sum_{i\in I_\ell}
\E\{\tilde Z_{ik}^2R_{iS}^2\mid\mathcal G_\ell\}
\le \kappa_2 n_\ell\Delta_{\ell,n}^2,
\qquad
\max_{i\in I_\ell}|\tilde Z_{ik}R_{iS}|
\le C_qB_{\ell,n}.
\]
Since $n_\ell/n$ is bounded above and away from zero and $L$ is fixed, conditional
Bernstein's inequality gives, for every $t>0$ and a constant $C$ independent of $k$,
\[
\PP\!\left(
|T_{\ell kS}|>
C\left\{\Delta_{\ell,n}\sqrt{\frac{t}{n}}
+B_{\ell,n}\frac{t}{n}\right\}
\,\middle|\,\mathcal G_\ell
\right)
\le 2e^{-t}.
\]
Taking $t$ to be a sufficiently large multiple of $\ell_p=\log(2p)$, applying a union
bound over $k\in[p]$, replacing the fold-specific quantities by their maxima, and then
summing the fixed number of fold contributions yields the first display. The two conditions
in (Q3) make its first and second terms, respectively,
$o_p(\sqrt{\ell_p/n})$.
\end{proof}

\begin{proposition}[Primitive verification of Assumption~\ref{as:reg}(P4)]
\label{prop:primitive-pointwise}
Suppose Assumptions~\ref{as:rct}, \ref{as:reg}(P1), and
\ref{as:primitive-pointwise} hold. Set
\[
\lambda=A_\theta\sqrt{\frac{\ell_d}{n}},
\qquad
\lambda_{j,n}=A_\gamma\sqrt{\frac{\ell_p}{n}},
\]
where $A_\theta,A_\gamma$ are fixed and sufficiently large as functions only of
$K_Z,K_\varepsilon,C_q,c_\Sigma,C_\Sigma,L,$ and the limiting fold proportions. Then
\[
\|\hat\theta_S-\theta_S^\ast\|_1
=O_p\!\left(s_S\sqrt{\frac{\ell_d}{n}}\right),
\qquad
\|\hat\gamma_j-\gamma_j\|_1
=O_p\!\left(s_j\sqrt{\frac{\ell_p}{n}}\right),
\]
\[
\left\|\mathbb P_n(\tilde Z_{-j}\varepsilon_S)\right\|_\infty
=O_p\!\left(\sqrt{\frac{\ell_p}{n}}\right),
\qquad
\left\|\mathbb P_n(\tilde Z_{-j}\tilde Z_j)\right\|_\infty=O_p(1).
\]
Consequently, Assumption~\ref{as:reg}(P4) holds with
\[
r_{\theta,n}=s_S\sqrt{\frac{\ell_d}{n}},\quad
r_{\gamma,n}=s_j\sqrt{\frac{\ell_p}{n}},\quad
r_{\varepsilon,n}=\sqrt{\frac{\ell_p}{n}},\quad
b_{Z,n}=1,
\]
and the three products in \eqref{eq:pointwise-rates} vanish by
\eqref{eq:primitive-pointwise-growth}.
\end{proposition}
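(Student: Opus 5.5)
The plan is to follow the standard Lasso oracle-inequality route separately for the score Lasso and the nodewise Lasso. The only nonstandard ingredient is that the score Lasso's response carries the baseline perturbation $R_S$. We work throughout on a high-probability event $\mathcal E_n$ built from three pieces: (a) a restricted-eigenvalue (compatibility) bound for the empirical centered Gram matrix $\hat\Sigma:=\mathbb P_n(\tilde Z\tilde Z^\top)$; (b) sup-norm bounds on the empirical scores; and (c) the bound $\|\mathbb P_n(\tilde Z_{-j}\tilde Z_j)\|_\infty\le C$. Piece (c) is immediate. By (Q1) each $Z_kZ_m$ is sub-exponential with $\psi_1$-norm at most $K_Z^2$, so Bernstein plus a union bound over $p^2$ pairs gives $\|\mathbb P_n(ZZ^\top)-\Sigma_Z\|_{\max}=O_p(\sqrt{\ell_p/n})$ (using $\ell_p/n\to0$). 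The centering correction $\bar Z\bar Z^\top$ has max-norm $O_p(\ell_p/n)$, and $\|\Sigma_Z\|_{\max}\le1$, which yields $b_{Z,n}=1$.

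\textbf{Step 1: restricted eigenvalues.} For the restricted eigenvalue condition I will use the transfer argument
\[
|u^\top(\hat\Sigma-\Sigma_Z)u|\le\|u\|_1^2\|\hat\Sigma-\Sigma_Z\|_{\max}.
\]
On the cone $\|u_{T^c}\|_1\le3\|u_T\|_1$ this is at most $16s\|u\|_2^2\,O_p(\sqrt{\ell_p/n})$. By (Q4), $s_S\sqrt{\ell_p/n}\to0$ and $s_j\sqrt{\ell_p/n}\to0$ (both are implied by the displayed growth, since $\ell_p\ge\log 2$). Hence the empirical compatibility constants exceed $c_\Sigma/2$ w.p.a.\ one on both cones. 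The nodewise design $\tilde Z_{-j}$ is a principal submatrix, so the same bound applies to it.

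\textbf{Step 2: noise scores.} For the nodewise Lasso the noise is $V_j$, which is sub-Gaussian with $\|V_j\|_{\psi_2}\le K_Z\|(1,-\gamma_j)\|_2\lesssim K_Z/\sqrt{c_\Sigma}$. Each $Z_kV_j$ is therefore mean-zero sub-exponential by the normal equations, and
\[
\|\mathbb P_n(\tilde Z_{-j}V_j)\|_\infty=O_p(\sqrt{\ell_p/n})
\]
after accounting for the $\bar Z\bar V_j$ term. With $A_\gamma$ large, this is at most $\lambda_{j,n}/2$ w.p.a.\ one. The same argument with $\varepsilon_S$ in (Q2) gives the third claimed display, $\|\mathbb P_n(\tilde Z_{-j}\varepsilon_S)\|_\infty=O_p(\sqrt{\ell_p/n})$; note $\mathbb P_n\tilde Z=0$ absorbs centering of $\varepsilon_S$. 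For the score Lasso, the centered response satisfies
\[
\tilde\psi_S-\tilde Z\theta_S^\ast=(\varepsilon_S-\bar\varepsilon_S)+(R_S-\bar R_S),
\]
so its gradient at $\theta_S^\ast$ is $\mathbb P_n(\tilde Z\varepsilon_S)+\mathbb P_n(\tilde ZR_S)$. The first term is $O_p(\sqrt{\ell_p/n})$. The second is $o_p(\sqrt{\ell_d/n})$ by Lemma~\ref{lem:baseline-lasso-score}, which is exactly where (Q3) enters. Choosing $A_\theta$ large then places the whole gradient below $\lambda/2$ w.p.a.\ one.

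\textbf{Step 3: oracle inequality.} On $\mathcal E_n$ the basic inequality yields the usual bounds
\[
\|\hat\theta_S-\theta_S^\ast\|_1\lesssim s_S\lambda,\qquad \|\hat\gamma_j-\gamma_j\|_1\lesssim s_j\lambda_{j,n},
\]
with constants depending only on $c_\Sigma$. This gives $r_{\theta,n}$ and $r_{\gamma,n}$. For the products in \eqref{eq:pointwise-rates}:
\begin{itemize}
\item $\sqrt n\lambda_{j,n}r_{\theta,n}\asymp s_S\sqrt{\ell_p\ell_d/n}\to0$;
\item $\sqrt n\,r_{\gamma,n}r_{\varepsilon,n}=s_j\ell_p/\sqrt n\to0$;
\item $r_{\gamma,n}b_{Z,n}=s_j\sqrt{\ell_p/n}\to0$.
\end{itemize}

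The main obstacle is Step 2 for the score Lasso. Its response error is neither sub-Gaussian nor independent of the design across folds, because $\hat g^{(-\ell)}$ depends on other folds' data. This is the reason the gradient must be split into an oracle part and a foldwise conditionally centered part, with the Bernstein bound of Lemma~\ref{lem:baseline-lasso-score} applied under $\mathcal G_\ell$. A secondary subtlety is making the constants $A_\theta$ and $A_\gamma$ uniform: the $\psi_1$ bounds should be tracked through $K_Z,K_\varepsilon$ only, and the event $\{\|\mathbb P_n(\tilde ZR_S)\|_\infty\le\lambda/4\}$ has probability tending to one for any fixed $A_\theta>0$.
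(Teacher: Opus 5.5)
Your proposal is correct and follows the paper's proof essentially step for step. It uses sub-exponential Bernstein bounds with a union bound for the oracle scores $\mathbb P_n(\tilde Z\varepsilon_S)$ and $\mathbb P_n(\tilde Z_{-j}V_j)$ (with $\|\gamma_j\|_2=O(1)$ giving $V_j$ a uniformly bounded sub-Gaussian norm), Lemma~\ref{lem:baseline-lasso-score} for the baseline part of the score-Lasso gradient, empirical compatibility on the sparse cones, the standard oracle inequality, and substitution into \eqref{eq:pointwise-rates}.

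The only difference is the compatibility step. You use the elementary transfer bound $|u^\top(\hat\Sigma-\Sigma_Z)u|\le\|u\|_1^2\|\hat\Sigma-\Sigma_Z\|_{\max}$, which needs $s\sqrt{\ell_p/n}\to0$, whereas the paper invokes a sub-Gaussian restricted-eigenvalue result that needs only $s\ell_p/n\to0$. Since \eqref{eq:primitive-pointwise-growth} already implies your stronger requirement, your route is self-contained and loses nothing here.
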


\begin{proof}
Under (Q1)--(Q2), products of the relevant sub-Gaussian variables are sub-exponential.
Indeed, the projection identity
$\E(Z_{-j}Z_{-j}^\top)\gamma_j=\E(Z_{-j}Z_j)$ and the eigenvalue bounds imply
$\|\gamma_j\|_2=O(1)$, so $V_j=Z_j-Z_{-j}^\top\gamma_j$ has a uniformly bounded
sub-Gaussian norm.
Bernstein's inequality and a union bound over $p$ coordinates give
\[
\left\|\mathbb P_n(\tilde Z\varepsilon_S)\right\|_\infty
=O_p\!\left(\sqrt{\frac{\ell_p}{n}}\right),
\qquad
\left\|\mathbb P_n(\tilde Z_{-j}V_j)\right\|_\infty
=O_p\!\left(\sqrt{\frac{\ell_p}{n}}\right).
\]
The empirical centering terms are of smaller order. Lemma~\ref{lem:baseline-lasso-score} and
$\ell_p\le\ell_d$ therefore imply
\[
\left\|\mathbb P_n\!\left[
\tilde Z\{\tilde\psi_S-\tilde Z^\top\theta_S^\ast\}
\right]\right\|_\infty
=O_p\!\left(\sqrt{\frac{\ell_d}{n}}\right).
\]
Because $M$ is fixed, $\ell_d\asymp\ell_p$, and (Q4) implies the usual
$s_S\ell_d/n\to0$ and $s_j\ell_p/n\to0$ sample-size requirements. The sub-Gaussian design
and population eigenvalue bounds in (Q1) therefore give the sample compatibility bounds on
the $s_S$- and $s_j$-sparse cones. The standard
Lasso oracle inequalities then yield
\[
\|\hat\theta_S-\theta_S^\ast\|_1
=O_p(s_S\lambda),
\qquad
\|\hat\gamma_j-\gamma_j\|_1
=O_p(s_j\lambda_{j,n}),
\]
which are the first two asserted rates. The same sub-exponential concentration gives the
asserted $r_{\varepsilon,n}$ bound and
$\|\mathbb P_n(\tilde Z_{-j}\tilde Z_j)\|_\infty=O_p(1)$. Substitution of these four rates
into \eqref{eq:pointwise-rates} gives
\[
\sqrt n\,\lambda_{j,n}r_{\theta,n}
\lesssim \frac{s_S\sqrt{\ell_p\ell_d}}{\sqrt n},
\qquad
\sqrt n\,r_{\gamma,n}r_{\varepsilon,n}
\lesssim \frac{s_j\ell_p}{\sqrt n},
\]
and $r_{\gamma,n}b_{Z,n}=o(1)$ follows from the second condition in
\eqref{eq:primitive-pointwise-growth}. This proves (P4).
\end{proof}

\begin{corollary}[A supremum-norm route for the cross-fitted baseline]
\label{cor:baseline-score-primitive}
Suppose Assumptions~\ref{as:rct}, \ref{as:reg}(P1), and
\ref{as:primitive-pointwise}(Q1),(Q2),(Q4) hold. Let
\[
a_{g,n}:=\max_{1\le\ell\le L}\|\Delta_{g,\ell}\|_\infty,
\qquad
\|h\|_\infty:=\sup_{x\in\mathcal X}|h(x)|,
\]
and assume
\[
a_{g,n}=o_p(1),
\qquad
\frac{\ell_p\ell_{np}}{n}\longrightarrow0.
\]
Then Assumption~\ref{as:primitive-pointwise}(Q3) holds. Consequently,
Proposition~\ref{prop:primitive-pointwise} applies, and the pointwise weighted-baseline
condition \eqref{eq:weighted-baseline} also holds.
\end{corollary}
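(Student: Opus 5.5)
The argument bounds the two quantities in (Q3) by $a_{g,n}$ times design-only quantities, then transfers to the weighted condition \eqref{eq:weighted-baseline} through $\hat V_j$.

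\emph{Step 1: the weighted second moment $\Delta_{g,Z,n}$.} Since $\Delta_{g,\ell}^2\le a_{g,n}^2$ pointwise,
\[
\Delta_{g,Z,n}^2\le a_{g,n}^2\max_{\ell\le L}\max_{k\le p}\mathbb P_{n,\ell}\tilde Z_k^2 .
\]
It therefore suffices to show $\max_{\ell,k}\mathbb P_{n,\ell}\tilde Z_k^2=O_p(1)$. First, $\tilde Z_k^2\le 2Z_k^2+2\bar Z_k^2$. Next, $Z_k^2$ is sub-exponential with $\psi_1$-norm bounded by $K_Z^2$ under (Q1), and $\E Z_k^2=1$. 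Bernstein's inequality with a union bound over $k\le p$ and the fixed number of folds then gives
\[
\max_{\ell,k}|\mathbb P_{n,\ell}Z_k^2-1|=O_p\bigl(\sqrt{\ell_p/n}+\ell_p/n\bigr)=o_p(1),
\]
using $n_\ell\asymp n$ from (P1) and $\ell_p\le\ell_d=o(n)$ from (Q4). The same argument gives $\max_k|\bar Z_k|=O_p(\sqrt{\ell_p/n})$. Together these yield $\Delta_{g,Z,n}=O_p(a_{g,n})=o_p(1)$.

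\emph{Step 2: the envelope $B_{g,Z,n}$.} Here
\[
B_{g,Z,n}\le a_{g,n}\max_{i\le n,k\le p}|\tilde Z_{ik}|\le a_{g,n}\Bigl(\max_{i,k}|Z_{ik}|+\max_k|\bar Z_k|\Bigr).
\]
The sub-Gaussian maximal inequality over the $np$ entries gives $\max_{i,k}|Z_{ik}|=O_p(\sqrt{\ell_{np}})$. Hence
\[
B_{g,Z,n}\sqrt{\ell_p/n}=O_p\bigl(a_{g,n}\sqrt{\ell_p\ell_{np}/n}\bigr)=o_p(1),
\]
by $a_{g,n}=o_p(1)$ together with $\ell_p\ell_{np}/n\to0$. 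This completes (Q3), and Proposition~\ref{prop:primitive-pointwise} then applies directly.

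\emph{Step 3: the weighted-baseline condition \eqref{eq:weighted-baseline}.} The first part is immediate, since $\mathbb P_{n,\ell}\Delta_{g,\ell}^2\le a_{g,n}^2=o_p(1)$. For the second part I would write
\[
\mathbb P_{n,\ell}(\hat V_j^2\Delta_{g,\ell}^2)\le a_{g,n}^2\,\mathbb P_{n,\ell}\hat V_j^2\le a_{g,n}^2 (n/n_\ell)\,\mathbb P_n\hat V_j^2 .
\]
This bound is exactly where the foldwise weight must be transferred to the full-sample $\hat V_j$. The nodewise objective comparison already used in the proof of Corollary~\ref{cor:pointwise-ci} gives
\[
\mathbb P_n\hat V_j^2\le\mathbb P_nZ_j^2=O_p(1),
\]
and that step uses no assumption beyond $\E Z_j^2=1$. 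Since $n/n_\ell\to\pi_\ell^{-1}$ by (P1), the right-hand side is $o_p(1)$.

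The main point needing care is the uniform-over-$k$ control in Step 1. Without it, a single coordinate could break $\max_k$, so the union bound must be stated explicitly using $\ell_p/n\to0$. Everything else is a direct supremum-norm domination.
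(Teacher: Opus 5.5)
Your proposal is correct and follows the paper's proof essentially step for step. You dominate $\Delta_{g,Z,n}$ and $B_{g,Z,n}$ by $a_{g,n}$ times $\max_{\ell,k}\{\mathbb P_{n,\ell}\tilde Z_k^2\}^{1/2}=O_p(1)$ and $\max_{i,k}|\tilde Z_{ik}|=O_p(\sqrt{\ell_{np}})$. You then transfer to \eqref{eq:weighted-baseline} through the nodewise objective comparison $\mathbb P_n\hat V_j^2\le\mathbb P_n\tilde Z_j^2$ and the fixed fold proportions. The only difference is that you spell out the Bernstein/union-bound and sub-Gaussian maximal steps, which the paper merely asserts.
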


\begin{proof}
Condition (Q1), $\ell_p/n\to0$, and the fixed number of folds imply
\[
\max_{\ell\le L}\max_{k\le p}
\mathbb P_{n,\ell}(\tilde Z_k^2)=O_p(1),
\qquad
\max_{i\le n}\max_{k\le p}|\tilde Z_{ik}|=O_p(\sqrt{\ell_{np}}).
\]
Consequently,
\[
\Delta_{g,Z,n}\le
a_{g,n}\max_{\ell,k}
\{\mathbb P_{n,\ell}(\tilde Z_k^2)\}^{1/2}=o_p(1)
\]
and
\[
B_{g,Z,n}\sqrt{\frac{\ell_p}{n}}
\le a_{g,n}\max_{i,k}|\tilde Z_{ik}|
\sqrt{\frac{\ell_p}{n}}
=o_p(1).
\]
Thus (Q3) holds. Moreover,
$\max_\ell\mathbb P_{n,\ell}\Delta_{g,\ell}^2\le a_{g,n}^2=o_p(1)$. The nodewise objective at
$\hat\gamma_j$ is no larger than at zero, so
$\mathbb P_n\hat V_j^2\le\mathbb P_n\tilde Z_j^2=O_p(1)$; fixed positive fold proportions
then give $\max_\ell\mathbb P_{n,\ell}\hat V_j^2=O_p(1)$. Hence
$\max_\ell\mathbb P_{n,\ell}(\hat V_j^2\Delta_{g,\ell}^2)=o_p(1)$, proving
\eqref{eq:weighted-baseline}.
\end{proof}

\paragraph{Why estimation requires more.}
Assumption~\ref{as:primitive-pointwise}(Q3) is stronger than the weighted consistency needed
for the final pointwise influence-function remainder because it controls the maximum of the
$p$ score-Lasso coordinates at the $\sqrt{\ell_p/n}$ scale. The observation-coordinate
logarithm $\ell_{np}$ enters only through the simple supremum-norm route used to bound the
sample envelope; it does not enlarge the Lasso penalty. Corollary~\ref{cor:baseline-score-primitive}
is only one sufficient route. More generally, (Q3) allows an unbounded baseline error whenever
its empirical weighted variance and envelope satisfy the displayed conditional-Bernstein
budgets. Thus no prescribed polynomial rate is needed for the pointwise influence-function
remainder, while the baseline contribution to the high-dimensional Lasso score remains below
the chosen penalty.


\section{Sufficient conditions for simultaneous inference}
\label{sec:supp-uniform-sufficient}

The simultaneous theorems are stated in terms of the uniform scale and correlation requirements
actually used by the Gaussian and bootstrap approximations. The following two lemmas give
convenient routes to the estimated-score correlation condition and to the baseline budget,
respectively; the subsequent corollary combines them with bounded sparse-regression
conditions. Throughout this section, write $\ell_p:=\log(2p)$ and retain
$\ell_n=\log(2nd_n)$ from the main text.

\begin{lemma}[A sufficient empirical-$L_2$ condition for
Assumptions~\ref{as:studentization} and \ref{as:score-cov}]
\label{lem:score-cov-sufficient}
Suppose Assumption~\ref{as:unif-rem}(U1) holds.
For an array $a=(a_i)_{i=1}^n$, write
$\|a\|_n=(\mathbb P_na_i^2)^{1/2}$. Define
\[
\Delta_{\mathrm{score},n}:=\max_{h\in\cH_n}\|\hat s_h-u_h\|_n,
\quad
m_n:=\max_{h\in\cH_n}|\mathbb P_nu_h|,
\quad
\Delta_n^0:=\max_{h,h'\in\cH_n}
|\mathbb P_n(\xi_h\xi_{h'})-\Sigma_{\xi,h,h'}|.
\]
If
\begin{equation}\label{eq:l2-score-sufficient}
\ell_n^2(\Delta_{\mathrm{score},n}+m_n+\Delta_n^0)=o_p(1),
\end{equation}
then Assumptions~\ref{as:studentization} and \ref{as:score-cov} hold.
\end{lemma}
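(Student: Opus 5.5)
The strategy is to compare the empirical second-moment quantities built from $\hat u_h$ with their oracle counterparts built from $u_h$. Every error is then bounded in empirical $L_2$ by $\Delta_{\mathrm{score},n}+m_n$, and the uniform bounds $c_\nu\le\nu_h\le C_\nu$ from (U1) convert absolute errors into relative ones. Throughout, write $\bar u_h:=u_h-\mathbb P_nu_h$ for the empirically centered oracle coordinate. Since $\hat u_h=\hat s_h-\mathbb P_n\hat s_h$, centering is an $L_2(\mathbb P_n)$ contraction, so
\[
\|\hat u_h-\bar u_h\|_n\le\|\hat s_h-u_h\|_n\le\Delta_{\mathrm{score},n}.
\]
Also $\|\bar u_h\|_n^2=\mathbb P_nu_h^2-(\mathbb P_nu_h)^2$.

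\emph{Step 1 (studentization).} The diagonal of $\Delta_n^0$ gives
\[
\max_h|\mathbb P_nu_h^2/\nu_h^2-1|\le\Delta_n^0 .
\]
Hence $\max_h\big|\|u_h\|_n/\nu_h-1\big|\le\Delta_n^0$, using $|\sqrt{x}-1|\le|x-1|$. Combining this with
\[
\big|\|\bar u_h\|_n-\|u_h\|_n\big|\le|\mathbb P_nu_h|\le m_n
\]
and the reverse triangle inequality $|\hat\nu_h-\|\bar u_h\|_n|\le\Delta_{\mathrm{score},n}$, we obtain
\[
\max_h\left|\frac{\hat\nu_h}{\nu_h}-1\right|\le\Delta_n^0+\frac{m_n+\Delta_{\mathrm{score},n}}{c_\nu}.
\]
By \eqref{eq:l2-score-sufficient} this is $o_p(\ell_n^{-2})$, which is stronger than \eqref{eq:studentization-relative-rate}. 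Positivity \eqref{eq:studentization-positivity} follows, because on the event where the right side is at most $1/2$ we have $\min_h\hat\nu_h\ge c_\nu/2>0$.

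\emph{Step 2 (correlation).} On that event, write $\hat\Sigma_{h,h'}-\Sigma_{\xi,h,h'}$ as three telescoping pieces.
\begin{enumerate}[label=(\alph*)]
\item The oracle error $\mathbb P_n(\xi_h\xi_{h'})-\Sigma_{\xi,h,h'}$, which is at most $\Delta_n^0$.
\item The centering error $-(\mathbb P_nu_h)(\mathbb P_nu_{h'})/(\nu_h\nu_{h'})$, which is at most $m_n^2/c_\nu^2$.
\item The difference between $\mathbb P_n(\hat u_h\hat u_{h'})/(\hat\nu_h\hat\nu_{h'})$ and $\mathbb P_n(\bar u_h\bar u_{h'})/(\nu_h\nu_{h'})$.
\end{enumerate}
For piece (c), add and subtract $\mathbb P_n(\hat u_h\hat u_{h'})/(\nu_h\nu_{h'})$. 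Cauchy--Schwarz gives
\[
|\mathbb P_n(\hat u_h\hat u_{h'}-\bar u_h\bar u_{h'})|\le\|\hat u_h-\bar u_h\|_n\|\hat u_{h'}\|_n+\|\bar u_h\|_n\|\hat u_{h'}-\bar u_{h'}\|_n\lesssim\Delta_{\mathrm{score},n},
\]
since $\|\hat u_h\|_n=\hat\nu_h$ and $\|\bar u_h\|_n$ are $O_p(1)$ uniformly by Step 1. The rescaling factor is controlled by
\[
\left|\frac{\nu_h\nu_{h'}}{\hat\nu_h\hat\nu_{h'}}-1\right|\lesssim\max_h\left|\frac{\hat\nu_h}{\nu_h}-1\right|,
\]
and it multiplies $|\mathbb P_n(\hat u_h\hat u_{h'})|/(\nu_h\nu_{h'})\le\hat\nu_h\hat\nu_{h'}/(\nu_h\nu_{h'})=O_p(1)$. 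Collecting terms,
\[
\Delta_{\Sigma,n}\lesssim\Delta_n^0+m_n^2+\Delta_{\mathrm{score},n}+\max_h|\hat\nu_h/\nu_h-1|=O_p(\Delta_{\mathrm{score},n}+m_n+\Delta_n^0),
\]
using $m_n^2\le m_n$ once $m_n\le1$. Multiplying by $\ell_n^2$ and applying \eqref{eq:l2-score-sufficient} yields \eqref{eq:bootstrap-correlation-rate}.

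The main care point is bookkeeping rather than any deep estimate: all $O_p(1)$ factors must be uniform over $h$. This holds because they reduce to $\hat\nu_h/\nu_h$, which Step 1 controls uniformly, and to the deterministic bounds from (U1). The argument also works on the high-probability event from Step 1 and uses the fact that the statements of Assumptions~\ref{as:studentization} and \ref{as:score-cov} are only required on $\mathcal E_{\nu,n}$.
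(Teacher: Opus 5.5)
Your proof is correct and follows essentially the same route as the paper: you use empirical Cauchy--Schwarz, the diagonal of $\Delta_n^0$ together with the bounds $c_\nu\le\nu_h\le C_\nu$ from (U1) to control $\hat\nu_h/\nu_h$ uniformly, and then a perturbation bound on the standardized cross-products. The only difference is bookkeeping. You compare $\hat u_h$ with the empirically centered $\bar u_h$, which gives $\|\hat u_h-\bar u_h\|_n\le\Delta_{\mathrm{score},n}$ and isolates the $m_n^2$ centering term explicitly, and you work with norms rather than squared norms; the paper instead compares $\hat u_h$ directly with $u_h$ via the cruder bound $2\Delta_{\mathrm{score},n}+m_n$.
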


\begin{proof}
Let $d_{i,h}:=\hat s_{i,h}-u_{i,h}$. Since
$\hat u_{i,h}=\hat s_{i,h}-\mathbb P_n\hat s_h$,
\[
\hat u_{i,h}-u_{i,h}
=d_{i,h}-\mathbb P_nd_h-\mathbb P_nu_h.
\]
The empirical Cauchy--Schwarz inequality therefore gives
\begin{equation}\label{eq:supp-centered-score-distance}
\max_{h\in\cH_n}\|\hat u_h-u_h\|_n
\le 2\Delta_{\mathrm{score},n}+m_n=:a_{\mathrm{score},n}.
\end{equation}
By the diagonal part of $\Delta_n^0$ and the uniform bounds on $\nu_h$ in (U1),
\[
\max_h|\mathbb P_nu_h^2-\nu_h^2|
\le C\Delta_n^0,
\qquad
\max_h\|u_h\|_n=O_p(1).
\]
Consequently,
\begin{align*}
\max_h|\hat\nu_h^2-\nu_h^2|
&\le
\max_h\|\hat u_h-u_h\|_n
\max_h\{\|\hat u_h-u_h\|_n+2\|u_h\|_n\}
+C\Delta_n^0\\
&=O_p(a_{\mathrm{score},n}+\Delta_n^0)=o_p(\ell_n^{-2}).
\end{align*}
The lower bound on $\nu_h$ now implies
$\PP(\min_h\hat\nu_h>0)\to1$ and
\[
\ell_n\max_h\left|\frac{\hat\nu_h}{\nu_h}-1\right|=o_p(1),
\]
which is Assumption~\ref{as:studentization}.

On the event of positive estimated scales, set
$\hat\xi_{i,h}:=\hat u_{i,h}/\hat\nu_h$. The preceding bounds and
\eqref{eq:supp-centered-score-distance} yield
\[
\max_h\|\hat\xi_h-\xi_h\|_n
=O_p(a_{\mathrm{score},n}+\Delta_n^0)=o_p(\ell_n^{-2}),
\qquad
\max_h(\|\hat\xi_h\|_n+\|\xi_h\|_n)=O_p(1).
\]
For every $h,h'$, another application of empirical Cauchy--Schwarz gives
\begin{align*}
|\mathbb P_n(\hat\xi_h\hat\xi_{h'})-
  \mathbb P_n(\xi_h\xi_{h'})|
&\le
\|\hat\xi_h-\xi_h\|_n\|\hat\xi_{h'}\|_n
+\|\xi_h\|_n\|\hat\xi_{h'}-\xi_{h'}\|_n.
\end{align*}
Taking the maximum and adding $\Delta_n^0$ proves
$\ell_n^2\Delta_{\Sigma,n}=o_p(1)$, which is
Assumption~\ref{as:score-cov}.
\end{proof}

\begin{lemma}[A population-$L_2(P_X)$ route to the uniform baseline budget]
\label{lem:uniform-baseline-population}
Suppose the fold and cross-fitting conditions in Assumption~\ref{as:unif-rem}(U1) hold. Define
\[
\Delta_{g,0,n}^2
:=\max_{1\le\ell\le L}\mathbb P_{n,\ell}\Delta_{g,\ell}^2,
\qquad
H_{g,n}:=\max_{1\le\ell\le L}\max_{i\in I_\ell}|\Delta_{g,\ell}(X_i)|.
\]
If
\[
\ell_n^2\max_{1\le\ell\le L}\|\Delta_{g,\ell}\|_{L_2(P_X)}=o_p(1),
\qquad
\max_{1\le\ell\le L}\|\Delta_{g,\ell}\|_\infty=O_p(1),
\]
then $\ell_n^2\Delta_{g,0,n}=o_p(1)$ and $H_{g,n}=O_p(1)$; hence condition (R3) below
holds.
\end{lemma}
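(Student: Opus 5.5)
The claim $H_{g,n}=O_p(1)$ is immediate: for $i\in I_\ell$, $|\Delta_{g,\ell}(X_i)|\le\|\Delta_{g,\ell}\|_\infty$, so $H_{g,n}\le\max_\ell\|\Delta_{g,\ell}\|_\infty=O_p(1)$. The substantive part is converting the population $L_2(P_X)$ rate into the empirical foldwise rate $\ell_n^2\Delta_{g,0,n}=o_p(1)$. I will exploit cross-fitting exactly as in Lemma~\ref{lem:global}: conditional on $\mathcal F_{-\ell}$ (and the fold partition, which is deterministic or independent of the data by (U1)), the function $\Delta_{g,\ell}$ is fixed. The validation covariates $\{X_i:i\in I_\ell\}$ are i.i.d.\ $P_X$ and independent of $\mathcal F_{-\ell}$. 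Hence
\[
\E\bigl(\mathbb P_{n,\ell}\Delta_{g,\ell}^2\mid\mathcal F_{-\ell}\bigr)=\|\Delta_{g,\ell}\|_{L_2(P_X)}^2 .
\]

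Fix $\epsilon>0$ and set $a_n:=\ell_n^{-4}$. The goal is to show $\PP(\mathbb P_{n,\ell}\Delta_{g,\ell}^2>\epsilon a_n)\to0$ for each $\ell$. Since $L$ is fixed, taking a maximum over folds then gives $\Delta_{g,0,n}^2=o_p(\ell_n^{-4})$, that is, $\ell_n^2\Delta_{g,0,n}=o_p(1)$. I will use a conditional Markov inequality combined with truncation, as in the proof of Theorem~\ref{thm:clt}. For any $\eta>0$,
\[
\PP\bigl(\mathbb P_{n,\ell}\Delta_{g,\ell}^2>\epsilon a_n\bigr)
\le\PP\bigl(\|\Delta_{g,\ell}\|_{L_2(P_X)}^2>\eta\epsilon a_n\bigr)+\eta .
\]
The first term tends to zero because $\ell_n^4\|\Delta_{g,\ell}\|^2_{L_2(P_X)}=o_p(1)$ by hypothesis. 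Letting $\eta\downarrow0$ then finishes this step. Note that only the $L_2$ hypothesis is used here; the sup-norm bound serves the envelope term.

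The one point to handle carefully is the measurability of the conditional expectation. The population norm $\|\Delta_{g,\ell}\|_{L_2(P_X)}$ is an $\mathcal F_{-\ell}$-measurable random variable because $\hat g^{(-\ell)}$ is $\mathcal F_{-\ell}$-measurable and $g$ is nonrandom. The conditional Markov step is therefore legitimate; no uniform empirical-process argument is needed.

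Finally, for "hence (R3) holds," I expect (R3) to be the baseline part of the bounded-envelope route, namely
\[
\max_\ell\{\mathbb P_{n,\ell}\Delta_{g,\ell}^2\}^{1/2}=o_p(\ell_n^{-2}),\qquad \max_{\ell}\max_{i\in I_\ell}|\Delta_{g,\ell}(X_i)|=O_p(1),
\]
which are exactly the two conclusions just obtained. If instead (R3) is phrased through \eqref{eq:uniform-baseline-budget}, I would also need a uniform bound $\max_{i,j}|\hat V_{ij}|=O_p(1)$ from the bounded-envelope assumptions. With that bound, $\Delta_{g,V,n}\lesssim\Delta_{g,0,n}$ and $B_{g,V,n}\lesssim H_{g,n}$. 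The second budget then only needs $\ell_n^{3/2}/\sqrt n\to0$, which follows from $\ell_n^7/n\to0$. The main obstacle is therefore this matching to the precise form of (R3), not the probabilistic argument itself.
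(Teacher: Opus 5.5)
Your proposal is correct and follows the paper's argument. You condition on the training sample and fold partition so that $\Delta_{g,\ell}$ is fixed, apply conditional Markov to $\mathbb P_{n,\ell}\Delta_{g,\ell}^2$ (whose conditional mean is $\|\Delta_{g,\ell}\|_{L_2(P_X)}^2$), take a union over the fixed number of folds, and bound $H_{g,n}$ by the sup norm, exactly as the paper does. The only difference is cosmetic: the paper passes to the unconditional statement by bounded convergence of $\min\{1,\ell_n^4\|\Delta_{g,\ell}\|_{L_2(P_X)}^2/\epsilon^2\}$, whereas you use the split-event inequality. Your first guess for (R3) is exactly the paper's formulation, so the detour through $\max_{i,j}|\hat V_{ij}|$ is not needed for this lemma.
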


\begin{proof}
Write $b_{\ell,n}:=\|\Delta_{g,\ell}\|_{L_2(P_X)}$. Conditional on the training sample for fold
$\ell$, the validation covariates are i.i.d.\ from $P_X$ and $\Delta_{g,\ell}$ is fixed. Hence, for
every $\epsilon>0$, conditional Markov's inequality gives
\[
\PP\!\left(
\ell_n^2\{\mathbb P_{n,\ell}\Delta_{g,\ell}^2\}^{1/2}>\epsilon
\,\middle|\,\mathcal F_{-\ell},I_1,\ldots,I_L
\right)
\le
\min\!\left\{1,\frac{\ell_n^4b_{\ell,n}^2}{\epsilon^2}\right\}.
\]
The right-hand side converges to zero in probability by the first displayed assumption and is
bounded by one, so its expectation also converges to zero. A union bound over the fixed number
of folds proves $\ell_n^2\Delta_{g,0,n}=o_p(1)$. Moreover,
$H_{g,n}\le\max_\ell\|\Delta_{g,\ell}\|_\infty=O_p(1)$, proving the second assertion and (R3).
\end{proof}

\begin{corollary}[A bounded-envelope primitive route to the simultaneous theorem]
\label{cor:primitive-uniform}
Assume Assumption~\ref{as:rct}, fixed folds as in Assumption~\ref{as:reg}(P1), and the following
conditions. All constants below are independent of $n,p,j,$ and $S$, but may depend on the
fixed factorial design.
\begin{enumerate}[leftmargin=1.8em,label=(R\arabic*)]
\item $\E Z=0$, $\E Z_k^2=1$,
$c_\Sigma\le\lambda_{\min}(\Sigma_Z)\le\lambda_{\max}(\Sigma_Z)\le C_\Sigma$, and
\[
\max_{k\le p}|Z_k|\le C_Z,
\qquad
\max_{j\le p}|V_j|\le C_V,
\qquad
\max_{S\in\cS}|\varepsilon_S|\le C_\varepsilon
\quad\text{almost surely}.
\]
The population normal equations
$\E(\varepsilon_S)=0$, $\E(Z\varepsilon_S)=0$, and $\E(Z_{-j}V_j)=0$ hold for every
$j,S$. Moreover, $c_\nu\le\min_{j,S}\nu_{jS}\le\max_{j,S}\nu_{jS}\le C_\nu$.
The fixed factorial design and positivity imply
$\max_{S\in\cS,a\in\cA}|q_S(a)|<\infty$ automatically.

\item With
$s_\theta=\max_S\|\theta_S^\ast\|_0$ and
$s_\gamma=\max_j\|\gamma_j\|_0$, use
\[
\lambda=A_\theta\sqrt{\frac{\ell_n}{n}},
\qquad
\lambda_j=A_\gamma\sqrt{\frac{\ell_p}{n}}
\quad(j\in[p]),
\]
for sufficiently large fixed constants $A_\theta,A_\gamma$. More generally, the conclusion is
unchanged when all nodewise penalties are bounded above and below by fixed positive multiples
of $\sqrt{\ell_p/n}$ and the lower multiple is sufficiently large.

\item Define
\[
\Delta_{g,0,n}:=
\max_{1\le\ell\le L}\{\mathbb P_{n,\ell}\Delta_{g,\ell}^2\}^{1/2},
\qquad
H_{g,n}:=\max_{1\le\ell\le L}\max_{i\in I_\ell}|\Delta_{g,\ell}(X_i)|.
\]
Then $\ell_n^2\Delta_{g,0,n}=o_p(1)$ and $H_{g,n}=O_p(1)$.

\item The growth conditions
\begin{equation}\label{eq:primitive-uniform-growth-a}
\frac{\ell_n^7}{n}\to0,
\qquad
s_\theta\ell_n\sqrt{\frac{\ell_p}{n}}\to0,
\qquad
s_\gamma\ell_n\sqrt{\frac{\ell_p}{n}}\to0,
\end{equation}
and
\begin{equation}\label{eq:primitive-uniform-growth-b}
\ell_n^2\left\{
\sqrt{\frac{s_\theta\ell_n}{n}}
+\sqrt{\frac{s_\gamma\ell_p}{n}}
\right\}\to0
\end{equation}
hold.
\end{enumerate}
Then Assumptions~\ref{as:unif-rem}, \ref{as:hd-array}, \ref{as:studentization}, and
\ref{as:score-cov} hold, and all
conclusions of Theorems~\ref{thm:uniform-gaussian}--\ref{thm:bootstrap-band} and
Corollary~\ref{cor:strong-fwer} follow.
\end{corollary}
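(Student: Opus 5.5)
The plan is to verify the four high-level assumptions one at a time: Assumption~\ref{as:unif-rem}, then \ref{as:hd-array}, and finally Assumptions~\ref{as:studentization} and \ref{as:score-cov} together through Lemma~\ref{lem:score-cov-sufficient}. Once all four hold, Theorems~\ref{thm:uniform-gaussian}--\ref{thm:bootstrap-band} and Corollary~\ref{cor:strong-fwer} apply directly. The working tools are Hoeffding's inequality with union bounds over at most $d_n^2$ bounded summands, and the foldwise conditional Bernstein argument of Lemma~\ref{lem:baseline-lasso-score}.

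\emph{Concentration rates and (U1).} (U1) follows from (R1). The eigenvalue bounds give uniqueness of the projections. They also give $\tau_j^2\ge c_\Sigma$ (because $\tau_j^{-2}=(\Sigma_Z^{-1})_{jj}$) and $\tau_j^2\le \E Z_j^2=1$. For (U2), every relevant product ($Z_k\varepsilon_S$, $V_j$, $\varepsilon_S$, $Z_j$, $V_jZ_j$, $Z_jZ_k$) is bounded by (R1). Hoeffding's inequality with a union bound over at most $d_n^2$ indices then gives
\[
r_{\varepsilon,n}=r_{V,n}=r_{\bar\varepsilon,n}=r_{Z,n}=r_{\tau,n}=\sqrt{\ell_n/n},\qquad b_{Z,n}=1,
\]
and also $\|\hat\Sigma_Z-\Sigma_Z\|_{\max}=O_p(\sqrt{\ell_p/n})$.

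\emph{Uniform Lasso rates.} Since $s\,\|\hat\Sigma_Z-\Sigma_Z\|_{\max}\to0$ for $s=s_\theta,s_\gamma$ by (R4), the population eigenvalue bound transfers to a uniform compatibility constant on all $s$-sparse cones. The deterministic Lasso oracle inequality then yields, uniformly over $S$ and $j$, both the $\ell_1$ and the prediction rates,
\[
r_{\theta,n}=s_\theta\sqrt{\ell_n/n},\qquad \max_S\|\tilde Z\delta_{\theta,S}\|_n=O_p\bigl(\sqrt{s_\theta\ell_n/n}\bigr),
\]
\[
r_{\gamma,n}=s_\gamma\sqrt{\ell_p/n},\qquad \max_j\|\tilde Z_{-j}\delta_{\gamma,j}\|_n=O_p\bigl(\sqrt{s_\gamma\ell_p/n}\bigr).
\]
This is valid provided the score-Lasso noise $\max_S\|\mathbb P_n\{\tilde Z(\tilde\psi_S-\tilde Z^\top\theta_S^\ast)\}\|_\infty$ stays below $\lambda/2$. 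Its baseline part is $\max_S\|\mathbb P_n(\tilde ZR_S)\|_\infty$. I would bound it by repeating Lemma~\ref{lem:baseline-lasso-score} with a union bound over the fixed $M$ contrasts. Bounded $Z$ gives $\Delta_{g,Z,n}\lesssim\Delta_{g,0,n}=o_p(1)$ and $B_{g,Z,n}\lesssim H_{g,n}=O_p(1)$, so the baseline part is $o_p(\sqrt{\ell_n/n})$.

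\emph{Rate budget and (U4).} Substituting into \eqref{eq:uniform-rate-budget}, the three terms become $s_\theta\ell_n\sqrt{\ell_p/n}$, $s_\gamma\sqrt{\ell_p}\,\ell_n/\sqrt n$ and $\ell_n^{3/2}/\sqrt n$. All three vanish by \eqref{eq:primitive-uniform-growth-a}. Condition \eqref{eq:uniform-denominator-budget} is immediate. For (U4), the identity \eqref{eq:vhat-expansion} gives
\[
|\hat V_{ij}|\le C_V+|\bar V_j|+2C_Z\|\delta_{\gamma,j}\|_1=O_p(1)
\]
uniformly in $(i,j)$. Hence $\Delta_{g,V,n}\lesssim\Delta_{g,0,n}$ and $B_{g,V,n}\lesssim H_{g,n}$, and \eqref{eq:uniform-baseline-budget} follows from (R3) together with $\ell_n^{3}/n\to0$.

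\emph{The influence array.} Assumption~\ref{as:hd-array} holds with a constant $B_n$, because $|\xi_{i,h}|\le C_VC_\varepsilon/c_\nu$.

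\emph{Studentization and correlation.} For Lemma~\ref{lem:score-cov-sufficient}, the terms $m_n$ and $\Delta_n^0$ are $O_p(\sqrt{\ell_n/n})$ by Hoeffding, and $\ell_n^{5/2}/\sqrt n\to0$. The remaining term is
\[
\hat s_h-u_h=(\hat V_j-V_j)\varepsilon_S+\hat V_jD_S,
\]
with $D_S$ as in \eqref{eq:variance-residual-decomposition}. Using the sup-norm bounds on $\varepsilon_S$ and $\hat V_j$:
\begin{itemize}[leftmargin=1.8em]
\item $\|\hat V_j-V_j\|_n\le|\bar V_j|+\|\tilde Z_{-j}\delta_{\gamma,j}\|_n$;
\item $\|D_S\|_n\lesssim|\bar\varepsilon_S|+\Delta_{g,0,n}+\|\tilde Z\delta_{\theta,S}\|_n$.
\end{itemize}
So $\Delta_{\mathrm{score},n}=O_p\bigl(\sqrt{s_\theta\ell_n/n}+\sqrt{s_\gamma\ell_p/n}+\Delta_{g,0,n}\bigr)$. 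Multiplying by $\ell_n^2$, this vanishes by \eqref{eq:primitive-uniform-growth-b} and (R3).

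\emph{Main obstacle.} I expect the hardest step to be the uniform Lasso analysis. It requires compatibility simultaneously for all $p$ nodewise problems and all $M$ score problems, and it requires the uniform prediction-norm rates with a response perturbed by the cross-fitted baseline. The empirical-Bernstein control of $\mathbb P_n(\tilde ZR_S)$ must be combined with the sample-compatibility transfer on a single high-probability event. I would handle this by intersecting the finitely many concentration events and applying the oracle inequality deterministically on that event.
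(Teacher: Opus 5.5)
Your proposal is correct and follows the paper's proof essentially step by step: uniform score-Lasso and nodewise-Lasso rates under the bounded design, foldwise conditional-Bernstein control of $\mathbb P_n(\tilde Z R_S)$ as in Lemma~\ref{lem:baseline-lasso-score}, a uniform $O_p(1)$ bound on $\max_{i,j}|\hat V_{ij}|$ that converts (R3) into (U4), a constant $B_n$ for the influence array, and Lemma~\ref{lem:score-cov-sufficient} for studentization and correlation. Your coarser $\sqrt{\ell_n/n}$ rates for the simple averages (the paper uses $\sqrt{\ell_p/n}$ and $n^{-1/2}$) still fit the budgets under (R4). Your explicit compatibility transfer and your $\ell_1$-based derivation of the sup-norm bound on $\hat V_{ij}$ only fill in details the paper leaves implicit.
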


\begin{proof}
We verify the four high-level assumptions in turn. Under (R1), bounded-variable Bernstein
inequalities and union bounds, together with the population eigenvalue bounds, give the usual
uniform Lasso and nodewise-Lasso rates
\[
\max_S\|\hat\theta_S-\theta_S^\ast\|_1
=O_p\!\left(s_\theta\sqrt{\frac{\ell_n}{n}}\right),
\qquad
\max_j\|\hat\gamma_j-\gamma_j\|_1
=O_p\!\left(s_\gamma\sqrt{\frac{\ell_p}{n}}\right).
\]
The baseline contribution to the first score is controlled conditionally fold by fold exactly
as in Lemma~\ref{lem:baseline-lasso-score}; (R3) makes it negligible relative to
$\sqrt{\ell_n/n}$. The remaining empirical quantities in (U2) may therefore be taken as
\[
\begin{gathered}
r_{\theta,n}=s_\theta\sqrt{\ell_n/n},
\qquad
r_{\gamma,n}=s_\gamma\sqrt{\ell_p/n},\\
r_{\varepsilon,n}=r_{V,n}=r_{Z,n}=r_{\tau,n}=\sqrt{\ell_p/n},
\qquad r_{\bar\varepsilon,n}=n^{-1/2},
\qquad b_{Z,n}=1,\\
\lambda_{\gamma,n}\asymp\sqrt{\ell_p/n}.
\end{gathered}
\]
Substitution shows that (R4) implies both
\eqref{eq:uniform-rate-budget} and \eqref{eq:uniform-denominator-budget}.

The nodewise prediction bound and (R1)--(R4) imply
$\max_{i,j}|\hat V_{ij}|=O_p(1)$. Hence
\[
\Delta_{g,V,n}=O_p(\Delta_{g,0,n}),
\qquad
B_{g,V,n}=O_p(H_{g,n}).
\]
Condition (R3), together with $\ell_n^7/n\to0$, now gives both bounds in
\eqref{eq:uniform-baseline-budget}. This verifies Assumption~\ref{as:unif-rem}.

Since $V_j$, $\varepsilon_S$, and $\nu_{jS}^{-1}$ are uniformly bounded under (R1), every
standardized influence coordinate $\xi_{i,h}$ is uniformly bounded. Thus
Assumption~\ref{as:hd-array} holds with a fixed $B_n$, and its growth condition is precisely
the first condition in \eqref{eq:primitive-uniform-growth-a}.

It remains to verify studentization and correlation estimation. The standard Lasso prediction
bounds, the nodewise prediction bounds, and (R3) yield
\[
\Delta_{\mathrm{score},n}
=O_p\!\left(
\sqrt{\frac{s_\theta\ell_n}{n}}
+\sqrt{\frac{s_\gamma\ell_p}{n}}
+\Delta_{g,0,n}
\right).
\]
Uniform boundedness and Bernstein's inequality also give
\[
m_n=O_p\!\left(\sqrt{\frac{\ell_n}{n}}\right),
\qquad
\Delta_n^0=O_p\!\left(\sqrt{\frac{\ell_n}{n}}\right).
\]
Equations~\eqref{eq:primitive-uniform-growth-a}--
\eqref{eq:primitive-uniform-growth-b} and (R3) therefore imply
$\ell_n^2(\Delta_{\mathrm{score},n}+m_n+\Delta_n^0)=o_p(1)$. Lemma~\ref{lem:score-cov-sufficient} proves
Assumptions~\ref{as:studentization} and \ref{as:score-cov}, completing the verification.
The stated conclusions then follow directly from
Theorems~\ref{thm:uniform-gaussian}--\ref{thm:bootstrap-band} and
Corollary~\ref{cor:strong-fwer}.
\end{proof}

\begin{remark}[Cost of the bounded-envelope route]\label{rem:envelope-cost}
Condition (R1) is not innocuous. Since
$\varepsilon_S=\{Y-g(X)\}\,q_S(A)-\alpha_S^\ast-Z^\top\theta_S^\ast$ and
$|Z^\top\theta_S^\ast|\le C_Z\|\theta_S^\ast\|_1$ almost surely, the uniform almost-sure bound
$\max_{S\in\cS}|\varepsilon_S|\le C_\varepsilon$ (with $C_\varepsilon$ independent of $n,p$)
effectively requires (i) $|Y-g(X)|$ bounded almost surely, and (ii)
$\sup_n\max_{S\in\cS}\{|\alpha_S^\ast|+\|\theta_S^\ast\|_1\}<\infty$. Similarly,
$|V_j|=|Z_j-Z_{-j}^\top\gamma_j|\le C_Z(1+\|\gamma_j\|_1)$ shows that
$\max_j|V_j|\le C_V$ is implied by (iii) $\sup_n\max_j\|\gamma_j\|_1<\infty$, which absent
further structure is also the natural checkable sufficient condition. Sparsity alone does not
deliver (ii)--(iii): the eigenvalue bound gives only $\|\gamma_j\|_2\le c_\Sigma^{-1/2}$, hence
$\|\gamma_j\|_1\le\sqrt{s_j}\,c_\Sigma^{-1/2}$, which stays bounded only if $s_j=O(1)$ or the
coefficients decay summably. When (i)--(iii) fail, Corollary~\ref{cor:primitive-uniform} is not
available and Assumptions~\ref{as:unif-rem}, \ref{as:hd-array},
\ref{as:studentization}, and \ref{as:score-cov} must be verified directly. We do not develop a
sub-Gaussian (unbounded-envelope) primitive route in
this paper: under sub-Gaussian tails the CCK moment bound has $B_n$ growing logarithmically,
which tightens the dimensional-growth and studentization requirements without changing the
argument's structure.
\end{remark}

\end{document}